\tikzset{snake it/.style={decorate, decoration=snake}}
\newcommand{\OO}{\mathcal{O}}
\newcommand{\Aa}{\mathcal{A}}
\newcommand{\VV}{\mathcal{V}}
\newcommand{\Start}{\mathcal{S}}
\newcommand{\ACZ}{\textnormal{\textsf{AC$^0$}}\xspace}
\newcommand{\NCT}{\textnormal{\textsf{NC$^2$}}\xspace}
\newcommand{\NP}{\textnormal{\textsf{NP}}\xspace}
\newcommand{\NC}{\textnormal{\textsf{NC}}\xspace}
\newcommand{\PSPACE}{\textnormal{\textsf{PSPACE}}\xspace}
\newcommand{\NL}
{\textnormal{\textsf{NL}}\xspace}
\newcommand{\coNL}
{\textnormal{\textsf{coNL}}\xspace}
\newcommand{\PTIME}{\textnormal{\textsf{P}}\xspace}
\newcommand{\rkr}{\mathsf{rank}_{\mathbb{R}}}
\newcommand{\rku}{\mathsf{rank}_{\mathsf{un}}}
\newcommand{\rkg}{\mathsf{rank}}
\newcommand{\R}{\mathbb{R}}
\newcommand{\RP}{\mathbb{R}_{> 0}}
\newcommand{\spn}[1]{\mathord{\langle #1 \rangle}}
\newcommand{\Mer}{\mathsf{Mer}}
\newcommand{\mcw}{\mathsf{mcw}}
\newcommand{\mrw}{\mathsf{mrw}}
\newcommand{\MCol}{\mathsf{MCol}}
\newcommand{\MRow}{\mathsf{MRow}}
\tikzset{AUT style/.style={>=angle 60,every edge/.append style={},every state/.style={minimum size=16,inner sep=0}}}
\begin{document}

\title[The minimum rank of unambiguous finite automata]{Spectral and combinatorial methods for efficiently computing the rank of unambiguous finite automata}

\titlecomment{{\lsuper*}A preliminary version of this paper appeared at STACS 2025.}
\thanks{We thank the anonymous reviewers of the preliminary version for their helpful comments that improved the presentation of the paper. Andrew Ryzhikov is supported by Polish National Science Centre SONATA BIS-12 grant
number 2022/46/E/ST6/00230.}

\author[S.~Kiefer]{Stefan Kiefer\lmcsorcid{0000-0003-4173-6877}}[a]

\author[A.~Ryzhikov]{Andrew Ryzhikov\lmcsorcid{0000-0002-2031-2488}}[b]

\address{Department of Computer Science, University of Oxford, UK}
\email{stefan.kiefer@cs.ox.ac.uk}  

\address{University of Warsaw, Warsaw, Poland}	
\email{ryzhikov.andrew@gmail.com} 

\keywords{matrix monoids, minimum rank, unambiguous automata} 

\begin{abstract}
A zero-one matrix is a matrix with entries from $\{0, 1\}$. We study monoids containing only such matrices. A finite set of zero-one matrices generating such a monoid can be seen as the matrix representation of an unambiguous finite automaton, an important generalisation of deterministic finite automata which shares many of their good properties.

Let $\Aa$ be a finite set of $n \times n$ zero-one matrices generating a monoid of zero-one matrices, and $m$ be the cardinality of $\Aa$.
We study the computational complexity of computing the minimum rank of a matrix in the monoid generated by $\Aa$. By using linear-algebraic techniques, we show that this problem is in~\NC and can be solved in $\OO(mn^4)$ time and $\OO(n^2)$ space. We also provide a combinatorial algorithm finding a matrix of minimum rank in $\OO(mn^4)$ time and $\OO(n^3)$ space. As a byproduct, we show a very weak version of a generalisation of the \v{C}ern\'{y} conjecture: there always exists a straight line program of size $\OO(n^2)$ describing a product resulting in a matrix of minimum rank.

For the special case corresponding to total DFAs (that is, for the case where all matrices have exactly one 1 in each row), the minimum rank is the size of the smallest image of the set of all states under the action of a word. Our combinatorial algorithm finds a matrix of minimum rank in time $\OO(n^3 + mn^2)$ in this case.
\end{abstract}

\maketitle

\section{Introduction}\label{sec-intro}
Matrix monoids are a rich and versatile object naturally appearing in formal verification, program analysis, dynamical systems and weighted automata. 
However, many of their properties are in general undecidable. One such example is the well-studied matrix mortality problem. Given a finite set~$\Aa$ of $n \times n$ matrices, it asks if the monoid generated by $\Aa$ (that is, the set of all products of matrices from $\Aa$) contains the zero matrix. This problem is undecidable already for $3 \times 3$ integer matrices~\cite{Paterson1970}, and was studied for several decidable special cases, see e.g.~\cite{Cassaigne2014,Bell2021,Ryzhikov2024RP}.

Even if $\Aa$ is a set of two zero-one matrices (that is, matrices with entries in~$\{0, 1\}$), matrix mortality is \PSPACE-complete~\cite{Ryzhikov2024RP}. Thus, to make it tractable, one has to further restrict the problem. In this paper, we consider the case where the whole monoid generated by $\Aa$ consists of zero-one matrices;  matrix mortality then becomes decidable in polynomial time~\cite{Kiefer2021}. We call such monoids \emph{zero-one matrix monoids}. Intuitively, when multiplying any two matrices from such a monoid, we never get $1 + 1$ as a subexpression. Zero-one matrix monoids have a rich structure while still admitting good algorithmic properties. They correspond precisely to unambiguous finite automata, and find applications in formal verification \cite{Baier2023}, variable-length codes \cite{Berstel2010} and symbolic dynamics \cite{Lind2021}.
They are also an interesting special case of finite monoids of rational matrices (studied in, e.g., \cite{Mandel1977,Jacob1977,Almeida2009,Bumpus2020}), monoids of nonnegative matrices (studied in, e.g., \cite{Protasov2021,Blondel2015,Wu2023,Gerencser2018}), and, in the case where they do not contain the zero matrix, of matrix monoids with constant spectral radius~\cite{Protasov2017}.

In this paper, we consider a problem that can be seen as a natural generalisation of matrix mortality: given a finite set $\Aa$ generating a zero-one matrix monoid, find the minimum real rank of a matrix in this monoid. By the real rank of a matrix we mean the dimension of the subspace generated by its columns over the reals. Clearly, this rank is zero if and only if the monoid contains the zero matrix. 
The minimum real rank of a matrix in a zero-one matrix monoid is a much more tractable problem than deciding other similar properties: for example,  checking if a zero-one matrix monoid contains a matrix of a given real rank was shown to be \NP-hard\footnote{In fact, it is \PSPACE-complete, which follows directly from \cite[Theorem~3]{Berlinkov2014}: add a fresh state and define all yet undefined transitions to lead to this state.} \cite{GORALCIK1995}, and checking if it contains a given matrix is a classical \PSPACE-complete problem~\cite{Kozen1977}.

An important motivation for considering the minimum rank in a zero-one matrix monoid comes from a probabilistic perspective. Let $\Aa$ be a finite set of matrices, and consider a product of $\ell$ matrices where at each position a matrix from $\Aa$ is chosen uniformly at random. It is easy to see that when $\ell$ tends to infinity, the probability that the rank of this product is equal to the minimum rank of matrices from the monoid generated by $\Aa$ converges to one. Thus, this minimum rank characterises the most likely eventual behaviour of a linear dynamical system corresponding to $\Aa$.

The goals of our paper are as follows.
\begin{itemize}
    \item We present efficient algorithms for analysing monoids of zero-one matrices and unambiguous finite automata.

    \item To obtain these algorithms, we provide new structural and algebraic properties of such monoids and automata that might be interesting on their own.

    \item In particular, we provide an algebraic obstacle for an unambiguous finite automaton to be synchronising, similar to the known result for deterministic finite automata. 

    \item We thus strengthen the connections between the areas of synchronising automata, weighted automata and matrix semigroups by transferring methods and tools between them.

    \item Finally, we highlight open problems in the intersection of these areas.
\end{itemize}

\section{Existing results and our contributions}\label{sec-existing}

Throughout the paper, we always assume that matrix monoids are defined by sets of generators, and 
all matrices are square zero-one unless stated otherwise.

\subsection{Total DFAs}

An $n \times n$ zero-one matrix with exactly one $1$ in every row can be equivalently seen as a transformation of a set $Q$ of size $n$. A set of such matrices generates a zero-one matrix monoid, and can be seen as a total deterministic finite (semi-)automaton\footnote{\label{footnote-semi}In this paper, all automata are semi-automata, meaning that they do not have any initial or accepting states, and thus do not recognise any languages. Following the usual conventions (as in, e.g.,~\cite{Berstel2010}), we omit ``semi-'', in particular because it would make abbreviations like DFA less recognisable. Total DFAs are often called complete, but we prefer the
term “total” both to avoid the clash of terminology with complete UFAs and because of the obvious connection
with total functions.} (total DFA) $\Aa = (Q, \Sigma, \delta)$. Here, $\Sigma$ is a finite alphabet whose letters correspond to the generating matrices, and $\delta: Q \times \Sigma \to Q$ is the transition function defined in such a way that for each $a \in \Sigma$, $\delta(\underline{\hspace{0.25cm}}, a)$ is the transformation of $Q$ induced in the natural way by the matrix corresponding to $a$. Thus, words over $\Sigma$ correspond to products of the generating matrices.

The \emph{rank} of a word $w$ in $\Aa$ is the size of the image of $Q$ under the transformation corresponding to $w$. Equivalently, it is the real rank of the matrix corresponding to $w$.
The \emph{rank} of a total DFA is the minimum among the ranks of all its words. This concept was studied from the perspectives of automata theory \cite{Rystsov1992Rank,Kari2019} and the theory of transformation semigroups \cite{Shin2010Rank,Karimi2017MinIdeal}. It is the subject of the rank conjecture (called the \v{C}ern\'{y}-Pin conjecture in \cite{Rystsov1992Rank}), which states that every total DFA of rank $r$ admits a word of rank $r$ having length at most $(n - r)^2$. The \v{C}ern\'{y} conjecture, one of the oldest open problems in combinatorial automata theory \cite{Volkov2022}, is a special case with $r = 1$. We refer to surveys \cite{Volkov2008,Beal2016,Kari2019,Volkov2022} for the vast literature on the \v{C}ern\'{y} conjecture. Underlying digraphs of total DFAs of a given rank were studied in \cite{Budzban2011Generalized,Beal2016} in the context of the road colouring problem.

The rank of an $n$-state total DFA over an alphabet of size $m$ can be found in $\OO(m^4n^4)$ time \cite[Theorem 1]{Rystsov1992Rank}.
In contrast, for any fixed $r \ge 2$, the problem of checking if a total DFA admits a word of rank $r$ is \NP-hard~\cite{GORALCIK1995}. Checking if an $n$-state total DFA over an alphabet of size $m$ has rank one is \NL-complete \cite{Holzer2018,Volkov2022}, and can be done in $\OO(m n^2)$ time \cite{Eppstein1990,Volkov2008}. For each total DFA of rank $r$, there exists a word of rank $r$ of length at most $\frac{(n - r)^3}{6} + \OO((n - r)^2)$~\cite{Klyachko1987}, and if $r = 1$,  finding a word of rank one can be done in $\OO(n^3 + mn^2)$ time and $\OO(n^2)$ space~\cite{Eppstein1990}.

\subsection{Unambiguous finite automata}

Generalising the case of total DFAs, a set $\Aa$ of $n \times n$ zero-one matrices generating a zero-one matrix monoid can be equivalently seen as an unambiguous nondeterministic finite (semi-)automaton (UFA). Let $Q = \{q_1, \ldots, q_n\}$ be its set of states. To each matrix in $\Aa$ we again associate a letter in the alphabet $\Sigma$, and the transition relation $\Delta \subseteq Q \times \Sigma \times Q$ is defined so that $(q_i,a,q_j) \in \Delta$ if and only if the entry~$(i, j)$ in the matrix corresponding to $a$ is equal to one. Just as in the total DFA case, words over $\Sigma$ naturally correspond to products of matrices from $\Aa$.

The obtained NFA then has the property that is sometimes called \emph{diamond-free}: for every two states $p, q$ and every word~$w$, there is at most one path from $p$ to $q$ labelled by $w$. A simple reachability argument shows that the length of a shortest word labelling two such paths, if it exists, is at most quadratic in the dimension of the matrices. Hence, deciding whether an NFA is a UFA (and thus whether a set of zero-one matrices generates a zero-one monoid)  is in \coNL = \NL. It is actually \NL-complete as described in the next subsection.

A UFA is called complete if it does not admit a word whose matrix is the zero matrix. For an $n$-state UFA the length of such a word if it exists is at most~$n^5$~\cite{Kiefer2021}. 
The best known lower bound is quadratic in $n$, and is achieved by a series of DFAs~\cite{Rystsov1997}. 
For UFAs, the quadratic upper bound was conjectured to be tight~\cite[Conjecture 2]{Rystsov1992Rank}. Checking if a UFA is complete can be done in \NCT~\cite{Kiefer2021}. 

The \emph{real rank} of a UFA is the minimum among the real ranks of the matrices corresponding to words. It was shown in \cite{Carpi1988} that for an $n$-state UFA of real rank $r \ge 1$ there always exists a word of minimum rank of length $\OO(rn^3)$.
For $n$-state strongly connected Eulerian UFAs of rank one, a subclass with remarkably nice properties, there always exists a word of length at most $(n - 1)^2$ of rank one~\cite[Corollary~4]{Carpi2009}. All mentioned constructions also provide polynomial time algorithms that construct words with the required properties (in particular, with a length within the stated bounds).

\subsection{Applications to variable-length codes}\label{subs:var-length-codes}

A \emph{variable-length code} (or simply a \emph{code}) is a set $X$ of finite words over an alphabet $\Sigma$ such that every finite word over $\Sigma$ has at most one factorisation over $X$. In other words, a code is a basis of a free submonoid of $\Sigma^*$. 

The definitions of both UFAs and codes rely, intuitively, on the uniqueness of certain representations. In fact, UFAs and codes are tightly related. Let us illustrate this relationship.
If the cardinality of a code $X$ is finite, one can construct its flower automaton, which is a UFA with a chosen state $s$ such that, for each word from $X$, there is a separate cycle containing $s$ and labelled by this word, see \autoref{fig:flower} (left) for an example.
More generally, codes that are regular languages correspond precisely to strongly connected UFAs in a similar way, see \cite[Chapter 4]{Berstel2010} for the details.

\begin{figure}[h]
\centering
  \begin{subcaptiongroup}
    \centering
    \parbox[c]{0.35\textwidth}{%
    \centering
\begin{tikzpicture} [node distance = 2cm]
\tikzset{every state/.style={inner sep=1pt,minimum size=1.5em}}

\node [state] at (0, 0) (1) {$1$};
\node [state] at (-2, 0) (2) {$2$};
\node [state] at (-1, 1.5) (3) {$3$};
\node [state] at (1, 1.5) (4) {$4$};

\node [state] at (1.5, 0.75) (5) {$5$};
\node [state] at (1.5, -0.75) (6) {$6$};

\node [state] at (1, -1.5) (7) {$7$};
\node [state] at (0, -1.7) (8) {$8$};
\node [state] at (-1, -1.5) (9) {$9$};

\path [-stealth, thick]

(1) edge [bend left=30] node[above] {$a$} (2)
(2) edge [bend left=30] node[below] {$a$} (1)

(1) edge [bend left=10] node[right] {$a$} (3)
(3) edge [bend left=10] node[below] {$a$} (4)
(4) edge [bend left=10] node[left] {$b$} (1)

(1) edge [bend left=10] node[below] {$a$} (5)
(5) edge [bend left=10] node[left] {$b$} (6)
(6) edge [bend left=10] node[above] {$a$} (1)

(1) edge [bend left=10] node[left] {$a$} (7)
(7) edge [bend left=10] node[above] {$b$} (8)
(8) edge [bend left=10] node[above] {$a$} (9)
(9) edge [bend left=10] node[right] {$b$} (1)

;
\end{tikzpicture}
}
\hspace{1cm}
\parbox[c]{0.35\textwidth}{%
    \centering
\begin{tikzpicture} [node distance = 2cm]

\node [] at (0.5, 2) (t0) {\strut $\cdots$};
\node [] at (0.5, 2.375) (t0u) {};
\node [] at (0.5, 1.625) (t0l) {};
\node [] at (1, 2) (t1) {\strut $a$};
\node [] at (1.5, 2) (t2) {\strut $b$};
\node [] at (1.75, 2) (t23) {};
\node [] at (2, 2) (t3) {\strut $a$};
\node [] at (2.5, 2) (t4) {\strut $b$};
\node [] at (3, 2) (t5) {\strut $a$};
\node [] at (3.25, 2) (t56) {};
\node [] at (3.75, 2.375) (t6u) {};
\node [] at (3.75, 1.625) (t6l) {};
\node [] at (3.5, 2) (0) {\strut $\cdots$};

\node [] at (-0.5, 0) (0) {\strut $\cdots$};
\node [] at (-0.5, -0.375) (0l) {};
\node [] at (-0.5, 0.25) (0) {};
\node [] at (0, 0) (1) {\strut $b$};
\node [] at (0.25, 0) (12) {};
\node [] at (0.5, 0) (2) {\strut $a$};
\node [] at (0.75, 0) (23) {};
\node [] at (1, 0) (3) {\strut $b$};
\node [] at (1.25, 0) (34) {};
\node [] at (1.5, 0) (4) {\strut $a$};
\node [] at (1.75, 0) (45) {};
\node [] at (2, 0) (5) {\strut $b$};
\node [] at (2.25, 0) (56) {};
\node [] at (2.5, 0) (6) {\strut $a$};
\node [] at (2.75, 0) (67) {};
\node [] at (3, 0) (7) {\strut $a$};
\node [] at (3.25, 0) (78) {};
\node [] at (3.5, 0) (8) {\strut $b$};
\node [] at (3.75, 0) (89) {};
\node [] at (4, 0) (9) {\strut $a$};
\node [] at (4.75, 0.25) (10) {};
\node [] at (4.25, 0) (10l) {};
\node [] at (4.75, -0.25) (10ll) {};
\node [] at (4.5, 0) (11) {\strut $\cdots$};

\path [thick]

(t0u) edge [bend left=30] node[above] {} (t23)
(t23) edge [bend left=40] node[above] {} (t56)
(t56) edge [bend left=40] node[above] {} (t6u)

(t0l) edge [bend right=30] node[above] {} (t23)
(t23) edge [bend right=30] node[above] {} (t6l)


(0) edge [bend left=30] node[above] {} (12)
(12) edge [bend left=40] node[above] {} (56)
(56) edge [bend left=50] node[above] {} (89)
(89) edge [bend left=30] node[above] {} (10)

(0l) edge [bend right=30] node[above] {} (34)
(34) edge [bend right=50] node[above] {} (67)
(67) edge [bend right=50] node[above] {} (10l)
(10l) edge [bend right=40] node[above] {} (10ll)

;
\end{tikzpicture}
}
\end{subcaptiongroup}
\caption{The flower automaton of the code $X = \{aa, aab, aba, abab\}$ (left), two adjacent interpretations of $ababa$ over $X$ (top right), and two disjoint interpretations of $bababaaba$ over $X$ (bottom right). Note that this code is not complete, but still illustrates all the discussed properties.}\label{fig:flower}
\end{figure}
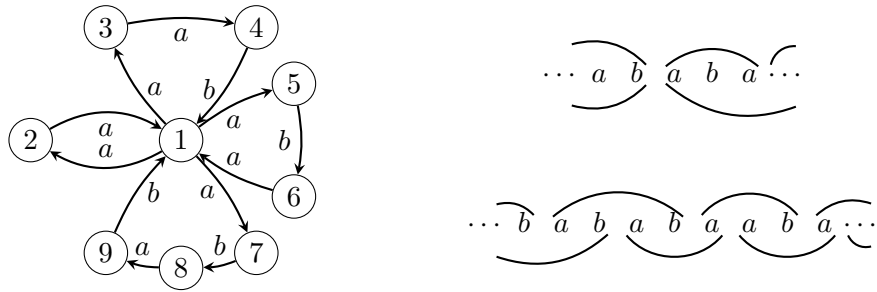

A useful application is the fact that deciding if a set of zero-one matrices generates a zero-one monoid is \NL-hard. Indeed, a finite set of words is a code if and only if its flower automaton is unambiguous \cite{Berstel2010}.  
Deciding if a finite set of words is a code is \NL-complete \cite{Rytter1986}, and the flower automaton can be constructed in~\ACZ.

A code $X$ over $\Sigma$ is called \emph{complete} if every word over $\Sigma$ is a factor of a concatenation of codewords, that is, for every word $w \in \Sigma^*$ there exist $u, v \in \Sigma^*$ with $uwv \in X^*$.
A code that is a regular language is complete if and only if the corresponding UFA is complete~\cite{Berstel2010}.  
For complete codes that are regular languages, the real rank of the corresponding UFA is equal to a natural and  important parameter called the degree of a code~\cite[Proposition~9.6.1]{Berstel2010}.

\subsection{The degree of a code}
Let us first explain the intuition behind the notion of degree. For each word $w$ we can consider all possible factorisations over $X$ of all its extensions $uwv \in X^*$ with $u, v \in \Sigma^*$, called \emph{interpretations} of~$w$. Two such interpretations either match in at least one position (as in \autoref{fig:flower} (top right) between the second and the third letter), or do not match in any position (as in \autoref{fig:flower} (bottom right)), in which case they are called \emph{disjoint}. The degree of a word is the number of pairwise disjoint interpretations of this word. The degree of a code
$X$ is the minimum nonzero degree of all words $w \in \Sigma^*$. 

Formally, an \emph{interpretation} of a word $w$ over a code $X$ is a triple $(d, x, g)$ such that $d$ is a suffix of a word from $X$, $x \in X^*$, and $g$ is a prefix of a word from $X$. Two interpretations $(d, x, g)$ and $(d', x', g')$ of $w$ are said to be \emph{adjacent} if there exist $y, z, y', z' \in X^*$ with 
$x = yz, x' = y'z', dy = d'y', zg = z'g'.$ Two interpretations are said to be \emph{disjoint} if they are not adjacent. See \autoref{fig:flower} for an example. The \emph{degree} of a word is the number of pairwise disjoint interpretations of this word. The \emph{degree} of a code $X$ is the minimum nonzero degree of all words $w \in \Sigma^*$. For codes that are regular languages, the degree is equal to the minimum nonzero rank of the corresponding UFA \cite[Proposition 9.6.1]{Berstel2010}. In particular, if there exists a word $w \in X^*$ of degree $1$ (called a synchronising word) for a code $X$, then any sequence $xwwy \in X^*$ of codewords with $x, y \in X^*$ is guaranteed to split into $xw, wy \in X^*$, thus allowing independent decoding of the two halves.

A particularly important case is when a complete code has degree one. Then
there exists a word $w \in X^*$ (called a synchronising word) such that for any concatenation of codewords $uwwv \in X^*$  with $u, v \in \Sigma^*$ we have $uw, wv \in X^*$. Intuitively, this means that the two halves $uw$ and $wv$ can be decoded separately and independently.

\subsection{Computational complexity classes} 
In this paper, we characterise the computational complexity of problems by showing that they belong to the classes $\NL \subseteq \NCT \subseteq \NC \subseteq \PTIME$, see \cite{Arora2009, Goldreich2008} for their formal definitions. \NL~is the class of problems solvable in nondeterministic logarithmic time. $\NC^k$ is the class of problems solvable by $\OO((\log n)^k)$-depth polynomial-size bounded fan-in Boolean circuits, and \NC is the union of these classes for all $k \ge 1$.
The class \NC represents problems that have efficient parallel algorithms, and is a subclass of problems solvable in polylogarithmic space~\cite{Arora2009}. Intuitively, \NC is the class of problems that can be solved using local computations, as opposed to \PTIME-complete problems, which are inherently sequential and thus require storing the entire structure in the memory unless $\NC = \PTIME$. An important property of \NC is that problems from this class can be used for designing \PSPACE algorithms as discussed, e.g., in the beginning of ~\cite[Section 3]{Jain2011}. Namely, the  composition of a \PSPACE-transducer and an \NC-algorithm is a \PSPACE-algorithm~\cite{Borodin1977}, despite the fact that the output of a \PSPACE-transduction can have exponential size. This is not necessarily true when an \NC-algorithm is replaced by an arbitrary polynomial time algorithm. In the context of formal verification, this compositional approach is used, e.g., in~\cite{Baier2023}.

\NCT is an especially important class in computational algebra.
To quote \cite[page 468]{Goldreich2008}, ``\NCT is the habitat of most natural problems in linear algebra''. Indeed, matrix multiplication, computing the determinant, inverse and rank of a matrix belong to \NCT~\cite{BorodinGH82,Cook85,Berkowitz1984,Csanky1976}.

\subsection{Our contributions} 

The known results about reachability properties of zero-one matrix monoids (including the special case of total DFAs), such as \cite{Carpi1988,Eppstein1990,Ryzhikov2019WORDS,Kiefer2021}, mostly construct a product of minimum rank iteratively, with each iteration decreasing the number of different rows or the rank of a matrix.
Such an approach is inherently sequential, since the matrix in the new iteration has to depend on the previous one, which thus has to be constructed explicitly.
In particular, this requires matrix multiplication at every step, which heavily increases the time complexity. In this paper, we take a different direction by strongly relying on linear algebra. While linear-algebraic arguments are used widely in the synchronising automata literature, they mostly serve to decrease the number of iterations in the iterative scheme described above. Our approach is to instead relate the rank of a zero-one matrix monoid to efficiently computable linear-algebraic properties, without explicitly constructing a matrix of minimum rank.

Our first main result is that computing the rank of a zero-one matrix monoid provided in the input by a generating set of $m$ matrices of dimension $n$ (or, equivalently, by a UFA with $n$ states and $m$ letters) is in \NCT (\autoref{thm:rank-in-NC}) and can be done in time $\OO(mn^4)$ (\autoref{thm:linalg-time}) and space $\OO(n^2)$. Previously, it was not known that this problem is in \NC, not even for total DFAs or finite complete codes. Moreover, the naive implementation of the polynomial time algorithm from the literature works in time~$\OO(n^{4 + \omega} + mn^4)$~\cite{Carpi1988}.

Our results rely on a new concept of weight of the matrices in a complete zero-one monoid.
This theory of matrix weight, which we develop in \cref{sec-toolbox}, is our main technical contribution.
Matrix weight is a natural generalisation of an existing notion of weight of columns of matrices in total DFAs, which was used, e.g., in connection with the road colouring problem~\cite{Friedman1990,Kari2001,Gusev2016}.
We show that all matrices in a zero-one matrix monoid have the same weight, and that this weight is tightly related to both the rank of the monoid and to the maximal weight of the columns and rows of its matrices (\autoref{subsec:weight-preserv}).
This connection allows us to reduce the computation of the monoid rank to the computation of maximal column and row weight.
Then we show that we can instead compute the weight of ``maximal pseudo-columns'' and ``maximal pseudo-rows'', as they have the same weight as maximal columns and rows, respectively (\autoref{subsec:pseudocolumns}).
Finally, we transfer linear-algebraic techniques from the literature on weighted automata to compute those weights, and thus the rank of the monoid, efficiently (\autoref{sec-nc2} and \autoref{subsec:linalg-time}).

We complement the linear-algebraic algorithms with a combinatorial algorithm, our second main contribution. While it has the same time complexity of  $\OO(mn^4)$ and a higher space complexity of $\OO(n^3)$ in the general case (\autoref{thm:finding-matrix}), it also constructs a matrix of minimum rank in addition to computing the rank of the monoid. For total DFAs, our combinatorial algorithm runs in time $\OO(n^3 + mn^2)$ (\autoref{thm:main-dfas}), thus outmatching the linear-algebraic counterpart and improving upon the $\OO(m^4n^4)$ algorithm known before~\cite{Rystsov1992Rank}. The key technical ingredients of our combinatorial algorithm are explained in the beginnings of \autoref{subsec:max-col} and \autoref{subsec-finding-mat}.
Our results on the time and space complexity of computing the rank are summarised in the table below, in the format ``time complexity, space complexity, reference for both''.

\begin{center}
\begin{tabular}{ | c || c | c | } 
  \hline
  class & UFA & total DFA \\ 
  \hline
  previous best & $\OO(n^{4 + \omega} + mn^4)$, $\OO(n^4)$~\cite{Carpi1988} & $\OO(m^4n^4)$, $\OO(n^4)$~\cite{Rystsov1992Rank} \\ 
  \hline
  linear-algebraic & $\OO(mn^4)$,  $\OO(n^2)$~(\autoref{thm:linalg-time}) & $\OO(mn^3)$, $\OO(n^2)$~(see \autoref{subsec:linalg-time}) \\
    \hline
  combinatorial & $\OO( mn^4)$, $\OO(n^3)$ (\autoref{thm:finding-matrix}) & $\OO(n^3 + mn^2)$, $\OO(n^2)$~(\autoref{thm:main-dfas}) \\
  \hline
\end{tabular}
\end{center}

\section{Main definitions}\label{sec-definitions}
Let $Q$ be a finite set, which we view as a set of states.
For $S \subseteq Q$ we write $[S]$ for the column vector $x \in \{0,1\}^Q$ such that $x(q) = 1$ if and only if $q \in S$.
We may write $[q]$ for~$[\{q\}]$.
For a column vector $x \in \{0,1\}^Q$ we write $x^T$ for the transpose, a row vector.
For two column vectors $x_1,x_2 \in \mathbb{R}^Q$ we write $x_1 \ge x_2$ if the inequality holds component-wise.
We view the elements of $\mathbb{R}^{Q \times Q}$ (and similar sets) as matrices.
Vector and matrix addition and multiplication are defined in the usual way (over $\mathbb{R}$).
We denote by $\spn{X}$ the span of a set~$X$ of vectors, i.e., the set of all linear combinations of~$X$ with real coefficients.
The \emph{real rank} of a matrix $A \in \mathbb{R}^{Q \times Q}$ is, as usual, the dimension of the column space of~$A$ over the field of the reals (which equals the dimension of the row space); i.e., $\rkr(A) = \dim \spn{A [q] \mid q \in Q} = \dim \spn{[q]^T A \mid q \in Q}$.

Let $\Aa = \{A_1, \ldots, A_m\}$ be a set of matrices from $\{0,1\}^{Q \times Q}$, and $\Sigma = \{a_1, \ldots, a_m\}$ be a finite alphabet.
We associate the letters with the matrices by setting $M(a_i) = A_i$ for $1 \le i \le m$.
Throughout this paper, when speaking about computational complexity, we assume that the input is the function $M \colon \Sigma \to \{0,1\}^{Q \times Q}$ from letters to zero-one matrices.
We can extend  $M \colon \Sigma \to \{0,1\}^{Q \times Q}$ naturally (and often implicitly) to $M \colon \Sigma^* \to \mathbb{Z}_{\ge 0}^{Q \times Q}$ by defining $M(a_1 \cdots a_k) = M(a_1) \cdots M(a_k)$. 
Thus, $M$~is a monoid homomorphism from~$\Sigma^*$ to the matrix monoid $M(\Sigma^*)$ generated by $\Aa = M(\Sigma)$.
Note that $M(\varepsilon) = I$, where $\varepsilon$~denotes the empty word and $I$ the $Q \times Q$ identity matrix. In this paper, we consider only monoid morphisms $M \colon \Sigma^* \to \mathbb{Z}_{\ge 0}^{Q \times Q}$ that are \emph{unambiguous}, i.e., $M \colon \Sigma^* \to \{0,1\}^{Q \times Q}$. If $M$ is unambiguous, $\Aa = M(\Sigma)$ generates a finite matrix monoid $M(\Sigma^*) \subseteq \{0,1\}^{Q \times Q}$.

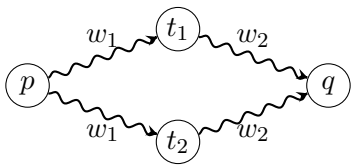
\begin{figure}[h]\centering
\begin{tikzpicture} [node distance = 2cm]
\tikzset{every state/.style={inner sep=1pt,minimum size=1.5em}}

\node [state] at (-2, 0) (p) {$p$};
\node [state] at (0, 0.75) (t1) {$t_1$};
\node [state] at (2, 0) (q) {$q$};
\node [state] at (0, -0.75) (t2) {$t_2$};
\path [-stealth, thick]

(p) edge [decorate, decoration={snake, segment length=3mm, amplitude=0.5mm}] node[above] {$w_1$} (t1)
(t1) edge [decorate, decoration={snake, segment length=3mm, amplitude=0.5mm}] node[above] {$w_2$} (q)

(p) edge [decorate, decoration={snake, segment length=3mm, amplitude=0.5mm}] node[below] {$w_1$} (t2)
(t2) edge [decorate, decoration={snake, segment length=3mm, amplitude=0.5mm}] node[below] {$w_2$} (q)
;
\end{tikzpicture}
\caption{The configuration that is forbidden in a UFA.}\label{fig:diamond-free}
\end{figure}

 Viewing the matrices as transition matrices of an automaton, we obtain a \emph{nondeterministic finite (semi-)automaton (NFA)} $(Q, \Sigma, \Delta)$ with transition relation $\Delta = \{(p,a,q) \in Q \times \Sigma \times Q \mid [p]^T M(a) [q] = 1\}$.
Recall that in this paper automata do not have dedicated initial or accepting states, see footnote~\ref{footnote-semi} on page~\pageref{footnote-semi}.
We can extend $\Delta$ from letters to words in the usual way so that we have $\Delta = \{(p,w,q) \in Q \times \Sigma^* \times Q \mid [p]^T M(w) [q] \ge 1\}$.
An NFA $(Q, \Sigma, \Delta)$ is \emph{unambiguous}\footnote{In the context of finite automata that recognise languages, the usual notion of unambiguity also depends on the choice of initial and final states, and is thus not reflected in the transition monoid. Our definition of unambiguity thus defines a strictly larger class of NFAs. Its advantage is that it is a property of the transition monoid alone, in the same way as determinism.} (or \emph{diamond-free}) if for every two states $p,q$ and for every two words $w_1, w_2$ there exists at most one $t \in Q$ with $(p,w_1,t) \in \Delta$ and $(t,w_2,q) \in \Delta$; see \autoref{fig:diamond-free} for an illustration of the forbidden configuration. We denote unambiguous NFAs as UFAs. Recall from the previous section that deciding if an NFA is unambiguous is \NL-complete. In the following, we often identify $M \colon \Sigma^* \to \{0,1\}^{Q \times Q}$ with the corresponding UFA $(Q, \Sigma, \Delta)$. In particular, a monoid homomorphism is unambiguous if and only if the corresponding NFA is unambiguous.

When $M$ (or, equivalently, $\Delta$) is clear from the context, we may write $p \cdot w = \{q \in Q \mid (p,w,q) \in \Delta\}$.
Then $[p \cdot w]^T = [p]^T M(w)$.
Similarly, we may write $w \cdot q = \{p \in Q \mid (p,w,q) \in \Delta\}$, so that $[w \cdot q] = M(w) [q]$. 
We call $M$ \emph{strongly connected} if for all $p,q \in Q$ there is $w \in \Sigma^*$ with $p \cdot w \ni q$. 
We call $M$ \emph{complete} if $0 \not\in M(\Sigma^*)$, where $0$ is the zero matrix.
The \emph{real rank} of~$M$ (and of $M(\Sigma^*)$) is 
\[\rkr(M) := \min\{\rkr(M(w)) \mid w \in \Sigma^*\}.\]
Note that~$M$ is complete if and only if $\rkr(M) \ne 0$.

Suppose that $|p \cdot a| = 1$ holds for every $p \in Q$ and $a \in \Sigma$, or, equivalently, that every matrix in~$\Aa$ has exactly one~$1$ in each row.
Then $|p \cdot w| = 1$ holds for every $p \in Q$ and $w \in \Sigma^*$.
We call such UFAs \emph{total deterministic finite (semi-)automata (total DFAs)} and we may write $\delta$ instead of $\Delta$ to highlight that it is a transition function $\delta \colon Q \times \Sigma \to Q$ instead of a transition relation.
A total DFA $(Q, \Sigma, \delta)$ is complete in the sense defined above (i.e., $0 \not\in M(\Sigma^*)$), and for any $w \in \Sigma^*$ we have that $\rkr(M(w))$ is the number of nonzero columns in~$M(w)$.

\section{Main concepts and the linear algebra toolbox}\label{sec-toolbox}
In this section, we introduce the main tools that we will use for both linear-algebraic and combinatorial algorithms in later sections.
Until \autoref{subsec:non-sc}, we fix an unambiguous, complete, and strongly connected monoid morphism~$M$. In \autoref{subsec:non-sc} we will show that the case where $M$ is not strongly connected can be easily reduced to the strongly connected case.

\subsection{Columns, rows and the structure of minimum rank matrices}\label{subsec:col-row-defs}

The concept of maximum columns and rows plays a crucial role in dealing with reachability problems in unambiguous monoid morphisms. Abusing language slightly in the following, by \emph{column} we refer to column vectors of the form $[w \cdot q] = M(w)[q] \in \{0,1\}^Q$ where $w \in \Sigma^*$ and $q \in Q$.
Similarly, a \emph{row} is of the form $[q \cdot w]^T = [q]^T M(w)$. See \autoref{fig:ex-columns-rows} for an example. In the case of total DFAs, all rows are of the form $[q]^T$. This fact makes total DFAs significantly simpler to deal with than general complete UFAs.

\begin{figure}[h]\centering
\begin{subcaptiongroup}\centering
\parbox[c]{0.55\textwidth}{%
    \centering
$
M(a) = \begin{pmatrix} 1 & 0 & 1 & 0 \\ 1 & 0 & 1 & 0 \\ 0 & 0 & 0 & 0 \\ 0 & 0 & 0 & 0 \end{pmatrix} \quad 
M(b) = \begin{pmatrix} 0 & 0 & 0 & 0 \\ 0 & 0 & 0 & 0 \\ 0 & 1 & 0 & 1 \\ 0 & 1 & 0 & 1 \end{pmatrix}
$
}
\parbox[c]{0.44\textwidth}{%
    \centering
\begin{tikzpicture} [node distance = 2cm]
\tikzset{every state/.style={inner sep=1pt,minimum size=1.5em}}

\node [state] at (-2, 0) (1) {$1$};
\node [state] at (0, 1) (2) {$2$};
\node [state] at (2, 0) (3) {$3$};
\node [state] at (0, -1) (4) {$4$};
\path [thick,->,>=stealth]

(2) edge [] node[above] {$a$} (1)
(2) edge [bend left=20] node[above] {$a$} (3)
(1) edge [] node[above, pos=0.3] {$a$} (3)

(1) edge [loop below] node[left] {$a$} (1)

(3) edge [bend left=20] node[above] {$b$} (2)
(3) edge [] node[below] {$b$} (4)
(4) edge [] node[left, pos=0.3] {$b$} (2)

(4) edge [loop,out=200,in=160,looseness=8] node[left] {$b$} (4)
;
\end{tikzpicture}
}
\end{subcaptiongroup}
\caption{$[a \cdot 3] = M(a) [3] = [\{1, 2\}]$ is a column; $[2 \cdot a]^T = [2]^T M(a) = [\{1, 3\}]^T$ is a row.}\label{fig:ex-columns-rows}
\end{figure}
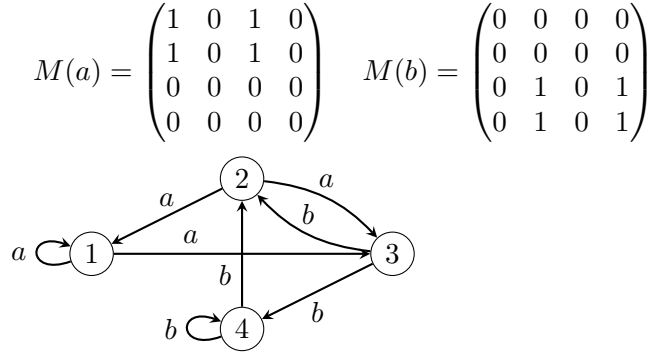

A column $[C]$ is called \emph{maximal} if there is no column $[C']$ such that $[C'] \ne [C]$ and $[C'] \ge [C]$ (that is, $C \subset C'$). Maximal rows are defined in the same way. Recall that the inequalities are taken component-wise. 

Let $A \in \{0,1\}^{m \times n}$ be a zero-one matrix.
One can view~$\rkr(A)$ as the least number~$r$ such that there are matrices $C \in \mathbb{R}^{m \times r}$ and $R \in \mathbb{R}^{r \times n}$ with $A = C R$.
Define the \emph{unambiguous rank} $\rku(A)$ as the least number~$r$ such that there are matrices $C \in \{0,1\}^{m \times r}$ and $R \in \{0,1\}^{r \times n}$ such that $A = C R$.
Analogously to $\rkr(M)$, define also $\rku(M) := \min\{\rku(M(w)) \mid w \in \Sigma^*\}$.
Clearly, $\rkr(A) \le \rku(A)$, and the inequality can be strict, but in \autoref{cor:ranks-equal} below we show that $\rkr(M) = \rku(M)$.
The reason we are interested in the unambiguous rank is that \autoref{thm:Cesari} below implies that there is always a matrix with a very simple structure such that its unambiguous rank is equal to its real rank and both ranks are minimum.

In the following let us write $r := \rku(M)$ when $M$~is understood.
A word $u \in \Sigma^*$ is \emph{of minimum unambiguous rank} if $\rku(M(u)) = r$.
If $u \in \Sigma^*$ is of minimum unambiguous rank then so is $v u w$ for all $v, w \in \Sigma^*$. 

Words of unambiguous rank one, known as synchronising words, play an especially important role due to their applications in the theory of codes, as explained in \autoref{sec-existing}. It is easy to see that a word $w$ has unambiguous rank one if and only if there exist $C, R \subseteq Q$ such that $w$ maps a state $p$ to a state $q$ if and only if $p \in C$ and $q \in R$.
For total DFAs, we moreover have that $C = Q$ and $R$ has cardinality one.

\begin{thm}[C{\'{e}}sari \cite{Cesari74}] \label{thm:Cesari}
Let $u \in \Sigma^*$ be of minimum unambiguous rank.
There are pairwise disjoint sets $C_1, \ldots, C_r \subseteq Q$ and pairwise disjoint sets $R_1, \ldots, R_r \subseteq Q$ such that \[M(u) = \sum_{i=1}^r [C_i] [R_i]^T.\]
Moreover, each $[C_i]$ and $[R_i]^T$ is, respectively, a maximal column and a maximal row.
\end{thm}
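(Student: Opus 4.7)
The plan is to exploit the algebraic structure of the minimum ideal in the finite transition monoid $M(\Sigma^*)$, first reducing to an idempotent case. By finiteness of $M(\Sigma^*)$, some power $u^n$ satisfies $M(u^n)^2 = M(u^n)$; and since $u^n \in \Sigma^* u \Sigma^*$, this power still has minimum unambiguous rank $r$. I would first establish the desired decomposition for the idempotent $E := M(u^n)$, and then propagate it back to $M(u)$ via the standard identities linking $M(u)$ and $E$ inside the same $\mathcal{D}$-class of the finite monoid $M(\Sigma^*)$.

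For the idempotent case, I fix a factorisation $E = AB$ witnessing $\rku(E) = r$, with $A \in \{0,1\}^{Q \times r}$ having columns $[C_i]$ and $B \in \{0,1\}^{r \times Q}$ having rows $[R_i]^T$, all sets $C_i, R_i$ nonempty. Since $E$ is zero-one (by unambiguity of $M$), at every entry $(p,q)$ in the support of $E$ exactly one rectangle $C_i \times R_i$ is responsible for the $1$ there; hence for any distinct $i, j$, at least one of $C_i \cap C_j$ and $R_i \cap R_j$ must be empty. The identity $E^2 = E$ reads $A(BA)B = AB$, and combined with the minimality of $r$ this is expected to force $BA = I_r$, i.e.\ $|R_i \cap C_j| = \delta_{ij}$, so that every rectangle has a distinguished ``fixed point'' $q_i \in C_i \cap R_i$.

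The main technical obstacle is upgrading this partial information to full pairwise disjointness of \emph{both} families: the identity $BA = I_r$ alone does not rule out, for example, $C_1 \cap C_2 \ne \emptyset$ together with $R_1 \cap R_2 = \emptyset$. To exclude such configurations I would invoke strong connectedness: given $p \in C_1 \cap C_2$, the two ``fixed points'' $q_1, q_2 \in p \cdot u^n$ can be joined by some word $w$, and $w$ can be chosen so that $M(u^n \, w \, u^n)$ collapses two rectangles into one, contradicting the minimality of $r$. Locating the right $w$ and verifying the collapse is where I expect the bulk of the work to lie.

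Finally, for maximality, I would argue again by minimality: if some $[C_i]$ were strictly dominated by a column vector $[w \cdot q]$ of $M$, substituting $[w \cdot q]$ for $[C_i]$ in the decomposition would yield a zero-one matrix whose support properly contains that of $M(u)$, and this extra mass would have to be absorbable by a re-factorisation using fewer than $r$ rectangles, again contradicting the minimality of $r$. The symmetric argument handles the rows, completing the proof.
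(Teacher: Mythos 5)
There are genuine gaps, and they sit exactly at the two places where the real content of the theorem lies.

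First, the disjointness. Your reduction to an idempotent $E=AB$ and the observation that for $i\ne j$ at least one of $C_i\cap C_j$, $R_i\cap R_j$ is empty are fine, but both of your key claims ($BA=I_r$, and the exclusion of $C_1\cap C_2\ne\emptyset$ with $R_1\cap R_2=\emptyset$) are left as ``expected'' or ``where the bulk of the work lies'', and the mechanism you propose for the second one does not work. The engine of the proof is the following fact (\autoref{lem:Cesari-lemma} in the paper): for \emph{every} $w$, the $r\times r$ nonnegative integer matrix $P(w):=R\,M(w)\,C$ has all powers bounded (because $C\,P(w)^i R = M((uw)^iu)$ is zero-one by unambiguity) and no zero row or column (by minimality of $r$), hence is a permutation matrix. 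From this, strong connectedness lets one choose, for each state $q$, a word $w$ so that the $q$-column of $R$ (resp.\ $q$-row of $C$) is dominated by a column of the permutation $P(w)$, i.e.\ is at most a unit vector --- which is exactly pairwise disjointness of the $R_i$ (resp.\ $C_i$). Your proposed contradiction via a ``collapse'' of $M(u^n w u^n)$ cannot occur: $E M(w) E = A\,(BM(w)A)\,B$ with $BM(w)A$ a permutation, so it always decomposes into the same $r$ rectangles. The contradiction you need from a shared state is with \emph{unambiguity} (an entry $\ge 2$ appearing in some power $M((u^n w)^k u^n)$), not with minimality of $r$; this is the idea your sketch is missing.

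Second, the maximality claim. Your substitution argument is not valid as stated: replacing $[C_i]$ by a strictly larger column need not produce a zero-one matrix (the larger column may meet another $C_j$), the result need not belong to the monoid, and there is no reason a matrix with larger support should admit a factorisation with fewer than $r$ rectangles --- minimality of $r$ is simply not contradicted by ``extra mass''. The actual content here is an annihilation statement: if $[C_i]$ is not maximal, then there exists $v\in\Sigma^*$ with $M(v)[C_i]=0$, whence $M(vu)$ has unambiguous rank at most $r-1$, contradicting minimality. Proving that non-maximal columns can be annihilated (\autoref{lem:non-max-column}) is where the work is; the paper does it via the weight $\alpha^T y$ given by the Perron left-eigenvector of $\overline{A}$ (maximal $=$ maximum weight, weight is preserved under left multiplication, and a column contained in $\Mer(p)\setminus\{p\}$ is killed by any word sending $p$ to a maximum-weight column). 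Your proposal contains no substitute for this ingredient. Finally, note that the transfer from the idempotent $E=M(u^n)$ back to $M(u)$ via Green's relations is also left unspecified; the paper's proof avoids this detour by working with $M(u)=CR$ directly.
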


This theorem will play a central role.
A proof can be found in \cite[Proposition 4]{BealCKP08}. For the sake of completeness, we provide an elementary proof of the next two sections.
In the case of a total DFA, $R_1$ is a singleton and \autoref{thm:Cesari} is fairly obvious.

In \autoref{thm:Cesari}, since the $C_i$ are pairwise disjoint and the $R_i$ are pairwise disjoint, each $[C_i] [R_i]^T$ forms, intuitively, a ``combinatorial rectangle'', and no such rectangle shares a row or a column with any other rectangle.
The column vectors~$[C_i]$ are exactly the nonzero columns of~$M(u)$ and linearly independent, and the row vectors~$[R_i]^T$ are exactly the nonzero rows of~$M(u)$ and linearly independent.
Thus, $r$~is the number of distinct nonzero columns and also the number of distinct nonzero rows in~$M(u)$.
It follows that $r = \rku(M(u)) = \rkr(M(u))$. Thus we have:

\begin{cor} \label{cor:ranks-equal}
We have $r = \rku(M) = \rkr(M)$.
\end{cor}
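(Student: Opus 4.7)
My plan is to derive the corollary from the Césari decomposition combined with the stability of minimum unambiguous rank under two-sided concatenation. The equality $r = \rku(M)$ is simply how the text defined $r$, so only $\rku(M) = \rkr(M)$ needs work.

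For the easy direction $\rkr(M) \le r$, I would invoke the paragraph immediately preceding the corollary: for any word $u$ of minimum unambiguous rank, the Césari decomposition writes $M(u) = \sum_{i=1}^{r} [C_i][R_i]^T$ where the column vectors $[C_i]$ have pairwise disjoint nonempty supports, hence are linearly independent in $\mathbb{R}^Q$. Therefore $\rkr(M(u)) = r$, and $\rkr(M) \le \rkr(M(u)) = r$ follows from the definition of $\rkr(M)$ as a minimum over all matrices in the monoid.

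The crux is the reverse inequality $\rkr(M) \ge r$, i.e., showing that no matrix $M(w)$ in the monoid has real rank below $r$. I would fix an arbitrary $w \in \Sigma^*$ and an arbitrary word $u$ of minimum unambiguous rank, and consider the sandwich word $uwu$. By the stability observation already recorded in the text (``if $u$ is of minimum unambiguous rank then so is $vuz$ for all $v,z \in \Sigma^*$''), the word $uwu$ itself has $\rku(M(uwu)) = r$. Applying Césari's theorem to $uwu$ together with the same linear-independence argument as above yields $\rkr(M(uwu)) = r$. Finally, submultiplicativity of the real rank under matrix products gives
\[
r \;=\; \rkr\bigl(M(u)\, M(w)\, M(u)\bigr) \;\le\; \rkr(M(w)).
\]
Since $w$ was arbitrary, taking the minimum yields $r \le \rkr(M)$, closing the argument.

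I do not anticipate any genuine obstacle here, as the proof is a short deduction from three already-available ingredients: the linear independence of columns with disjoint supports, the Césari decomposition, and the stability of minimum unambiguous rank under sandwiching. The most delicate point is the last one, but that fact has been asserted explicitly in the preceding text and hence can be used as a black box.
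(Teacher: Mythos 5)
Your proof is correct and follows the same route as the paper: the paper derives the corollary from the observation, stated in the paragraph just before it, that the C\'esari decomposition makes the $[C_i]$ exactly the nonzero columns of $M(u)$ and linearly independent, whence $\rkr(M(u)) = \rku(M(u)) = r$ for any word $u$ of minimum unambiguous rank. Your explicit sandwiching step $r = \rkr(M(uwu)) \le \rkr(M(w))$ is precisely the (left implicit) justification of the remaining inequality $\rkr(M) \ge r$, so if anything your write-up is slightly more complete than the paper's.
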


We can thus define $\rkg(M)$ as $\rku(M) = \rkr(M)$.
For words $w \in \Sigma^*$ that are not of minimum unambiguous rank, we may have $\rkr(M(w)) < \rku(M(w))$, but the rank of such matrices will rarely play a role in the following. In what follows, we call words of minimum unambiguous rank simply words of minimum rank. Since below we never refer to the real rank of words, this will not lead to any confusion.

\subsection{A proof of the first statement in C{\'{e}}sari's theorem}

We use the following lemma.
\begin{lem} \label{lem:Cesari-lemma}
Let $A \in \{0, 1, 2, \ldots\}^{Q \times Q}$ be with $[Q]^T A \ge [Q]^T$ and $A [Q] \ge [Q]$.
Then $A$~is a permutation matrix \emph{or} $\sup_{i \ge 0} [Q]^T A^i [Q] = \infty$.
\end{lem}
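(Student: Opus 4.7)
The plan is to prove the contrapositive: if $\sup_{i \ge 0} [Q]^T A^i [Q] < \infty$, then $A$ must already be a permutation matrix. The driving observation is a monotonicity forced by the hypotheses. Since $A[Q] \ge [Q]$ and $A$ has non-negative entries, applying $A$ to both sides gives $A^{i+1}[Q] = A(A^i[Q]) \ge A^i[Q]$ for every $i \ge 0$, so the vectors $A^i[Q]$ are componentwise non-decreasing in $i$. Consequently $s_i := [Q]^T A^i [Q]$ is non-decreasing.

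Assuming $\sup_i s_i < \infty$ and noting that each $s_i$ is a non-negative integer, the sequence $(s_i)$ is eventually constant; pick $i_0$ with $s_{i+1} = s_i$ for all $i \ge i_0$. Since $A^{i+1}[Q] \ge A^i[Q]$ componentwise and both vectors have the same coordinate sum, they must in fact coincide. Setting $x^* := A^{i_0}[Q]$ therefore yields $A x^* = x^*$ and $x^* \ge [Q]$. Running the symmetric argument on rows (using $[Q]^T A \ge [Q]^T$) produces $(y^*)^T := [Q]^T A^{i_0}$ with $(y^*)^T A = (y^*)^T$ and $(y^*)^T \ge [Q]^T$.

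Post-multiplying $(y^*)^T A = (y^*)^T$ by $[Q]$ gives $(y^*)^T (A[Q] - [Q]) = 0$. Each component of $A[Q] - [Q]$ is non-negative by hypothesis, while each component of $(y^*)^T$ is at least $1$, hence strictly positive; so the equation forces $A[Q] = [Q]$, i.e., every row of $A$ sums to exactly $1$. Pre-multiplying $A x^* = x^*$ by $[Q]^T$ yields symmetrically $[Q]^T A = [Q]^T$, so every column sums to exactly $1$. A matrix with non-negative integer entries whose every row and column sums to $1$ must have exactly one entry equal to $1$ in each row and each column and all other entries zero, so $A$ is a permutation matrix, as required.

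I expect no real obstacle beyond the positivity step in the final paragraph, which is where the hypothesis $A[Q] \ge [Q]$ (and dually $[Q]^T A \ge [Q]^T$) is used a second time to promote weak inequality into equality. Everything else is bookkeeping: monotone convergence of an integer-valued sequence, and the elementary fact that a non-negative integer matrix with unit row and column sums is a permutation matrix.
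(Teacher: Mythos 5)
Your proof is correct. It is, however, organised differently from the paper's. The paper argues directly: assuming $A$ is not a permutation matrix, it shows by induction that $[Q]^T A^{i+1}[Q] > [Q]^T A^i[Q]$ for \emph{every} $i$ — the base case uses that a nonnegative integer matrix with unit row and column sums is a permutation matrix, and the inductive step propagates strictness by observing that $[Q]^T A$ is strictly positive, so a strict componentwise increase from $A^i[Q]$ to $A^{i+1}[Q]$ survives left-multiplication by $[Q]^T A$. You instead take the contrapositive: boundedness of the integer sequence forces eventual stabilisation, which hands you a right fixed vector $x^* \ge [Q]$ and a left fixed vector $(y^*)^T \ge [Q]^T$, and you then use the strict positivity of these fixed vectors to annihilate the slack in $A[Q] \ge [Q]$ and $[Q]^T A \ge [Q]^T$, reducing to the same elementary fact about doubly (sub)stochastic integer matrices. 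Both proofs rest on the same monotonicity $A^{i+1}[Q] \ge A^i[Q]$ and both invoke positivity at the crucial moment, but in mirror-image ways: the paper uses positivity to keep the sequence growing, you use it to show that a stalled sequence forces exact equality. The paper's version yields the slightly stronger conclusion that for a non-permutation $A$ the sequence is strictly increasing from the very first step; yours only shows that it cannot stall, which suffices for the lemma. Every step of your argument checks out, including the two places where a weak inequality is promoted to equality via a strictly positive multiplier.
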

\begin{proof}
By a straightforward induction argument we have that $A^{i+1} [Q] \ge A^i [Q]$ holds for all $i \ge 0$.
It suffices to show that $A$~is a permutation matrix \emph{or} for all $i \ge 0$ we have $[Q]^T A^{i+1} [Q] > [Q]^T A^i [Q]$.
We suppose that $A$~is not a permutation matrix and show by induction on~$i$ that $[Q]^T A^{i+1} [Q] > [Q]^T A^i [Q]$ holds for all $i \ge 0$.

Concerning the induction base, $i=0$, since $A$ is not a permutation matrix, we cannot have both $[Q]^T A = [Q]^T$ and $A [Q] = [Q]$ (in fact, we can have neither).
Therefore, $[Q]^T A [Q] > [Q]^T [Q]$.
For the induction step, suppose that $[Q]^T A^{i+1} [Q] > [Q]^T A^i [Q]$ holds for some $i \ge 0$.
Then $A^{i+1} [Q] > A^i [Q]$; i.e., the inequality $A^{i+1} [Q] \ge A^i [Q]$ is strict in some component.
Since $[Q]^T A \ge [Q]^T$, the vector $[Q]^T A$ is strictly positive in all components.
It follows that $[Q]^T A A^{i+1} [Q] > [Q]^T A A^i [Q]$, as required.
\end{proof}

This allows us to prove the first statement of \autoref{thm:Cesari}. The second statement is proved at the end of the next subsection.

\begin{prop} \label{prop:Cesari-first}
Let $u \in \Sigma^*$ be of minimum rank.
There are pairwise disjoint sets $C_1, \ldots, C_r \subseteq Q$ and pairwise disjoint sets $R_1, \ldots, R_r \subseteq Q$ such that $M(u) = \sum_{i=1}^r [C_i] [R_i]^T$.
\end{prop}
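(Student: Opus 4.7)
The plan is to apply \autoref{lem:Cesari-lemma} to $A := RC \in \mathbb{Z}_{\ge 0}^{r \times r}$, where $M(u) = CR$ is any minimum-size zero-one factorization of $M(u)$, conclude that $A$ is a permutation matrix, and then extract the disjoint decomposition from that structure.

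\textbf{Setup.} I would fix $C \in \{0,1\}^{Q \times r}$ and $R \in \{0,1\}^{r \times Q}$ with $M(u) = CR$, and write $[C_1], \ldots, [C_r]$ for the columns of $C$ and $[R_1]^T, \ldots, [R_r]^T$ for the rows of $R$. Minimality of $r$ forces $C_i, R_i \neq \emptyset$. Setting $A := RC$ yields $A_{ij} = |R_i \cap C_j|$ and the telescoping identity $M(u^k) = C A^{k-1} R$ for every $k \ge 1$.

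\textbf{Applying \autoref{lem:Cesari-lemma}.} Suppose row $i$ of $A$ is zero. Then $M(u^2) = CAR = C'(\widetilde{A} R)$, where $C'$ omits column $i$ of $C$ and $\widetilde{A}$ omits row $i$ of $A$. I claim $\widetilde{A} R \in \{0,1\}^{(r-1) \times Q}$: for any $j \ne i$ and any $q' \in Q$, picking $p \in C_j$ (nonempty) gives $M(u^2)_{pq'} \ge (\widetilde{A} R)_{jq'}$, and since $M(u^2) \in \{0,1\}^{Q \times Q}$ we conclude $(\widetilde{A} R)_{jq'} \le 1$. This supplies a zero-one factorization of $M(u^2)$ of middle dimension $r - 1$, whence $\rku(M(u^2)) \le r - 1$, contradicting $\rku(M(u^2)) = r$. (The latter equality holds because $r = \rku(M) \le \rku(M(u^2)) \le \rku(M(u)) = r$, the rightmost inequality being an instance of monotonicity of $\rku$ under multiplication, proved by the same ``pick $p \in C_j$'' argument.) Zero columns of $A$ are ruled out symmetrically. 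Writing $\mathbf{1} \in \mathbb{R}^r$ for the all-ones vector, boundedness follows from
\[
\mathbf{1}^T A^{k-1}\mathbf{1} \;\le\; [Q]^T C A^{k-1} R [Q] \;=\; [Q]^T M(u^k)[Q] \;\le\; |Q|^2
\]
for every $k \ge 1$. \autoref{lem:Cesari-lemma} then forces $A$ to be a permutation matrix.

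\textbf{Extracting the decomposition.} After permuting the rows of $R$ (and the columns of $C$ accordingly), I may assume $A = I$, so that $|R_i \cap C_j| = \delta_{ij}$ and $M(u) = CR = CAR = M(u^2)$ is idempotent. For each $i$, let $s_i$ denote the unique element of $R_i \cap C_i$. A direct computation using $|R_i \cap C_j| = \delta_{ij}$ gives $M(u)[s_i] = [C_i]$ and $[s_i]^T M(u) = [R_i]^T$, so each $[C_i]$ and $[R_i]^T$ is literally a column/row of $M(u)$. The remaining task is pairwise disjointness of the $C_i$'s and of the $R_i$'s. Since $M(u)_{qq'} = |\{k : q \in C_k, q' \in R_k\}| \in \{0,1\}$, one immediately rules out the simultaneous occurrence of $C_i \cap C_j \ne \emptyset$ and $R_i \cap R_j \ne \emptyset$ for any $i \ne j$. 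Ruling out the remaining one-sided overlap is the main obstacle: I would exploit the strong connectivity of $M$ to transport a hypothetical witness $q \in C_i \cap C_j$ via a suitable word and then replay a rank-lowering argument on $M(u^k)$, contradicting minimality of $r$. An alternative route, following~\cite{BealCKP08}, is to appeal to the Rees--Suschkewitsch structure theorem applied to the minimum-rank ideal of the finite monoid $M(\Sigma^*)$, which delivers the disjoint decomposition directly.
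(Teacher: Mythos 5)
Your argument correctly establishes that $A = RC$ has no zero rows or columns and bounded powers, hence is a permutation matrix by \autoref{lem:Cesari-lemma}; after normalising to $A = I$ you get $|R_i \cap C_j| = \delta_{ij}$ and idempotency of $M(u)$. But this is where the proof stops short of the actual content of the proposition: the pairwise disjointness of the $C_i$ (and of the $R_i$) does \emph{not} follow from these facts, and you explicitly defer it (``the main obstacle'') with only a one-sentence sketch. To see that the gap is real, consider the zero-one idempotent matrix $N = [\{1,3\}][\{1\}]^T + [\{2,3\}][\{2\}]^T$ over $Q = \{1,2,3\}$: here $\rku(N) = 2$, the corresponding $RC$ equals $I$, $N$ is idempotent, yet $3 \in C_1 \cap C_2$ and one can check $N$ admits no factorisation with disjoint column supports at all. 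So everything you have proved is consistent with a non-disjoint decomposition; the hypotheses you have not yet used (that $u$ has minimum rank within a strongly connected, complete, unambiguous monoid) are exactly what must rule such matrices out, and the ``rank-lowering argument on $M(u^k)$'' you gesture at will not work as stated, since powers of $u$ alone cannot detect the overlap (in the example, $N^k = N$ for all $k$).

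The paper's proof closes precisely this gap by a stronger use of the same lemma: it applies \autoref{lem:Cesari-lemma} not only to $RC = P(\varepsilon)$ but to $P(w) := R\,M(w)\,C$ for \emph{every} word $w$, showing each $P(w)$ is a permutation matrix. Then, given an arbitrary $q_1 \in Q$ and any $q_2 \in C_y$, strong connectivity provides $w$ with $[q_1]^T M(w)[q_2] = 1$, whence $R[q_1] \le P(w)[y]$; since $P(w)[y]$ is a column of a permutation matrix, $R[q_1]$ has at most one nonzero entry, i.e.\ $q_1$ lies in at most one $R_x$. This is the disjointness of the $R_x$, and the $C_x$ are handled symmetrically. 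If you want to complete your proof along the lines you propose, this interleaving of arbitrary words $w$ between $R$ and $C$ (rather than only powers of $u$) is the missing idea. The alternative route via the Rees--Suschkewitsch structure of the minimal ideal is legitimate but, as written, is a citation rather than a proof.
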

\begin{proof}
Since $\rku(M(u)) = r$, there are $C \in \{0,1\}^{Q \times r}$ and $R \in \{0,1\}^{r \times Q}$ with $M(u) = C R$.
For notational convenience, take a finite set $X$ with $|X| = r$ and write $C \in \{0,1\}^{Q \times X}$ and $R \in \{0,1\}^{X \times Q}$.
Let $w \in \Sigma^*$.
Define $P(w) := R M(w) C \in \{0,1,2,\ldots\}^{X \times X}$.
For all $i \ge 0$ we have $C P(w)^i R = (C R M(w))^i C R = M((u w)^i u) \in \{0,1\}^{Q \times Q}$.
Since $C$~has no zero columns and $R$~has no zero rows, we have $P(w)^i \in \{0,1\}^{X \times X}$ for all~$i$.
Further, since $\rku(P(w)) \ge \rku(C P(w) R) = \rku(M(u w u)) = r$, we have $\rku(P(w)) = r$.
In particular, $P(w)$ does not have a zero row or column; i.e., $P(w) [X] \ge [X]$ and $[X]^T P(w) \ge [X]^T$. 
It follows from \autoref{lem:Cesari-lemma} that $P(w)$~is a permutation matrix.
As $w \in \Sigma^*$ was arbitrary, $P(w)$ is a permutation matrix for all~$w$.

Let $q_1 \in Q$, and let $q_2 \in Q$ and $y \in X$ be such that $[q_2]^T C [y] = 1$.
By strong connectedness, there is $w \in \Sigma^*$ with $[q_1]^T M(w) [q_2] = 1$.
Thus,
\[
P(w) [y] \ = \ R M(w) C [y] \ \ge \ R [q_1] [q_1]^T M(w) [q_2] [q_2]^T C [y] \ = \ R [q_1] \,.
\]
As $P(w)$ is a permutation matrix and $q_1 \in Q$ was arbitrary, it follows that all nonzero columns of~$R$ are of the form $[x]$ for $x \in X$.
Similarly, all nonzero rows of~$C$ are of the form~$[x]^T$ for $x \in X$.

For each $x \in X$, define $C_x, R_x \subseteq Q$ such that $[C_x] = C [x]$ and $[R_x]^T = [x]^T R$.
The~$R_x$ are pairwise disjoint, as if there was $q \in R_x \cap R_y$ with $x \ne y$, then $R [q] \ge [\{x,y\}]$, contradicting what we proved in the previous paragraph.
Similarly, the $C_x$ are pairwise disjoint.
Finally, we have
\[
 M(u) \ = \ C R \ = \ \sum_{x \in X} C [x] [x]^T R \ = \ \sum_{x \in X} [C_x] [R_x]^T\,,
\]
as desired.
\end{proof}

\subsection{The weight of columns and rows}\label{subsec:weights}

The results in this subsection, about the column and row vectors that appear in the matrices~$M(w)$, are mostly due to \cite{Cesari74}; see also \cite[Section 3]{BealCKP08}.
Since a notion of column and row weight will be crucial for us in the later development, we phrase and prove the results around these concepts, but we do not view the lemmas of this subsection as novel.

Define $\overline{A} = \frac{1}{|\Sigma|} \sum_{a\in \Sigma} M(a) \in [0,1]^{Q \times Q}$.
Since $M$ is strongly connected, $\overline{A}$~is irreducible.
Since $M$ is unambiguous, the spectral radius of~$\overline{A}$ is at most~$1$, and since $M$ is complete, it is at least~$1$.
Thus, the spectral radius of~$\overline{A}$ equals~$1$.
Since $\overline{A}$~is irreducible, it follows from basic Perron-Frobenius theory that $\overline{A}$~has an eigenvalue~$1$ and every right eigenvector with eigenvalue~$1$ is a multiple of a strictly positive vector, say $\beta \in \RP^Q$.
Since $\overline{A}$~has only rational entries, we can assume $\beta \in \mathbb{Q}_{>0}^Q$.
Similarly for left eigenvectors.
Therefore, there are $\alpha, \beta \in \mathbb{Q}_{>0}^Q$ with $\alpha^T \overline{A} = \alpha^T$ and $\overline{A} \beta = \beta$.
Without loss of generality, we assume that~$\alpha^T \beta = 1$.

In the total DFA case, since $M(a) [Q] = [Q]$ for all $a \in \Sigma$, we have $\overline{A} [Q] = [Q]$ and so it is natural to take $\beta = [Q]$.
In that case, $\alpha^T [Q] = \alpha^T \beta = 1$ means that $\alpha^T = \alpha^T \overline{A}$ is the (unique) stationary distribution of the Markov chain whose transition probabilities are given by the row-stochastic matrix~$\overline{A}$; intuitively, in this Markov chain a letter $a \in \Sigma$ is picked uniformly at random in every step.

Define the \emph{weight} of a column~$y$ and of a row~$x^T$ by $\alpha^T y \in \mathbb{R}$ and $x^T \beta \in \mathbb{R}$, respectively.
Denote the maximum column weight and the maximum row weight by $\mcw$ and $\mrw$, respectively, i.e.,
\[
 \mcw \ := \ \max\{\alpha^T y \mid y \text{ is a column}\} \quad \text{ and } \quad
 \mrw \ := \ \max\{x^T \beta   \mid x^T \text{ is a row}\}\,.
\]

A column~$y$ is called \emph{of maximum weight} if $\alpha^T y = \mcw$, and analogously for rows.
In the total DFA case, every row is of the form $[q]^T$ for some $q \in Q$, hence every row is of maximum weight.

For $q \in Q$ define
\[
 \Mer(q) \ := \ \{q' \in Q \mid \exists\,S \supseteq \{q,q'\} \text{ such that } [S] \text{ is a column}\}\,.
\]
Intuitively, $\Mer(q)$ consists of the states that can ``appear'' in a column together with~$q$, or, equivalently, the states that are ``mergeable'' with~$q$ (that is, can be mapped to the same state by a word).
Note that $q \in \Mer(q)$. 

The following lemmas apply symmetrically also to rows.

\begin{lem} \label{lem:max-column}
Let $v \in \Sigma^*$ and $q \in Q$ be such that $[v \cdot q]$ is a column of maximum weight. Then we have:
\begin{enumerate}
\item[(a)] $[u v \cdot q]$ is a column of maximum weight for all $u \in \Sigma^*$;
\item[(b)] $v \cdot q' = \emptyset$ holds for all $q' \in \Mer(q) \setminus \{q\}$.
\end{enumerate}
\end{lem}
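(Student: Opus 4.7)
My plan is to prove part~(a) by an averaging argument based on the left Perron--Frobenius eigenvector $\alpha$, and then to deduce part~(b) from~(a) together with the mergeability hypothesis.

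For~(a), I will exploit the identity $\alpha^T \overline{A}^k = \alpha^T$, which holds for every $k \ge 0$. Expanding $\overline{A}^k = |\Sigma|^{-k} \sum_{u \in \Sigma^k} M(u)$ and right-multiplying by $M(v)[q]$ turns this into
\[
\mcw \;=\; \alpha^T M(v)[q] \;=\; \frac{1}{|\Sigma|^k} \sum_{u \in \Sigma^k} \alpha^T [uv \cdot q].
\]
Each summand is the weight of a column and is therefore bounded above by $\mcw$, so the equality of the average with $\mcw$ forces every summand to equal $\mcw$. Choosing $k = |u|$ for an arbitrary $u \in \Sigma^*$ then yields~(a).

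For~(b), I will start from $q' \in \Mer(q) \setminus \{q\}$ and unpack the definition of $\Mer$: there exist $w \in \Sigma^*$ and $p \in Q$ with $\{q, q'\} \subseteq w \cdot p$, so that $[w \cdot p] \ge [q] + [q']$ componentwise. Left-multiplying by $M(v)$ yields the key inequality $[vw \cdot p] \ge [v \cdot q] + [v \cdot q']$. Applying $\alpha^T$ and using that $[vw \cdot p]$ is itself a column (hence of weight at most $\mcw$) while $[v \cdot q]$ already has weight $\mcw$ leaves $\alpha^T [v \cdot q'] \le 0$. Strict positivity of $\alpha$ together with non-negativity of $[v \cdot q']$ then forces $v \cdot q' = \emptyset$.

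The one step requiring real care is the averaging in~(a): it depends crucially on $\alpha$ being a left eigenvector of $\overline{A}$ with eigenvalue one (which is where completeness and strong connectedness of $M$ enter) and on $\mcw$ being defined as the maximum over \emph{all} columns, so that each individual summand in the average is uniformly bounded above by $\mcw$. Once~(a) is in hand, part~(b) is a short consequence of the definition of $\Mer$ and the strict positivity of~$\alpha$.
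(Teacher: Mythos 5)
Your proof is correct and follows essentially the same route as the paper: part~(a) by averaging against the left Perron eigenvector $\alpha$ (the paper does one letter at a time via $\overline{A}$, you do all words of length $k$ at once, which is equivalent), and part~(b) by dominating $[v\cdot q]+[v\cdot q']$ by a column of weight at most $\mcw$ and invoking strict positivity of $\alpha$. The only cosmetic difference is that in~(b) you work directly with the column $[w\cdot p]$ from the definition of $\Mer(q)$, whereas the paper first routes back to a column of the form $[w\cdot q]$ using strong connectedness; both work.
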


\begin{proof}
Towards~(a), let $a \in \Sigma$.
It suffices to prove that $M(a) [v \cdot q]$ is of maximum weight.
We have $\alpha^T [v \cdot q] = \alpha^T \overline{M} [v \cdot q] = \frac{1}{|\Sigma|} \sum_{b \in \Sigma} \alpha^T M(b)[v \cdot q]$.
Thus, if $\alpha^T M(a) [v \cdot q] < \alpha^T [v \cdot q]$, then there would also exist $b \in \Sigma$ with $\alpha^T M(b) [v \cdot q] > \alpha^T [v \cdot q]$, contradicting that $[v \cdot q]$ is of maximum weight.
So $M(a)[v \cdot q]$ is of maximum weight.

Towards~(b), let $q' \in \Mer(q) \setminus \{q\}$.
Since $M$ is strongly connected, there is $w \in \Sigma^*$ with $[w \cdot q] \ge [q] + [q']$.
Thus, $[v w \cdot q] \ge [v \cdot q] + [v \cdot q']$.
It follows that, since $[v \cdot q]$ is of maximum weight, so is $[v w \cdot q]$.
Hence, $[v \cdot q'] = 0$.
\end{proof}

\begin{lem} \label{lem:non-max-column}
Let $S \subseteq Q$ be such that $[S]$ is a column which is not of maximum weight.
\begin{enumerate}
\item[(a)] There is $S' \supsetneq S$ such that $[S']$ is a column of maximum weight.
\item[(b)] There is $u \in \Sigma^*$ with $M(u) [S] = 0$.
\end{enumerate}
\end{lem}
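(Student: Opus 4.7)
For part~(a), I would use a direct finite-chain argument from the definition of a maximal column. By \autoref{lem:max-notions}, a column is of maximum weight if and only if it is maximal, so the hypothesis that $[S]$ is not of maximum weight means that $[S]$ is not maximal. By the very definition of maximality, there is then a column $[T_1]$ with $T_1 \supsetneq S$. If $[T_1]$ is of maximum weight, set $S' := T_1$; otherwise iterate to obtain $T_2 \supsetneq T_1$, and so on. Since $Q$ is finite, this strictly increasing chain must terminate at a column $[S']$ of maximum weight with $S' \supsetneq S$.

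For part~(b), the strategy is to exploit \autoref{lem:max-column}(b) via a witness element taken from $S' \setminus S$. Fix $S'$ as provided by~(a) and pick any $t \in S' \setminus S$. For every $s \in S$, the set $S' \supseteq \{t,s\}$ witnesses $s \in \Mer(t)$; moreover $s \ne t$ because $t \notin S$. Hence it suffices to find a word $u$ such that $[u \cdot t]$ is of maximum weight: \autoref{lem:max-column}(b) then yields $u \cdot s = \emptyset$ for every $s \in \Mer(t) \setminus \{t\}$, in particular for every $s \in S$, so that $M(u)[S] = \sum_{s \in S}[u \cdot s] = 0$ as required.

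The main obstacle is producing such a word $u$, since a priori $[u \cdot t]$ could vanish for many candidates. I would start from any minimum-rank word $u_0$, which exists by completeness; since $M(u_0) \ne 0$, some state $p$ satisfies $u_0 \cdot p \ne \emptyset$, and by strong connectedness there is $v \in \Sigma^*$ with $t \in p \cdot v$. Setting $u := u_0 v$ yields a word containing $u_0$ as a factor, hence itself of minimum rank, with $u \cdot t \ne \emptyset$. Applying \autoref{thm:Cesari} to decompose $M(u) = \sum_{i=1}^{r}[C_i][R_i]^T$ with pairwise disjoint $C_i$ and pairwise disjoint $R_i$, the non-zero column $M(u)[t] = [u \cdot t] = \sum_i [C_i]\,\mathbbm{1}[t \in R_i]$ must equal the unique $[C_i]$ whose $R_i$ contains $t$, and is therefore a maximal column; this supplies the word needed for the reduction above and completes the proof.
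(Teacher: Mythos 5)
Your part~(b) has the same skeleton as the paper's proof: pick $t \in S' \setminus S$, observe that $S \subseteq \Mer(t) \setminus \{t\}$, produce a word $u$ such that $[u \cdot t]$ is of maximum weight, and finish with \autoref{lem:max-column}~(b). The problem is that both halves of your argument are circular with respect to how the surrounding results are actually established. In part~(a) you invoke \autoref{lem:max-notions} to pass from ``not of maximum weight'' to ``not maximal''; but that implication is exactly the nontrivial direction of \autoref{lem:max-notions}, and the paper proves it \emph{by citing} \autoref{lem:non-max-column}~(a) --- i.e.\ by citing the statement you are trying to prove. (The converse, maximum weight implies maximal, is the easy direction, since $\alpha$ is strictly positive.) Some direct single-step enlargement argument is unavoidable here: the paper's proof writes $[S] = [v \cdot q]$, uses strong connectedness together with \autoref{lem:max-column}~(a) to find a maximum-weight column $[w \cdot t]$ with $q \in w \cdot t$, and then observes that $[v w \cdot t] \supseteq [v \cdot q]$ is again of maximum weight by \autoref{lem:max-column}~(a), hence strictly larger. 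Your finite-chain iteration is fine, but only once this one-step enlargement is available; as written, it assumes what is to be shown.

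Part~(b) has the same defect one level up: to obtain $u$ with $[u \cdot t]$ of maximum weight you appeal to the second statement of \autoref{thm:Cesari} (each $[C_i]$ is a \emph{maximal} column) plus \autoref{lem:max-notions}, but the paper derives that second statement \emph{from} \autoref{lem:non-max-column}~(b), so within the paper's self-contained development this is again circular. The detour through \autoref{thm:Cesari} is also unnecessary: take any column $[w_0 \cdot t_0]$ of maximum weight (one exists, as there are finitely many columns), use strong connectedness to find $v_0$ with $t_0 \in v_0 \cdot t$, and note that $[w_0 v_0 \cdot t] = M(w_0)[v_0 \cdot t] \ge M(w_0)[t_0] = [w_0 \cdot t_0]$, which forces $\alpha^T [w_0 v_0 \cdot t] = \mcw$. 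With that replacement, and with part~(a) repaired as above, the remainder of your part~(b) is correct and coincides with the paper's argument.
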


\begin{proof}
Let $v \in \Sigma^*$ and $q \in Q$ be such that $[v \cdot q]$ is not of maximum weight.

Towards~(a), let $w \in \Sigma^*$ and $t \in Q$ be such that $[w \cdot t]$ is of maximum weight and $w \cdot t \ni q$.
By \autoref{lem:max-column}~(a), $[v w \cdot t]$ is of maximum weight.
Since $v w \cdot t \supseteq v \cdot q$ and $[v \cdot q]$ is not of maximum weight, we have $v w \cdot t \supsetneq v \cdot q$.

Towards~(b), let $p \in S' \setminus S$.
We have $S \subseteq S' \subseteq \Mer(p)$ and thus $S \subseteq \Mer(p) \setminus \{p\}$.
Let~$u \in \Sigma^*$ be such that $[u \cdot p]$ is of maximum weight.
Then it follows from \autoref{lem:max-column}~(b) that $M(u) [S] = 0$.
\end{proof}

\begin{lem} \label{lem:max-notions}
    A column (respectively, row) is maximal if and only if it is of maximum weight.
\end{lem}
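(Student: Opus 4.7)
The statement has two directions; we treat columns, with rows following by the symmetric argument (swap $\alpha$ and $\beta$, left and right multiplication). The easy direction -- max weight $\Rightarrow$ maximal -- is a positivity computation: if $\alpha^T y = \mcw$ and $y' \ge y$ is any column, then $\alpha^T(y'-y) \ge 0$, with equality only if $y' = y$ since $\alpha$ is componentwise strictly positive; meanwhile $\alpha^T(y'-y) = \alpha^T y' - \mcw \le 0$ by definition of $\mcw$. Combining, $y = y'$, so $y$ is maximal.

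For maximal $\Rightarrow$ max weight, my plan rests on two observations. Observation A (closure under left multiplication): if $y^* = M(w)[q]$ has weight $\mcw$, then for every $u$ the vector $M(u) y^* = M(uw)[q]$ is a column of weight at most $\mcw$; summing over $u \in \Sigma^n$ and using $\alpha^T \overline{A}^n = \alpha^T$ yields
\[ \mcw = \alpha^T y^* = \frac{1}{|\Sigma|^n}\sum_{u \in \Sigma^n}\alpha^T M(u) y^*, \]
so each of the $|\Sigma|^n$ non-negative summands, being at most $\mcw$, is exactly $\mcw$; in particular each $M(u) y^*$ is of max weight. Observation B (consequence of \autoref{thm:Cesari}): for $u$ of minimum rank with decomposition $M(u) = \sum_i [C_i][R_i]^T$ and any column $y$, the product $M(u) y = \sum_i ([R_i]^T y)[C_i]$ is zero-one on pairwise disjoint $[C_i]$-supports, so each coefficient $[R_i]^T y = |R_i \cap y|$ is $0$ or $1$, and at most one can be $1$: otherwise $M(u) y$ would be a column strictly containing the maximal column $[C_{i_0}]$, contradicting Cesari. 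Hence $M(u) y$ equals $0$ or a unique $[C_{i_0}]$.

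Now let $y$ be a maximal column. Writing $\alpha^T y = |\Sigma|^{-n}\sum_{u \in \Sigma^n}\alpha^T M(u) y$, the plan is to show that the contributions on the right-hand side are all equal to $\mcw$, at which point $\alpha^T y = \mcw$. Observation B identifies each min-rank summand as either $0$ or $\alpha^T [C_{i(u,y)}]$; Observation A applied to any fixed max-weight $y^*$ guarantees that at least one $[C_i]$ in the decomposition of every min-rank $u$ is of max weight. The main technical obstacle is to show that for our specific maximal $y$, each nonzero contribution is in fact $\mcw$. My plan is to realize $y$ itself as a $[C_i]$ in the Cesari decomposition of some min-rank word $u$: writing $y = M(w)[q]$, I would pick an idempotent min-rank word $u^*$ that incorporates $w$ as a factor (using finiteness of the transition monoid and strong connectivity) so that $y$ appears as some $[C_{i_0}(u^*)]$. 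The row-dual of Observation A (max-weight rows are closed under right multiplication), combined with the fact that each $[C_i(u^*)]$ is both maximal (by Cesari) and a fixed point of the idempotent $M(u^*)$, should then force $[C_{i_0}(u^*)]$, and hence $y$, to have max weight.
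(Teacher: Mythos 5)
Your easy direction (maximum weight $\Rightarrow$ maximal) and your Observation~A are both correct and coincide with the paper's route: the averaging argument via $\alpha^T\overline{A}^n=\alpha^T$ is exactly the proof of \autoref{lem:max-column}~(a). The hard direction, however, has a genuine gap. First, a circularity concern: Observation~B and your final step lean on the ``Moreover'' clause of \autoref{thm:Cesari} (that each $[C_i]$ is a \emph{maximal} column), but in the paper's self-contained development that clause is deduced \emph{from} \autoref{lem:max-notions}, so invoking it here is circular unless you import C\'esari's theorem wholesale as a black box. Second, and more seriously, the central step of your plan is asserted rather than proved: making $w$ a factor of an idempotent minimum-rank word $u^*=s\,w\,t$ does not make $y=M(w)[q]$ a column of $M(u^*)$ --- the columns of $M(swt)$ are the vectors $M(s)M(w)[t\cdot q']$, and there is no reason any of them equals $M(w)[q]$. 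The closing ``should then force'' sentence also leaves unexplained how maximum weight would propagate from the one column $[C_j(u^*)]$ that Observation~A certifies to the particular $[C_{i_0}(u^*)]$ you need; the natural tool for that (all nonzero columns of a minimum-rank matrix have equal weight) is itself downstream of the lemma being proved.

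The intended argument is much lighter and needs only your Observation~A together with strong connectivity; C\'esari's theorem is not used at all. Argue the contrapositive: suppose $[v\cdot q]$ is \emph{not} of maximum weight. Pick any column $[w_0\cdot t]$ of maximum weight and any $p\in w_0\cdot t$; by strong connectivity there is $u$ with a $u$-labelled path from $q$ to $p$, so $q\in uw_0\cdot t$, and by Observation~A the column $[uw_0\cdot t]$ still has maximum weight. Now $vuw_0\cdot t\supseteq v\cdot q$, since any $v$-path into $q$ extends to a $vuw_0$-path into $t$; and $[vuw_0\cdot t]$ has maximum weight (Observation~A again) while $[v\cdot q]$ does not, so the containment is strict. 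Hence $[v\cdot q]$ is not maximal. This is precisely \autoref{lem:non-max-column}~(a) in the appendix.
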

\begin{proof}
If a column is of maximum weight, it is clearly maximal.
Conversely, if a column is not of maximum weight, then \autoref{lem:non-max-column}~(a) says it is not maximal.
\end{proof}

An important property that we will need later is that the set of maximal columns is closed under left multiplication by matrices from the monoid, as stated in the following lemma. Note that this is no longer true without the completeness assumption, and is the key reason why the case of complete matrix monoids is easier to deal with.

\begin{lem} \label{lem:max-column-stable}
Let $v \in \Sigma^*$ and $q \in Q$ be such that $[v \cdot q]$ is a maximal column. Then $[u v \cdot q]$ is a maximal column for all $u \in \Sigma^*$.
\end{lem}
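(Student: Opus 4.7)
The plan is to use \autoref{lem:max-notions} to translate ``maximal column'' into ``column of weight $\mcw$'', and then exploit the left-eigenvector equation $\alpha^T \overline{A} = \alpha^T$ together with the strict positivity of $\alpha$ to conclude that maximum weight is preserved under left multiplication by any $M(a)$.

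First I would note that $[v \cdot q] = M(v)[q]$, so for any $u \in \Sigma^*$ one has $M(u)[v \cdot q] = [uv \cdot q]$, which is itself a column. By \autoref{lem:max-notions}, it suffices to prove that $\alpha^T [uv \cdot q] = \mcw$ for every $u$. By a straightforward induction on $|u|$, the nontrivial step is the single-letter extension: assuming $[v \cdot q]$ has weight $\mcw$, show that $[av \cdot q]$ has weight $\mcw$ for every $a \in \Sigma$.

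For this single-letter step, apply $\alpha^T \overline{A} = \alpha^T$ to the column $[v \cdot q]$:
\[
\mcw \;=\; \alpha^T [v \cdot q] \;=\; \alpha^T \overline{A}\, [v \cdot q] \;=\; \frac{1}{|\Sigma|} \sum_{a \in \Sigma} \alpha^T M(a)\, [v \cdot q] \;=\; \frac{1}{|\Sigma|} \sum_{a \in \Sigma} \alpha^T [av \cdot q].
\]
Each summand $\alpha^T [av \cdot q]$ is the weight of a column, hence is at most $\mcw$. Since their average equals $\mcw$, each summand must equal $\mcw$. Thus $[av \cdot q]$ is of maximum weight for every $a \in \Sigma$, and so is maximal by \autoref{lem:max-notions}. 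Iterating letter by letter (applying the same argument with $v$ replaced by $av$, then by $bav$, and so on) gives the claim for all $u \in \Sigma^*$.

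The main thing to be careful about is why an average of quantities bounded by $\mcw$ being equal to $\mcw$ forces each individual term to equal $\mcw$; this uses nothing more than standard equality-in-an-average reasoning, but implicitly also the fact that $\alpha^T [av \cdot q]$ is indeed bounded by $\mcw$, which requires $[av \cdot q]$ to be a genuine column (which it is, as noted above). No appeal to the completeness assumption is made beyond what is already embedded in the existence of the positive vectors $\alpha, \beta$. I do not foresee any substantial obstacle; the proof is essentially a clean consequence of the Perron-Frobenius setup from \autoref{subsec:weights}.
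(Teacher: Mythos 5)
Your proposal is correct and follows essentially the same route as the paper: the paper reduces the statement (via \autoref{lem:max-notions}) to showing that maximum weight is preserved under left multiplication by a single letter, and proves that single-letter step with exactly your averaging argument based on $\alpha^T\overline{A}=\alpha^T$ (phrased there as a contraposition: if $\alpha^T M(a)[v\cdot q]$ were smaller than $\mcw$, some $\alpha^T M(b)[v\cdot q]$ would have to exceed $\mcw$). No gaps.
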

\begin{proof}
Immediate from \autoref{lem:max-column}~(a) and \autoref{lem:max-notions}.
\end{proof}

The following lemma will be useful later to construct minimum rank matrices from maximal columns and rows.

\begin{lem} \label{lem:all-max-column}
Let $w \in \Sigma^*$ be such that all nonzero columns and rows in $M(w)$ are maximal. Then 
$w w$ is of minimum rank.
\end{lem}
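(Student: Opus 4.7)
The plan is to show that $M(ww)$ admits a Césari decomposition with exactly $r$ summands. This proceeds in three steps: (i) show $M(ww)$ inherits from $M(w)$ the property that all nonzero columns and rows are maximal; (ii) use a weight-counting argument to give a canonical decomposition of $M(ww)$; (iii) match the number of terms to $r$ via the permutation structure of $M(uwwu)$ for a minimum rank word $u$.

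For (i): if $[w \cdot q] \ne 0$, then $[w \cdot q]$ is maximal by hypothesis, so Lemma~\ref{lem:max-column-stable} with $v = w$ and outer word $w$ gives that $[ww \cdot q] = M(w)[w \cdot q]$ is maximal. The row case is symmetric. For (ii): write $M(w) = \sum_i [C_i][D_i]^T$, where the $[C_i]$ are the distinct nonzero maximal columns of $M(w)$ and $D_i := \{q : [w \cdot q] = [C_i]\}$ are pairwise disjoint. By (i), $M(w)[C_i] = \sum_{p \in C_i}[w \cdot p]$ is maximal of weight $\mcw$, while each nonzero summand $[w \cdot p]$ also has weight $\mcw$ by Lemma~\ref{lem:max-notions}; hence exactly one $p \in C_i$ yields a nonzero summand, and $M(w)[C_i]$ equals some maximal column $[C_{\phi(i)}]$ of $M(w)$. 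Grouping by $i^* = \phi(i)$ gives
\[
M(ww) = \sum_{i^* \in \operatorname{Im}(\phi)} [C_{i^*}]\,[D'_{i^*}]^T, \qquad D'_{i^*} := \bigcup_{i : \phi(i) = i^*} D_i,
\]
with the $D'_{i^*}$ pairwise disjoint; a symmetric argument on rows yields a dual decomposition $M(ww) = \sum_{j^*} [F'_{j^*}][E_{j^*}]^T$ with disjoint $F'_{j^*}$ and distinct maximal rows $[E_{j^*}]^T$.

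For (iii): fix a minimum rank word $u$ with Césari form $M(u) = \sum_{k=1}^r[C_k^u][R_k^u]^T$, and expand
\[
M(uwwu) = M(u)M(ww)M(u) = \sum_{k_1, k_2} W_{k_1, k_2}\,[C_{k_1}^u][R_{k_2}^u]^T,
\]
with $W_{k_1, k_2} := [R_{k_1}^u]^T M(ww)[C_{k_2}^u] \in \{0,1\}$. Since the $[C_k^u]$ and $[R_k^u]^T$ are disjoint (hence linearly independent), $\rkr(W) = \rkr(M(uwwu)) = r$. Moreover, each nonzero column of $M(uwwu)$ is a zero-one vector in $\operatorname{span}\{[C_1^u], \ldots, [C_r^u]\}$ of weight $\mcw$; since the $[C_k^u]$ are disjoint each of weight $\mcw$, it must equal a single $[C_k^u]$, forcing each column of $W$ to have exactly one $1$. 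By the symmetric row argument so must each row, hence $W$ is an $r \times r$ permutation matrix. Tracing this permutation back through the decompositions of (ii), I conclude that $|\operatorname{Im}(\phi)| = r$, the $[C_{i^*}]$ are pairwise disjoint, and the $[D'_{i^*}]^T$ are maximal rows; hence $M(ww)$ is in Césari form with $r$ terms, so $\rku(M(ww)) = r$.

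The main obstacle I expect is the final tracing step, where one must combine the permutation structure of $W$ with the two dual decompositions of $M(ww)$ from (ii) (and the unambiguity constraint $|p \cdot w \cap w \cdot q| \le 1$) to rigorously match the maximal columns and rows of $M(ww)$ to those of $M(u)$.
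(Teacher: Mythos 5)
Your steps (i) and (ii) are correct and already give the useful inequality $\rku(M(ww)) \le |\operatorname{Im}(\phi)|$, where $|\operatorname{Im}(\phi)|$ is the number of distinct nonzero columns of $M(ww)$ (your weight argument showing that $M(w)[C_i]$ collapses to a single maximal column is sound). The problem is step (iii), and you have correctly identified it as the obstacle: it is a genuine gap, not a routine verification. The difficulty is that the permutation matrix $W$ carries no information specific to $ww$. For \emph{any} word $v$, the matrix $[R_{k_1}^u]^T M(v) [C_{k_2}^u]$ is an $r \times r$ permutation matrix: $uvu$ contains $u$, hence is of minimum rank, so by \autoref{thm:Cesari} its nonzero columns and rows are maximal of weight $\mcw$ and $\mrw$, and your own weight-counting argument applies verbatim. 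Since the conclusion of step (iii) holds for all words, it cannot by itself force $|\operatorname{Im}(\phi)| \le r$. Concretely, the step that is missing is ruling out that left-multiplication by $M(u)$ \emph{identifies} two distinct maximal columns $[C_{i^*}] \ne [C_{j^*}]$ of $M(ww)$: by your own argument $M(u)[C_{i^*}] = [u \cdot p]$ for the unique $p \in C_{i^*}$ with $u \cdot p \ne \emptyset$, and nothing in your proposal prevents the corresponding states $p, p'$ from lying in the same row-class $R_k^u$ of $M(u)$, which would make $W$ blind to the distinction. Any argument that only looks at $M(uwwu)$ through the lens of $W$ cannot detect this.

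The fix has to use the row-maximality hypothesis on $M(w)$ in an essential way (note that in your write-up that hypothesis is only used to produce the dual decomposition, which is then never exploited). The paper's proof does exactly this: it writes $M(ww) = \sum_{q \in S} [w \cdot q][q \cdot w]^T$ with $S = \{q \mid w \cdot q \ne \emptyset \ne q \cdot w\}$, so $\rku(M(ww)) \le |S|$, and then shows that the sets $uw \cdot q$ for $q \in S$ are pairwise \emph{disjoint}: if some state $p$ had $uw$-paths to two states $q, q' \in S$, then $[p \cdot uww]^T \ge [q \cdot w]^T + [q' \cdot w]^T$ (the two summands being disjoint by unambiguity and nonzero since $q, q' \in S$), strictly dominating the maximal row $[q \cdot w]^T$ --- a contradiction. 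Disjointness makes the columns $[uw \cdot q]$, $q \in S$, into $|S|$ pairwise distinct maximal (hence nonzero) columns of $M(uww)$, and since $uww$ is of minimum rank, \autoref{thm:Cesari} caps that number at $r$, giving $|S| \le r$ directly. If you want to salvage your step (iii), you would need to insert an analogous disjointness argument for the vectors $M(u)[C_{i^*}]$; as written, the ``tracing back'' is an assertion of exactly the point that needs proof.
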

\begin{proof}
Define $S := \{q \in Q \mid w \cdot q \ne \emptyset \ne q \cdot w\}$.
Since
\[
M(w w) \ = \ M(w) I M(w) \ = \ M(w) \big( \sum_{q \in Q} [q] [q]^T \big) M(w) \ = \ \sum_{q \in Q} [w \cdot q] [q \cdot w]^T \ = \ \sum_{q \in S} [w \cdot q] [q \cdot w]^T\,,
\]
we have $\rku M(w w) \le |S|$. 

Let $u \in \Sigma^*$ be of minimum rank.
Suppose there were a state $p \in Q$, two states $q, q' \in S$ and $u w$-labelled paths from $p$ to~$q$ and also from $p$ to~$q'$.
Then $p \cdot u w w \supseteq q \cdot w \cup q' \cdot w$, contradicting the maximality of the row $[q \cdot w]^T$.
Therefore, the sets $u w \cdot q$, where $q \in S$, are pairwise disjoint.
Since the columns $[w \cdot q]$ for $q \in S$ are maximal, by \autoref{lem:max-column-stable} the columns $[u w \cdot q]$ are also maximal and in particular nonzero.
Moreover, these columns appear in $M(u w w)$, as $q \cdot w \ne \emptyset$ for all $q \in S$.
Therefore, $|S| \le \rku M(u w w)$.
Since $u$ is of minimum rank, we have $\rku M(u w w) \le r$.
By combining all inequalities we obtain that $\rku M(w w) \le r$.
Hence, $w w$ is of minimum rank.
\end{proof}

We can now complete the proof of \autoref{thm:Cesari}.
\begin{proof}[Proof of \autoref{thm:Cesari}]
The first statement is \autoref{prop:Cesari-first}.
We prove the second statement only for columns, as the proof for rows is analogous.
Towards a contradiction, suppose that~$[C_i]$ is a non-maximal column; without loss of generality, say $i=r$.
By \autoref{lem:max-notions}, $[C_i]$ is not of maximum weight.
By \autoref{lem:non-max-column}~(b), there is $v \in \Sigma^*$ such that $M(v) [C_r] = 0$.
Then 
\[
 M(v u) \ = \ \sum_{i=1}^r M(v) [C_i] [R_i]^T \ = \ \sum_{i=1}^{r-1} M(v) [C_i] [R_i]^T \ = \ C R^T\,,
\]
where 
$C, R \in \{0,1\}^{Q \times (r-1)}$ are the matrices whose $i$th columns are $M(v) [C_i]$ and $[R_i]$, respectively.
Hence, $\rku(M(v u)) \le r-1$, contradicting the definition of~$r$.
\end{proof}

\subsection{Weight preservation property and minimum rank}\label{subsec:weight-preserv}

Every word $w$ of minimum rank in a total DFA induces a partition of the state set into subsets of states mapped by $w$ to the same state (that is, into columns). It was observed by Friedman in \cite{Friedman1990} that all sets of such a partition have the same weight. This observation has many applications to variations of the road colouring problem \cite{Friedman1990,Kari2001,Gusev2016,Kari2019}. Moreover, it was proved in \cite[Theorem 6]{Gusev2016}, again in connection with road colouring, that for every $w$ the weights of all columns in $M(w)$ sum up to $1$ (assuming $\beta = [Q]$ as suggested previously). 
This can be seen as a weight preservation property: the total weight of columns in the matrix of a word is preserved under multiplication by any matrix from the monoid.
As a result we get that $1 = r \cdot \mcw$, and hence~$r = \frac{1}{\mcw}$. The proof of the weight preservation property for total DFAs is quite simple and relies on the fact that for a state $q$ and a word~$w$ the set~$q \cdot w$ is always a singleton. For complete UFAs this is no longer true; in particular, $q \cdot w$ can be the empty set, thus permanently ``losing'' some weight collected in $q$. Hence, a more refined property is required. The following result provides such a property.
It also turns out that its proof requires more sophisticated techniques than in the total DFA case.

\begin{thm} \label{thm:rank-weight}
For all $w \in \Sigma^*$ we have $1 = \alpha^T M(w) \beta = r \cdot \mcw \cdot \mrw$.
\end{thm}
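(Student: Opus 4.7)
The plan is to show that the scalar $f(w) := \alpha^T M(w) \beta$ is in fact constant on $\Sigma^*$, and then evaluate this constant in two ways: at the empty word it equals $\alpha^T \beta = 1$, and on any minimum rank word it equals $r \cdot \mcw \cdot \mrw$.

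Two preparatory identities will drive the argument. First, from $\overline{A}\beta = \beta$ I get $\sum_{v \in \Sigma^n} M(v)\beta = |\Sigma|^n \overline{A}^n \beta = |\Sigma|^n \beta$; multiplying on the left by $M(w)$ and by $\alpha^T$ yields the \emph{averaging identity}
\[
\sum_{v \in \Sigma^n} f(wv) \ = \ |\Sigma|^n f(w)
\]
for every $w \in \Sigma^*$ and $n \ge 0$. Second, for any word $u$ of minimum rank, \autoref{thm:Cesari} gives $M(u) = \sum_{i=1}^r [C_i][R_i]^T$ with each $[C_i]$ a maximal column and each $[R_i]^T$ a maximal row; combining with \autoref{lem:max-notions} (so that $\alpha^T [C_i] = \mcw$ and $[R_i]^T \beta = \mrw$) yields $f(u) = r \cdot \mcw \cdot \mrw$.

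For the main step I exploit that $M(\Sigma^*) \subseteq \{0,1\}^{Q \times Q}$ is finite, so $f$ takes only finitely many values and attains its maximum at some $w^* \in \Sigma^*$. The averaging identity specialised to $w^*$, together with $f(w^*v) \le f(w^*)$ for every summand, forces $f(w^*v) = f(w^*)$ for \emph{every} $v \in \Sigma^n$. In particular, taking $v$ to be any minimum rank word makes $w^* v$ itself of minimum rank (minimum rank words form a two-sided ideal of $\Sigma^*$), whence the second preparatory identity gives $f(w^*) = f(w^* v) = r \cdot \mcw \cdot \mrw$. The mirror argument applied at a word where $f$ attains its minimum gives $\min f = r \cdot \mcw \cdot \mrw$ as well, so $f \equiv r \cdot \mcw \cdot \mrw$. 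Evaluating at $\varepsilon$ yields $1 = \alpha^T \beta = r \cdot \mcw \cdot \mrw$, proving both equalities simultaneously.

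The main obstacle, I expect, is realising that the averaging identity combined with finiteness of $M(\Sigma^*)$ forces $f$ to be globally constant rather than merely constant on minimum rank words. An asymptotic route via ``most long extensions of $w$ eventually fall into the minimum-rank ideal'' would also work, but the extremum argument above avoids any limit analysis and delivers both the value $r\cdot\mcw\cdot\mrw = 1$ and the pointwise equality $f(w) = 1$ in one stroke.
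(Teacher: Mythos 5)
Your proposal is correct, and the second half takes a genuinely different route from the paper. The evaluation on minimum-rank words is identical to the paper's \autoref{lem:minrank-weight} (C\'esari's decomposition plus \autoref{lem:max-notions}), but where the paper establishes constancy of $f(w)=\alpha^T M(w)\beta$ via \autoref{prop:compact} --- an asymptotic density argument showing that the fraction of length-$k$ words avoiding a fixed minimum-rank word $u$ as a factor tends to $0$, so that the average $x^T M(w)\overline{A}^{m|u|}\beta$ converges to the common value on the minimum-rank ideal --- you instead use an extremal argument: the averaging identity $\sum_{v\in\Sigma^n} f(wv)=|\Sigma|^n f(w)$ together with finiteness of $M(\Sigma^*)$ shows that every extension of a maximiser of $f$ is again a maximiser, and since some extension lands in the minimum-rank ideal, $\max f=r\cdot\mcw\cdot\mrw$; the mirror argument for the minimum pins $f$ down as constant. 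Your route is more elementary, avoiding all limit analysis, at the price of needing the maximum of $f$ to be attained (guaranteed here because the monoid consists of zero-one matrices, so $f$ has finite range); the paper's limit argument needs only boundedness of the values and so would survive in settings where the value set is infinite. Note also that the paper states \autoref{prop:compact} for an arbitrary vector $x$ in place of $\alpha$ and reuses it later (in the proof of \autoref{thm:algebraic}(a)); your averaging identity and extremal argument go through verbatim for arbitrary $x$, so nothing is lost by your approach if one wanted to recover that more general statement.
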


Similarly to the total DFA case, this result allows us to reduce computing~$r$ to computing $\mcw$ and~$\mrw$, which we will use later in our algorithms. Recall that we have defined $\alpha^T$ and~$\beta$ so that $\alpha^T \beta = 1$.

Towards a proof of \autoref{thm:rank-weight} we first prove the following lemma.

\begin{lem} \label{lem:minrank-weight}
Let $u \in \Sigma^*$ be of minimum rank.
Then $\alpha^T M(u) \beta = r \cdot \mcw \cdot \mrw$.
\end{lem}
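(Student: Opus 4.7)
The plan is to apply C\'esari's decomposition directly and then exploit the characterization of maximal columns and rows in terms of weight.

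Since $u \in \Sigma^*$ has minimum rank, \autoref{thm:Cesari} gives a decomposition
\[
M(u) \;=\; \sum_{i=1}^r [C_i][R_i]^T,
\]
where the $C_i \subseteq Q$ are pairwise disjoint, the $R_i \subseteq Q$ are pairwise disjoint, every $[C_i]$ is a maximal column of $M(u)$, and every $[R_i]^T$ is a maximal row of $M(u)$. I would then multiply on the left by $\alpha^T$ and on the right by $\beta$, using bilinearity to obtain
\[
\alpha^T M(u) \beta \;=\; \sum_{i=1}^r (\alpha^T [C_i])\,([R_i]^T \beta).
\]

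By definition, $\alpha^T [C_i]$ is the weight of the column $[C_i]$ and $[R_i]^T \beta$ is the weight of the row $[R_i]^T$. Since each $[C_i]$ is a maximal column and each $[R_i]^T$ is a maximal row, \autoref{lem:max-notions} tells us that $\alpha^T [C_i] = \mcw$ and $[R_i]^T \beta = \mrw$ for every $i$. Substituting these equalities into the sum immediately yields $\alpha^T M(u) \beta = r \cdot \mcw \cdot \mrw$, which is exactly the claim.

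I do not anticipate any real obstacle here: the lemma is essentially a bookkeeping consequence of the two earlier structural results (\autoref{thm:Cesari} and \autoref{lem:max-notions}). The only minor subtlety is that \autoref{thm:Cesari} is stated for minimum \emph{unambiguous} rank, but \autoref{cor:ranks-equal} has already identified $\rku(M) = \rkr(M) = r$, so ``minimum rank'' as used in the statement is unambiguous and the decomposition applies verbatim.
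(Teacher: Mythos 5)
Your proof is correct and follows exactly the same route as the paper: apply C\'esari's decomposition, expand $\alpha^T M(u)\beta$ by bilinearity, and use \autoref{lem:max-notions} to replace each column and row weight by $\mcw$ and $\mrw$. No gaps.
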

\begin{proof}
Let $M(u) = \sum_{i=1}^r [C_i] [R_i]^T$ be as in \autoref{thm:Cesari}.
Each $[C_i]$ and each $[R_i]$ is of maximum weight.
Thus,
\[
 \alpha^T M(u) \beta \ = \ \sum_{i=1}^r \alpha^T [C_i] [R_i]^T \beta \ = \ \sum_{i=1}^r \mcw \cdot \mrw \ = \ r \cdot \mcw \cdot \mrw\,. \qedhere
\]
\end{proof}

We also need the following proposition.

\begin{prop}\label{prop:compact}
Let $x \in \mathbb{R}^{Q}$ and $c \in \mathbb{R}$ be such that $x^T M(u) \beta = c$ holds for all $u \in \Sigma^*$ of minimum rank.
Then $x^T M(w) \beta = c$ holds for all $w \in \Sigma^*$.
\end{prop}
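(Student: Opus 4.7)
The plan is to use the identity $\overline{A}\beta = \beta$ iteratively to express $x^T M(w)\beta$ as an average over random words, exploiting that asymptotically almost every such product is of minimum rank (so almost every summand equals $c$). Concretely, iterating $\overline{A}\beta = \beta$ yields $\overline{A}^N\beta = \beta$ for every $N$, so
\[
x^T M(w)\beta \;=\; x^T M(w)\,\overline{A}^N\beta \;=\; \frac{1}{|\Sigma|^N}\sum_{v \in \Sigma^N} x^T M(wv)\beta.
\]
Splitting the sum by whether $wv$ is of minimum rank (contributing exactly $c$ by hypothesis) versus not (contributing at most a fixed constant $K$ in absolute value, which is finite because $M(\Sigma^*)$ is finite), and writing $p_N$ for the fraction of $v \in \Sigma^N$ for which $wv$ is of minimum rank, this gives
\[
\bigl|x^T M(w)\beta - c\,p_N\bigr| \;\le\; (1 - p_N)\,K,
\]
reducing the proposition to the claim $p_N \to 1$ as $N \to \infty$.

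The main technical point is thus the convergence $p_N \to 1$. As noted earlier in the excerpt, the set of minimum-rank words is an ideal (closed under prepending and appending letters on either side), so whenever $v$ itself is of minimum rank, so is $wv$. Fixing any minimum-rank word $u^* \in \Sigma^*$ with $|u^*| \ge 1$ (which exists as long as $\Sigma \ne \emptyset$), any $v$ containing $u^*$ as a factor is of minimum rank. A standard block-independence argument---chopping $v \in \Sigma^N$ into $\lfloor N/|u^*| \rfloor$ disjoint consecutive blocks of length $|u^*|$ and observing that each block equals $u^*$ independently with probability $|\Sigma|^{-|u^*|}$---bounds the probability that $v$ avoids $u^*$ as a factor by $(1 - |\Sigma|^{-|u^*|})^{\lfloor N/|u^*| \rfloor}$, which tends to zero as $N \to \infty$. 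Hence $p_N \to 1$, and passing to the limit in the displayed bound gives $x^T M(w)\beta = c$.
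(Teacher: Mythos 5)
Your proof is correct and follows essentially the same route as the paper's: expand $\beta = \overline{A}^N\beta$ to average $x^T M(wv)\beta$ over random suffixes $v$, split according to whether $wv$ has minimum rank, and show the exceptional fraction vanishes via the same disjoint-block bound $(1-|\Sigma|^{-|u^*|})^{\lfloor N/|u^*|\rfloor}$ on avoiding a fixed minimum-rank factor. The only differences are cosmetic (arbitrary $N$ versus multiples of $|u|$, and the form of the final error bound).
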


\begin{proof}
Let $w \in \Sigma^*$.
Let $u \in \Sigma^*$ be of minimum rank.
Recall that every word that contains~$u$ as a factor is of minimum rank.
For every $k \ge 0$, partition $\Sigma^k$ into sets $W_0(k)$ and $W_1(k)$ so that $W_0(k) = \Sigma^k \cap (\Sigma^* u \Sigma^*)$ and $W_1(k) = \Sigma^k \setminus (\Sigma^* u \Sigma^*)$; i.e., $W_0(k), W_1(k)$ are the sets of length-$k$ words that do or do not contain~$u$ as a factor, respectively.
For all $v \in W_0(k)$ both~$v$ and $w v$ are of minimum rank.
Thus, we have $x^T M(w v) \beta = c$ for all $v \in W_0(k)$.
It follows that
\begin{equation} \label{eq:avg-W0}
\sum_{v \in W_0(k)} \frac{x^T M(w v) \beta}{|W_0(k)|} \ = \ c \quad \text{for all $k \ge 0$.} 
\end{equation}
Let $d > 0$ be such that $|x^T A \beta| \le d$ for all $A \in \{0,1\}^{Q \times Q}$.
Then we have
\begin{equation} \label{eq:avg-W1}
\sum_{v \in W_1(k)} \frac{|x^T M(w v) \beta|}{|W_1(k)|} \ \le \ d \quad \text{for all $k \ge 0$.} 
\end{equation}

Let $m \ge 0$.
Define $p_1(m) := \frac{|W_1(m |u|)|}{|\Sigma|^{m |u|}}$.
We can view~$p_1(m)$ as the probability of picking a word in~$W_1(m |u|)$ when a word of length~$m |u|$ is picked uniformly at random.
We have
$p_1(m) \le \left( 1 - \frac{1}{|\Sigma|^{|u|}} \right)^m$, as in order to avoid $u$ as a factor, it has to be avoided in each of the $m$ consecutive blocks of length~$|u|$.
Thus, $\lim_{m \to \infty} p_1(m) = 0$.
We have
\begin{align*}
x^T M(w) \beta
 \ &= \ x^T M(w) \overline{A} \beta \ = \ x^T M(w) \overline{A}^{m |u|} \beta \ = \ \frac{1}{|\Sigma|^{m |u|}} \sum_{v \in \Sigma^{m |u|}} x^T M(w v) \beta \\
 \ &= \ \frac{|W_0(m |u|)|}{|\Sigma|^{m |u|}} \sum_{v \in W_0(m |u|)} \frac{x^T M(w v) \beta}{|W_0(m |u|)|} \ + \ \frac{|W_1(m |u|)|}{|\Sigma|^{m |u|}} \sum_{v \in W_1(m |u|)} \frac{x^T M(w v) \beta}{|W_1(m |u|)|} \\
 \ &= \ (1-p_1(m)) \cdot c + p_1(m) \cdot \sum_{v \in W_1(m |u|)} \frac{x^T M(w v) \beta}{|W_1(m |u|)|} \qquad (\text{by \autoref{eq:avg-W0}).}
\end{align*}
With~\autoref{eq:avg-W1} it follows that
\[
 |x^T M(w) \beta - c| \ \le \ p_1(m) ( |c| + d)\,.
\]
Since this holds for all $m \ge 0$ and $\lim_{m \to \infty} p_1(m) = 0$, we conclude that $x^T M(w) \beta = c$.
\end{proof}

Now we prove \autoref{thm:rank-weight}.

\begin{proof}[Proof of \autoref{thm:rank-weight}]
It follows from \autoref{lem:minrank-weight} and \autoref{prop:compact} that
\[
 \alpha^T M(w) \beta \ = \ r \cdot \mcw \cdot \mrw \qquad \text{for all $w \in \Sigma^*$.}
\]
With $w = \varepsilon$ we obtain $1 = \alpha^T \beta = \alpha^T M(\varepsilon) \beta = r \cdot \mcw \cdot \mrw$, as required.
\end{proof}

\subsection{Maximal pseudo-columns}\label{subsec:pseudocolumns}

In this subsection, we define maximal pseudo-columns, which are vectors that can be seen as a relaxation of the notion of maximal columns. We show that the weight of a maximal pseudo-column is equal to the weight of a maximal column, and a maximal pseudo-column is a solution of a system of linear equations, and thus can be computed efficiently. By invoking \autoref{thm:rank-weight}, this will allow us to efficiently compute $r$.

Denote by $\MCol \subseteq \{0,1\}^Q$ the set of maximal columns.
By \autoref{thm:Cesari} (bearing in mind also \autoref{cor:ranks-equal}), the vector space spanned by all maximum columns, $\spn{\MCol}$, is at least $r$-dimensional:
\begin{prop} \label{prop:r<=MC}
We have $r \le \dim \spn{\MCol}$.
\end{prop}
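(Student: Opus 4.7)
The plan is to invoke \autoref{thm:Cesari} directly. Let $u \in \Sigma^*$ be a word of minimum unambiguous rank. Then \autoref{thm:Cesari} provides pairwise disjoint sets $C_1, \ldots, C_r \subseteq Q$ and pairwise disjoint sets $R_1, \ldots, R_r \subseteq Q$ such that $M(u) = \sum_{i=1}^r [C_i] [R_i]^T$, where each $[C_i]$ is a maximal column, i.e., $[C_i] \in \MCol$.

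The key observation is then that the vectors $[C_1], \ldots, [C_r]$ are linearly independent. Indeed, since the sets $C_i$ are pairwise disjoint and nonempty (nonemptiness follows since each $[C_i]$ is maximal, hence of maximum weight $\mcw > 0$, and in particular nonzero), the vectors $[C_i] \in \{0,1\}^Q$ have pairwise disjoint supports. A collection of nonzero vectors with pairwise disjoint supports is automatically linearly independent: if $\sum_{i=1}^r \lambda_i [C_i] = 0$, then evaluating at any coordinate $q \in C_j$ yields $\lambda_j = 0$.

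Hence $\spn{\MCol}$ contains the $r$ linearly independent vectors $[C_1], \ldots, [C_r]$, giving $\dim \spn{\MCol} \ge r$. There is no real obstacle here; the statement is essentially an immediate consequence of the structural decomposition guaranteed by C\'esari's theorem together with the trivial linear independence of zero-one vectors with disjoint supports.
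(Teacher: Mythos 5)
Your proof is correct and follows exactly the paper's route: the paper derives \autoref{prop:r<=MC} directly from \autoref{thm:Cesari}, observing that the $r$ maximal columns $[C_1],\ldots,[C_r]$ of a minimum-rank matrix are nonzero with pairwise disjoint supports, hence linearly independent in $\spn{\MCol}$. Your extra justification of nonemptiness via the positive weight $\mcw$ is a fine (if slightly more than necessary) way to make explicit what the paper leaves implicit.
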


One might hypothesise that $r = \dim \spn{\MCol}$ or even that all minimum-rank matrices have the same $r$ nonzero (hence, maximum) columns.
The following example shows that neither is the case in general, not even for total DFAs.

\begin{exa} \label{ex:snap1}
Consider the total DFA with $\Sigma = \{a,b\}$ and
\begin{figure}[H]
    \centering
    \begin{subfigure}[c]{0.65\textwidth} \centering
\[
M(a) = \begin{pmatrix} 1 & 0 & 0 & 0 \\ 0 & 0 & 1 & 0 \\ 0 & 0 & 1 & 0 \\ 1 & 0 & 0 & 0 \end{pmatrix} \quad
M(b) = \begin{pmatrix} 0 & 1 & 0 & 0 \\ 0 & 1 & 0 & 0 \\ 0 & 0 & 0 & 1 \\ 0 & 0 & 0 & 1 \end{pmatrix}\]
\end{subfigure}
\begin{subfigure}[c]{0.30\textwidth} \centering
\begin{tikzpicture}[node distance = 2cm]
\tikzset{every state/.style={inner sep=1pt,minimum size=1.5em}}
\node[state] (1) at (0,0) {$1$};
\node[state] (2) at (1,0) {$2$};
\node[state] (3) at (1,-1) {$3$};
\node[state] (4) at (0,-1) {$4$};
\path [-stealth, thick] (1) edge [loop,out=155,in=115,looseness=8] node[left] {$a$} (1)
(2) edge [loop,out=65,in=25,looseness=8] node[right] {$b$} (2)
(3) edge [loop,out=-25,in=-65,looseness=8] node[right] {$a$} (3)
 (4) edge [loop,out=-115,in=-155,looseness=8] node[left] {$b$} (4)
 (1) edge node[above] {$b$} (2)
(2) edge node[right] {$a$} (3)
(3) edge node[below] {$b$} (4)
(4) edge node[left] (b) {$a$} (1);
\end{tikzpicture}
\end{subfigure}
\end{figure}

By symmetry, we have $\alpha^T = \begin{pmatrix} \frac14 & \frac14 & \frac14 & \frac14 \end{pmatrix}$.
Since no word maps states $1$ and~$3$ to the same state, we have $r=2$; i.e., $a$ and $b$ are both minimum rank.
Further, $\MCol$ consists exactly of the four nonzero columns in $M(a)$ and~$M(b)$. 
Their span $\spn{\MCol}$ is $3$-dimensional, as $\begin{pmatrix}1 & -1 & 1 & -1\end{pmatrix}$ is orthogonal to each maximum column.
Thus, $r = 2 < 3 = \dim \spn{\MCol} < 4 = |\MCol|$.
\end{exa}

Define the vector space $U := \spn{\alpha^T M(w) - \alpha^T \mid w \in \Sigma^*}$. Intuitively, it is the set of all differences of weight distributions over the states before and after a word is applied.
Notice that for all $w_1, w_2 \in \Sigma^*$ we have $\alpha^T M(w_1) - \alpha^T M(w_2) \in U$.
Later (see the proof of \autoref{lem:U-in-NC} below) we show that $U$ is closed under post-multiplication with $M(a)$ for all $a \in \Sigma$.
Such ``forward spaces'' play an important role in weighted automata; see, e.g.,~\cite{KieferWeighted20}.
Denote the orthogonal complement of~$U$ by~$U^\bot$; i.e., $U^\bot = \{y \in \R^Q \mid \forall\,w \in \Sigma^*: \ \alpha^T M(w) y = \alpha^T y\}$. Intuitively, it is the set of vectors whose weight does not change under pre-multiplication with $M(w)$ for any $w$ (where by the weight of a vector $y$ we understand $\alpha^T y$).
Clearly, $\dim U + \dim U^\bot = |Q|$.
The following proposition follows immediately from \autoref{lem:max-column-stable}.

\begin{prop} \label{prop:MC<=V}
We have $\MCol \subseteq U^\bot$.
\end{prop}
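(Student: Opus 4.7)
The plan is to combine \autoref{lem:max-notions} and \autoref{lem:max-column-stable} in a completely direct way. Unpacking $U^\bot$, the claim $\MCol \subseteq U^\bot$ says precisely that every maximal column $y$ satisfies $\alpha^T M(w) y = \alpha^T y$ for all $w \in \Sigma^*$. So the task reduces to showing that the weight of a maximal column is invariant under pre-multiplication by any matrix from the monoid.

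I would start from the fact that, by definition of a column, any $y \in \MCol$ can be written as $y = [v \cdot q]$ for some $v \in \Sigma^*$ and $q \in Q$ (allowing $v = \varepsilon$), and by \autoref{lem:max-notions} its weight is the maximum, $\alpha^T y = \mcw$. Then, for an arbitrary $w \in \Sigma^*$, I would apply \autoref{lem:max-column-stable} with the roles $u := w$ and $v := v$ to conclude that $M(w) y = [w v \cdot q]$ is itself a maximal column. Invoking \autoref{lem:max-notions} once more, its weight is again $\mcw$, so
\[
 \alpha^T M(w) y \;=\; \mcw \;=\; \alpha^T y,
\]
which is exactly the defining condition for $y \in U^\bot$. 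Since $w$ was arbitrary and $y \in \MCol$ was arbitrary, this gives $\MCol \subseteq U^\bot$.

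There is essentially no obstacle here beyond recognising which earlier results to plug in: \autoref{lem:max-column-stable} is the nontrivial ingredient (it is precisely the place where completeness of $M$ is used, as noted just before that lemma), and once it is available, weight-preservation for maximal columns is immediate. This is consistent with the paper's own remark that the proposition follows immediately from \autoref{lem:max-column-stable}.
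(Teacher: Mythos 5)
Your proof is correct and matches the paper's intended argument: the paper simply states that the proposition follows immediately from \autoref{lem:max-column-stable}, and your write-up fills in exactly the expected details (using \autoref{lem:max-notions} to translate ``maximal'' into ``of maximum weight'' so that $\alpha^T M(w) y = \mcw = \alpha^T y$ for all $w$). No gaps.
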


It follows that $\spn{\MCol}$ is a subspace of~$U^\bot$.
With \autoref{prop:r<=MC}, we have $r \le \dim \spn{\MCol} \le \dim U^\bot$.
One might hypothesise that $\spn{\MCol} = U^\bot$.
The following example  shows that this is not the case in general, not even for total DFAs.

\begin{exa}\label{ex-appendix}
Consider the DFA with $\Sigma = \{a,b,c\}$ and
\[
M(a) = \begin{pmatrix} 1 & 0 & 0 & 0 \\ 1 & 0 & 0 & 0 \\ 0 & 0 & 1 & 0 \\ 0 & 0 & 1 & 0 \end{pmatrix}, \, 
M(b) = \begin{pmatrix} 0 & 1 & 0 & 0 \\ 0 & 1 & 0 & 0 \\ 0 & 0 & 0 & 1 \\ 0 & 0 & 0 & 1 \end{pmatrix}, \, 
M(c) = \begin{pmatrix} 0 & 0 & 1 & 0 \\ 0 & 0 & 0 & 1 \\ 1 & 0 & 0 & 0 \\ 0 & 1 & 0 & 0 \end{pmatrix}\,.
\]
\begin{center}
\begin{tikzpicture}[node distance = 2cm]
\tikzset{every state/.style={inner sep=1pt,minimum size=1.5em}}
\node[state] (1) at (0,0) {$1$};
\node[state] (2) at (1.5,0) {$2$};
\node[state] (3) at (0,-1.5) {$3$};
\node[state] (4) at (1.5,-1.5) {$4$};

\path [-stealth, thick] (1) edge [loop,out=155,in=115,looseness=8] node[left] {$a$} (1)
(2) edge [loop,out=65,in=25,looseness=8] node[right] {$b$} (2)
 (3) edge [loop,out=-115,in=-155,looseness=8] node[left] {$a$} (3)
 (4) edge [loop,out=-25,in=-65,looseness=8] node[right] {$b$} (4)
 (2) edge[bend left=15] node[below] {$a$} (1)
 (1) edge[bend left=15] node[above] {$b$} (2)
 (4) edge[bend left=15] node[below] {$a$} (3)
(3) edge[bend left=15] node[above] {$b$} (4)
 (1) edge[bend left=15] node[right] {$c$} (3)
(3) edge[bend left=15] node[left] {$c$} (1)
(2) edge[bend left=15] node[right] {$c$} (4)
(4) edge[bend left=15] node[left] (b) {$c$} (2);
\end{tikzpicture}
\end{center}

We have $\Mer(1) = \Mer(2) = \{1,2\}$ and $\Mer(3) = \Mer(4) = \{3,4\}$.
Thus, $\MCol = \{\begin{pmatrix}1 & 1 & 0 & 0\end{pmatrix}^T,\begin{pmatrix}0 & 0 & 1 & 1\end{pmatrix}^T\}$.
Hence, $\dim \spn{\MCol} = 2$.

On the other hand, by symmetry we have $\alpha^T = \begin{pmatrix} \frac14 & \frac14 & \frac14 & \frac14 \end{pmatrix}$.
For any $w \in \Sigma^*$,
\[
\alpha^T M(w) \ = \ \begin{cases}
 \begin{pmatrix} \frac14 & \frac14 & \frac14 & \frac14 \end{pmatrix} & \text{if } w \in \{c\}^* \\
 \begin{pmatrix} \frac12 & 0 & \frac12 & 0 \end{pmatrix} & \text{if the last non-$c$ letter in $w$ is $a$} \\
 \begin{pmatrix} 0 & \frac12 & 0 & \frac12 \end{pmatrix} & \text{if the last non-$c$ letter in $w$ is $b$\,.}
\end{cases}
\]
It follows that $U = \spn{\begin{pmatrix} 1 & -1 & 1 & -1 \end{pmatrix}}$.
Thus, $\dim U^\bot = 4 - 1 = 3 > 2 = \dim \spn{\MCol}$, and $\spn{\MCol}$ is a strict subspace of~$U^\bot$.
For example, the vector $\begin{pmatrix} 1 & 0 & 0 & 1 \end{pmatrix}^T$ is in $U^\bot$ but not in $\spn{\MCol}$.
\end{exa}

Although the dimension of~$U^\bot$ does not generally equal~$r$, the vector space~$U^\bot$ turns out useful for computing~$r$.
Recall that, by \autoref{thm:rank-weight}, we can obtain~$r$ by computing~$\mcw$ (and, symmetrically,~$\mrw$). Recall that we define for $q \in Q$
\[
 \Mer(q) \ := \ \{q' \in Q \mid \exists\,S \supseteq \{q,q'\} \text{ such that } [S] \text{ is a column}\}\,.
\]
We will need the following lemma which is easy to prove.

\begin{lem}\label{lem:max-column-body}
Let $v \in \Sigma^*$ and $q \in Q$ be such that $[v \cdot q]$ is a maximal column. Then $v \cdot q' = \emptyset$ holds for all $q' \in \Mer(q) \setminus \{q\}$.
\end{lem}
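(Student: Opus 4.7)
The plan is to proceed by contradiction: assume there exists $q' \in \Mer(q) \setminus \{q\}$ with $v \cdot q' \ne \emptyset$, and exhibit a column strictly dominating $[v \cdot q]$, contradicting maximality.

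First I would unfold the hypothesis $q' \in \Mer(q)$: by definition there exists $u \in \Sigma^*$ and $s \in Q$ such that $\{q,q'\} \subseteq u \cdot s$, i.e., both $q$ and $q'$ admit $u$-labelled paths to $s$. The natural candidate column to examine is then $[v u \cdot s]$. Writing $v u \cdot s = \bigcup_{t \in u \cdot s} v \cdot t$, we immediately get the containment
\[
v u \cdot s \ \supseteq \ (v \cdot q) \cup (v \cdot q').
\]

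The central step is to argue that this union is in fact disjoint, using unambiguity. If some $r \in Q$ lay in both $v \cdot q$ and $v \cdot q'$, then $r$ would have two $v$-paths, one ending in $q$ and one in $q'$, that can each be extended by $u$-paths to $s$, yielding two distinct $vu$-labelled paths from $r$ to $s$; this is precisely the diamond configuration forbidden in a UFA (using crucially that $q \ne q'$). Hence $v \cdot q$ and $v \cdot q'$ are disjoint. Since $[v \cdot q]$ is maximal it is in particular nonzero (completeness forbids the zero vector from being maximal), and $v \cdot q'$ is nonempty by assumption, so $v u \cdot s \supsetneq v \cdot q$. Thus $[v u \cdot s]$ is a column with $[v u \cdot s] \ge [v \cdot q]$ and $[v u \cdot s] \ne [v \cdot q]$, contradicting the maximality of $[v \cdot q]$.

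I do not anticipate a real obstacle here; the proof is essentially a one-line application of the diamond-free property once the right column $[vu \cdot s]$ is identified. The only minor care point is making sure $q \ne q'$ is used precisely where unambiguity is invoked, and noting that the empty column cannot be maximal in a complete monoid so that $[v \cdot q]$ is genuinely dominated strictly.
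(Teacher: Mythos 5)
Your proof is correct, and it takes a genuinely different route from the paper's. You argue directly in the domination order on columns: unfolding $q' \in \Mer(q)$ into a column $[u \cdot s]$ with $\{q,q'\} \subseteq u \cdot s$, noting $v u \cdot s \supseteq (v \cdot q) \cup (v \cdot q')$, and using the diamond-free property (with $q \ne q'$ as the two distinct intermediate states of a would-be diamond from $r$ to $s$ labelled $vu$) to see that the union is disjoint, so a nonempty $v \cdot q'$ makes $[v u \cdot s]$ strictly dominate $[v \cdot q]$. This uses only unambiguity and the definitions of ``column'' and $\Mer$. The paper instead proves the statement for columns of \emph{maximum weight} (\autoref{lem:max-column}~(b) in the appendix): it uses strong connectivity to replace $[u\cdot s]$ by a column $[w \cdot q]$ ending at $q$ itself with $[w \cdot q] \ge [q] + [q']$, deduces $[v w \cdot q] \ge [v \cdot q] + [v \cdot q']$, and concludes $\alpha^T [v \cdot q'] = 0$ because $\alpha$ is strictly positive and no column can exceed weight $\mcw$; the version for maximal columns then follows via the equivalence of maximality and maximum weight (\autoref{lem:max-notions}). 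Your argument is more elementary and self-contained --- it needs neither strong connectivity nor the Perron--Frobenius weight vector --- whereas the paper's is a cheap by-product of the weight machinery it develops anyway. One minor remark: you do not actually need $[v \cdot q]$ to be nonzero (nor completeness for that); disjointness together with $v \cdot q' \ne \emptyset$ already yields an element of $v u \cdot s \setminus v \cdot q$, which is all the strict domination requires.
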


We call a vector $y \in U^\bot$ a \emph{maximal pseudo-column} if there is $q \in Q$ with $y(q) = 1$ and $y(q') = 0$ for all $q' \not\in \Mer(q)$.
This notion, which is closely related to the ``pseudo-cuts'' from \cite{KieferW19}, can be seen as a relaxation of the notion of a maximal column: clearly, every maximal column is a maximal pseudo-column, but the converse is not true, since a maximal pseudo-column is not necessarily a vector over $\{0, 1\}$, let alone a \emph{column} in the strict sense, i.e., of the form~$[w \cdot p]$. The following lemma however shows that the weight of a maximal pseudo-column is equal to the weight of a maximal column. We will later show that computing the former can be done in \NCT.

\begin{lem} \label{lem:Cas}
Let $y$ be a maximal pseudo-column.
Then $\alpha^T y = \mcw$.
\end{lem}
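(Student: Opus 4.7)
The plan is to exploit the defining property of $U^\bot$, namely that $\alpha^T M(v) y = \alpha^T y$ for every $v \in \Sigma^*$, and to choose $v$ so that multiplying $y$ by $M(v)$ reduces to a single maximum-weight column. Concretely, I would expand
\[
\alpha^T y \ = \ \alpha^T M(v) y \ = \ \sum_{p \in Q} y(p) \cdot \alpha^T [v \cdot p].
\]
The terms with $p \notin \Mer(q)$ vanish because $y(p) = 0$ there. The aim is to pick $v$ so that for every $p \in \Mer(q) \setminus \{q\}$ we have $v \cdot p = \emptyset$, and simultaneously $[v \cdot q]$ is a maximal column of weight $\mcw$. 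By \autoref{lem:max-column-body}, the first condition is automatic as soon as the second holds, so the whole proof reduces to a single existence claim.

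The key step — and the only non-routine one — is therefore to show that \emph{for every $q \in Q$ there is a word $v$ such that $[v \cdot q]$ is a maximal column}. I would argue this as follows. Start from any maximal column $[v' \cdot p_0]$, which exists by \autoref{thm:Cesari}. By strong connectedness of $M$, there is a word $w \in \Sigma^*$ with $p_0 \in w \cdot q$, equivalently $(p_0, w, q) \in \Delta$. Then
\[
v' w \cdot q \ = \ \bigcup_{r \in w \cdot q} v' \cdot r \ \supseteq \ v' \cdot p_0,
\]
so the column $[v' w \cdot q]$ contains the maximal column $[v' \cdot p_0]$. Maximality of $[v' \cdot p_0]$ then forces $[v' w \cdot q] = [v' \cdot p_0]$, so $v := v' w$ witnesses the claim.

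With such $v$ in hand, the computation closes immediately: $y(p) \cdot \alpha^T [v \cdot p]$ is zero for $p \notin \Mer(q)$ (because $y(p) = 0$), zero for $p \in \Mer(q) \setminus \{q\}$ (because $v \cdot p = \emptyset$ by \autoref{lem:max-column-body}), and equals $1 \cdot \mcw$ for $p = q$ (because $y(q) = 1$ and $[v \cdot q]$ is a maximal column). Hence $\alpha^T y = \mcw$ as desired.

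The main obstacle I anticipate is precisely the existence of a word $v$ with $[v \cdot q]$ maximal; one must check that strong connectedness and completeness can be combined in the right direction (from a \emph{given} target state $q$ to a maximal column rather than the reverse). The argument above uses strong connectedness to move a pre-existing maximal column's indexing state to $q$, and maximality itself to ensure the resulting larger column cannot properly grow. Everything else — decomposing $\alpha^T M(v) y$ along the standard basis, invoking $U^\bot$, and applying \autoref{lem:max-column-body} — is bookkeeping.
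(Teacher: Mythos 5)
Your proof is correct and follows essentially the same route as the paper's: choose a word $v$ with $[v \cdot q]$ a maximal column, use $y \in U^\bot$ to write $\alpha^T y = \alpha^T M(v) y = \sum_{p} y(p)\,\alpha^T [v \cdot p]$, and eliminate every term except $y(q)\,\alpha^T[v \cdot q] = \mcw$ via the definition of a maximal pseudo-column and \autoref{lem:max-column-body}. The only place you go beyond the paper is in explicitly justifying the existence of such a $v$ for the \emph{given} state $q$ (by routing a pre-existing maximal column $[v' \cdot p_0]$ through a word $w$ with $p_0 \in w \cdot q$ and invoking maximality), a step the paper's proof leaves implicit; your argument for it is correct.
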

\begin{proof}
Let $q \in Q$ be such that $y(q) = 1$ and $y(q') = 0$ for all $q' \not\in \Mer(q)$.
Let $w \in \Sigma^*$ be such that $[w \cdot q]$ is a maximal column.
We have
\begin{align*}
\alpha^T y \
&= \ \alpha^T M(w) y && (y \in U^\bot) \\
&= \ \sum_{q' \in Q} y(q') \alpha^T [w \cdot q'] \\
&= \ \sum_{q' \in \Mer(q)} y(q') \alpha^T [w \cdot q'] && (y(q') = 0 \text{ for } q' \not\in \Mer(q)) \\
&= \ \alpha^T [w \cdot q] + \sum_{q' \in \Mer(q) \setminus \{q\}} y(q') \alpha^T [w \cdot q'] && (y(q) = 1) \\
&= \ \alpha^T [w \cdot q] && (\text{\autoref{lem:max-column-body}}) \\
&= \ \mcw && (\text{by the choice of $w, q$}). \qedhere
\end{align*}
\end{proof}

\begin{exa} \label{ex:snap2}
We continue \autoref{ex:snap1}.
We have $U = \spn{\begin{pmatrix} 1 & -1 & 1 & -1 \end{pmatrix}}$ and $\Mer(2) = \{1,2,3\}$.
Let $y = \begin{pmatrix}4/3 & 1 & -1/3 & 0\end{pmatrix}^T$.
Then $y \in U^\bot$.
Since $y(2) = 1$ and $y(4) = 0$, vector~$y$ is a maximal pseudo-column.
Thus, by \autoref{lem:Cas}, $\mcw = \alpha^T y = \begin{pmatrix} \frac14 & \frac14 & \frac14 & \frac14 \end{pmatrix} y = \frac12$.
\end{exa}

\begin{thm}\label{thm:pseud-linsyst}
Let $\Gamma$ be a basis of $U$, and let $q \in Q$.
Then the following linear system for $y \in \mathbb{R}^Q$ has a solution, and all its solutions are maximal pseudo-columns:
\begin{align*}
 \gamma^T y \ &= \ 0 \quad \text{for all } \gamma^T \in \Gamma \\
 y(q) \ &= \ 1 \\
 y(q') \ &= \ 0 \quad \text{for all } q' \not \in \Mer(q)\,.
\end{align*}
\end{thm}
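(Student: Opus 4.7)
The plan is to verify two things: that every solution of the linear system is a maximal pseudo-column, and that the system is feasible. The first part is essentially definitional. Since $\Gamma$ is a basis of $U$, the equations $\gamma^T y = 0$ for $\gamma^T \in \Gamma$ are equivalent to $y \in U^\bot$, and the remaining equations $y(q) = 1$ and $y(q') = 0$ for $q' \notin \Mer(q)$ are precisely the coordinate conditions in the definition of a maximal pseudo-column associated to $q$.

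For feasibility, the strategy is to exhibit an explicit solution in the form of the indicator vector of a maximal column whose support contains $q$. Since $M$ is complete, applying \autoref{thm:Cesari} to any word of minimum rank shows that at least one maximal column exists, which we may write as $[u \cdot t]$ with $u \cdot t$ nonempty. Strong connectedness then furnishes a state $p \in u \cdot t$ and a word $v \in \Sigma^*$ with $(q, v, p) \in \Delta$, so that $q \in vu \cdot t$. By \autoref{lem:max-column-stable}, the column $[vu \cdot t]$ is itself maximal, and I take $y := [vu \cdot t]$ as the candidate solution.

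It then remains to verify that $y$ satisfies all three families of equations. Membership $y \in U^\bot$ follows directly from \autoref{prop:MC<=V}. The equation $y(q) = 1$ holds because $q \in vu \cdot t$. For the last condition, note that the support of $y$ equals $vu \cdot t$, and this very column witnesses that every $q' \in vu \cdot t$ belongs to $\Mer(q)$; hence $y(q') = 0$ whenever $q' \notin \Mer(q)$. The main (and essentially only) substantive step in the plan is producing a maximal column whose support contains $q$; once strong connectedness is combined with the closure property of \autoref{lem:max-column-stable} to handle this, the rest is a direct check against the definitions.
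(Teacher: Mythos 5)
Your proof is correct and follows essentially the same route as the paper: solutions are maximal pseudo-columns by definition once $\Gamma$ spans $U$, and feasibility is witnessed by a maximal column containing $q$. In fact you are more careful than the paper's own proof, which merely asserts that ``any maximal column solves the linear system''; your explicit construction of a maximal column $[vu \cdot t]$ with $q \in vu \cdot t$ via \autoref{thm:Cesari}, strong connectedness, and \autoref{lem:max-column-stable} supplies exactly the detail the paper leaves implicit.
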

\begin{proof}
By \autoref{prop:MC<=V}, any maximal column solves the linear system.
Let $y \in \mathbb{R}^Q$ be a solution of the linear system.
The equations on the first line guarantee that $y \in U^\bot$.
Then, the equations on the second and third line guarantee that $y$~is a maximal pseudo-column.
\end{proof}

\subsection{Dealing with the non-strongly connected case}\label{subsec:non-sc}

The following lemma shows that in order to compute the minimum rank we can focus on the strongly connected case.

\begin{prop} \label{prop:unamb-sc}
Let $M : \Sigma \to \{0,1\}^{Q \times Q}$ be an unambiguous matrix monoid morphism.
Suppose that $Q_1 \cup Q_2 = Q$ is a partition of~$Q$ such that for all $w \in \Sigma^*$ it holds that $[Q_2]^T M(w) [Q_1] = 0$; i.e., for all $w \in \Sigma^*$ matrix~$M(w)$ has the block form
$
M(w) = \begin{pmatrix}
    M_1(w) & M_{12}(w) \\ 0 & M_2(w)
\end{pmatrix}\,,
$
where $M_1(w) \in \{0,1\}^{Q_1 \times Q_1}$ and $M_{12}(w) \in \{0,1\}^{Q_1 \times Q_2}$ and $M_2(w) \in \{0,1\}^{Q_2 \times Q_2}$.
We have $\rkr(M) = \rkr(M_1) + \rkr(M_2)$ and $\rku(M) = \rku(M_1) + \rku(M_2)$. 
\end{prop}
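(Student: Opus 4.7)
The plan is to prove the two directions of each equality separately; the lower bounds follow pointwise from the block structure, while the upper bounds come from exhibiting a single idempotent matrix that simultaneously minimises both diagonal blocks.

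For the lower bounds, I would first show pointwise that for every word $w$ one has $\rkr(M(w)) \ge \rkr(M_1(w)) + \rkr(M_2(w))$: apply rank-nullity to the projection $\R^Q \to \R^{Q_2}$ restricted to the column space of $M(w)$; its image has dimension $\rkr(M_2(w))$, and its kernel contains the columns of $M(w)$ supported in $Q_1$, which span a subspace of dimension $\rkr(M_1(w))$. For the unambiguous version, given any factorisation $M(w) = F G$ with zero-one $F, G$ of inner dimension~$r$, split $F$ into row blocks $F^{(1)}, F^{(2)}$ and $G$ into column blocks $G^{(1)}, G^{(2)}$ along the partition of~$Q$. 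The vanishing lower-left block of $M(w)$ forces $F^{(2)} G^{(1)} = 0$, so every inner index~$k$ must satisfy $F^{(2)}_{\cdot,k} = 0$ or $G^{(1)}_{k,\cdot} = 0$. The two resulting classes of indices supply zero-one factorisations of $M_1(w)$ and $M_2(w)$, giving $\rku(M(w)) \ge \rku(M_1(w)) + \rku(M_2(w))$. Minimising over~$w$ and using $\min_w(f(w)+g(w)) \ge \min_w f(w) + \min_w g(w)$ then yields both lower bounds.

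For the upper bounds, the key construction is an idempotent. Choose $u_i \in \Sigma^*$ with $\rkr(M_i(u_i)) = \rkr(M_i)$ (respectively $\rku(M_i(u_i)) = \rku(M_i)$). Since $M(\Sigma^*)$ is a finite monoid, some power~$n$ makes $E := M((u_1 u_2)^n)$ idempotent. Because $u_1$ and $u_2$ are factors of $(u_1 u_2)^n$, both $E_1$ and $E_2$ realise the corresponding minimum rank---for $\rkr$ by subadditivity $\rkr(AB) \le \rkr(A)$, and for $\rku$ by the fact noted earlier in the paper that minimum unambiguous rank is preserved under factor containment. An idempotent has all eigenvalues in $\{0,1\}$, so $\rkr(E) = \mathrm{tr}(E)$; block triangularity gives $\mathrm{tr}(E) = \mathrm{tr}(E_1) + \mathrm{tr}(E_2) = \rkr(E_1) + \rkr(E_2) = \rkr(M_1) + \rkr(M_2)$, and $\rkr(M) \le \rkr(E)$ finishes the real-rank case.

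For the unambiguous-rank upper bound, I would apply \autoref{thm:Cesari} to $E_1$ and $E_2$ to get $E_1 = \sum_{i=1}^{r_1}[A_i][B_i]^T$ with pairwise disjoint $A_i, B_i \subseteq Q_1$, and $E_2 = \sum_{j=1}^{r_2}[A'_j][B'_j]^T$ with pairwise disjoint $A'_j, B'_j \subseteq Q_2$. Matching coefficients in $E_\ell^2 = E_\ell$ (using disjointness to get linear independence of the outer products) forces $|A_i \cap B_i| = 1 = |A'_j \cap B'_j|$, giving distinguished states $q_i \in A_i \cap B_i$ and $s_j \in A'_j \cap B'_j$. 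Define $T_i := \{q \in Q_2 : E_{q_i,q}=1\}$ and $S_j := \{p \in Q_1 : E_{p,s_j}=1\}$; the candidate decomposition $E = \sum_i [A_i][B_i \cup T_i]^T + \sum_j [S_j \cup A'_j][B'_j]^T$ would bound $\rku(E)$ by $r_1 + r_2$. The diagonal blocks and the vanishing $Q_2 \times Q_1$ block are immediate; the main obstacle is verifying the $Q_1 \times Q_2$ block. For each entry $(p,q)$ with $p \in Q_1, q \in Q_2$, the identity $E^2 = E$ together with $E^2 \in \{0,1\}^{Q \times Q}$ means $E_{p,q}$ equals the number of intermediate vertices $r$ with $E_{p,r} E_{r,q} = 1$, which is in $\{0,1\}$. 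A case analysis on whether the unique such $r$ (when it exists) lies in $Q_1$ or $Q_2$ matches $(p,q)$ to exactly one candidate contribution: in the $Q_1$ case $p \in A_i$ for some $i$ and $E^2_{q_i,q} = E_{q_i,q}$ forces $q \in T_i$; in the $Q_2$ case $q \in B'_j$ for some $j$ and $E^2_{p,s_j} = E_{p,s_j}$ forces $p \in S_j$. Conversely, two simultaneous contributions would yield two distinct intermediate vertices, violating the zero-one constraint on $E^2$.
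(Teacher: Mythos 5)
Your lower bounds and your real-rank upper bound are correct; the idempotent-plus-trace argument for $\rkr(M) \le \rkr(M_1) + \rkr(M_2)$ is in fact a clean alternative to the paper's decomposition of $M(w_1 w_2)$ into a sum of two low-rank products. The problem is in the unambiguous-rank upper bound: you apply \autoref{thm:Cesari} to $E_1$ and $E_2$, but that theorem is proved under the standing assumption of \autoref{sec-toolbox} that the morphism is unambiguous, \emph{complete, and strongly connected} (its proof uses strong connectivity explicitly). The restricted morphisms $M_1$ and $M_2$ are unambiguous, but there is no reason for them to be complete or strongly connected --- indeed the entire purpose of this proposition is to handle morphisms that are \emph{not} strongly connected, and in the intended inductive application at least one of the two blocks always consists of several components.

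This is not just a missing hypothesis check: the conclusion you need genuinely fails without strong connectivity. Take $Q_1 = \{1,2,3\}$ and the single generator
\[
E \ = \ \begin{pmatrix} 1 & 0 & 1 \\ 0 & 1 & 1 \\ 0 & 0 & 0 \end{pmatrix},
\]
which satisfies $E^2 = E$ over the integers, so $\{I, E\}$ is a zero-one (unambiguous) monoid with $\rku = 2$ and $E$ an idempotent of minimum rank. Its two nonzero rows are $(1,0,1)$ and $(0,1,1)$, and a short case analysis shows $E$ admits no decomposition $[A_1][B_1]^T + [A_2][B_2]^T$ with $B_1 \cap B_2 = \emptyset$. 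So the pairwise-disjoint Cesari form on which your whole construction of the distinguished states $q_i, s_j$ and the sets $T_i, S_j$ rests is simply unavailable for such an $E_1$ or $E_2$, and the rest of the $\rku$ argument (which I checked and which would be fine \emph{given} that form) does not get started. The paper avoids this by a direct decomposition: writing $M(w_1 w_2) = \bigl(\begin{smallmatrix} M_1(w_1) \\ 0 \end{smallmatrix}\bigr)\bigl(\begin{smallmatrix} B_1 & B_2 \end{smallmatrix}\bigr) + \bigl(\begin{smallmatrix} A_1 \\ A_2 \end{smallmatrix}\bigr)\bigl(\begin{smallmatrix} 0 & M_2(w_2) \end{smallmatrix}\bigr)$ and bounding the unambiguous rank of each summand by $\rku(M_i(w_i))$, which needs no structural theorem about $M_1$ or $M_2$ at all. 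You should either adopt that route for the $\rku$ upper bound or find a replacement for \autoref{thm:Cesari} valid for arbitrary unambiguous morphisms.
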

\begin{proof}
It is a well-known property of upper-triangular block matrices that $\rkg(M(w)) \ge \rkg(M_1(w)) + \rkg(M_2(w))$; see, e.g., \cite[Chapter 0.9.4]{HornJohnson13}.
Concerning the analogous property of~$\rku$, let $M(w) = C R$ for some $0/1$ matrices $C, R$.
Then $R$~has at least $\rku(M_2(w))$ rows whose $Q_1$-components are all~$0$, and at least $\rku(M_1(w))$ rows whose $Q_1$-components are not all~$0$.
It follows that $R$ has at least $\rku(M_1(w)) + \rku(M_2(w))$ rows.
Since the factorization $M(w) = C R$ was arbitrary, we conclude that \[\rku(M(w)) \ge \rku(M_1(w)) + \rku(M_2(w)).\]

\newcommand{\myrank}{\mathsf{rk}}
Now let $\myrank \in \{\rkg, \rku\}$ and $w_1, w_2 \in \Sigma^*$ be such that $\myrank(M_1(w_1)) = r_1$ and $\myrank(M_2(w_2)) = r_2$.
With what we have shown in the first paragraph, it suffices to show that $\myrank(M(w_1 w_2)) \le r_1 + r_2$.
For some matrices $A_1, A_2, B_1, B_2$ we have
\begin{align*}
M(w_1 w_2) \ = \ M(w_1) M(w_2) \ &= \ 
\begin{pmatrix}
    M_1(w_1) & A_1 \\ 0 & A_2
\end{pmatrix}
\begin{pmatrix}
    B_1 & B_2 \\ 0 & M_2(w_2)
\end{pmatrix} \\
&= \ 
\begin{pmatrix}
    M_1(w_1) \\ 0 
\end{pmatrix}
\begin{pmatrix}
    B_1 & B_2 
\end{pmatrix}
+
\begin{pmatrix}
    A_1 \\ A_2
\end{pmatrix}
\begin{pmatrix}
   0 & M_2(w_2)
\end{pmatrix}\,.
\end{align*}
The first summand has rank at most $\myrank \begin{pmatrix}
    M_1(w_1) \\ 0 
\end{pmatrix} = r_1$, and similarly the second summand has rank at most~$r_2$.
So we can write both ($i=1,2$) summands as $C_i R_i$, where $C_i$ has at most $r_i$ columns.
Hence, 
\[
M(w_1 w_2) \ = \ C_1 R_1 + C_2 R_2 \ = \ 
\begin{pmatrix}
    C_1 & C_2 
\end{pmatrix}
\begin{pmatrix}
    R_1 \\ R_2
\end{pmatrix}\,.
\]
It follows that $\myrank(M(w_1 w_2)) \le r_1+r_2$.
\end{proof}

By a straightforward induction it follows from \autoref{prop:unamb-sc} that the minimum rank of an unambiguous matrix monoid is the sum of the minimum ranks of its strongly connected components (where ``incomplete'' components count as having rank~$0$).

\section{Computing the rank in \texorpdfstring{\NCT}{NC2}}\label{sec-nc2}
In this section, we prove our first main result, which is as follows.

\begin{thm} \label{thm:rank-in-NC}
The problem of computing the (real) rank of an unambiguous matrix monoid is in~$\NC^2$.
\end{thm}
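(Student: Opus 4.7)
The plan is to reduce the computation of $r = \rkg(M)$ to a bounded chain of linear-algebraic subroutines, each of which is known to lie in $\NC^2$, by combining the structural results of Section~\ref{sec-toolbox}.

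First, applying Proposition~\ref{prop:unamb-sc} iteratively along a topological order of the SCC decomposition of the underlying NFA, one obtains $\rkg(M) = \sum_C \rkg(M_C)$ as $C$ ranges over SCCs, where $M_C$ denotes the restriction of $M$ to $C$. Computing SCCs is in $\NL \subseteq \NC^2$, checking which $M_C$ are complete is in $\NC^2$ by \cite{Kiefer2021}, and the non-complete restrictions contribute $0$. It therefore suffices to treat a single strongly connected complete morphism, i.e., the standing assumption of Section~\ref{sec-toolbox}. For such~$M$, Theorem~\ref{thm:rank-weight} gives $r = 1/(\mcw \cdot \mrw)$, so I only need to compute $\mcw$ and $\mrw$ in $\NC^2$; by symmetry, I focus on $\mcw$, as the analogous argument on rows produces $\mrw$.

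By Lemma~\ref{lem:Cas}, producing \emph{any} maximal pseudo-column $y$ suffices, for then $\mcw = \alpha^T y$ is a single inner product. By Theorem~\ref{thm:pseud-linsyst}, $y$ is the solution of an explicit rational linear system whose data can be prepared by three sub-computations. The Perron eigenvector $\alpha^T$, characterised by $\alpha^T \overline{A} = \alpha^T$, is the unique (up to scaling) nonzero solution of this linear system and can be computed in $\NC^2$ by parallel Gaussian elimination \cite{Csanky1976}. A basis $\Gamma$ of the forward space $U = \spn{\alpha^T M(w) - \alpha^T : w \in \Sigma^*}$ is obtained by the iterative closure $U_{i+1} = U_i + U_i \cdot M(\Sigma) + \spn{\alpha^T M(a) - \alpha^T : a \in \Sigma}$ starting from $U_0 = \{0\}$; since $\dim U_i \le |Q|$, this chain stabilises within $|Q|$ rounds, and the standard $\NC^2$ algorithms for iterated matrix powering \cite{BorodinGH82} let the closure be computed in parallel. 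Finally, for a fixed state $q \in Q$, the set $\Mer(q)$ equals the set of $q' \in Q$ for which the pair $(q, q')$ reaches some diagonal $(r, r)$ in the synchronous product on $Q \times Q$ with transitions $(p_1, p_2) \to (r_1, r_2)$ whenever some letter $a$ satisfies $(p_i, a, r_i) \in \Delta$ for $i = 1, 2$; this is plain graph reachability on an $n^2$-vertex graph, hence in $\NL \subseteq \NC^2$.

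Given $\alpha$, $\Gamma$, and $\Mer(q)$, solving the polynomial-sized linear system of Theorem~\ref{thm:pseud-linsyst} yields $y$ in $\NC^2$, and the inner product $\alpha^T y$ delivers $\mcw$; applied analogously to rows, the same pipeline produces $\mrw$, after which Theorem~\ref{thm:rank-weight} gives $r$ by one division. The step demanding the most care is the parallel computation of~$U$: a naive iteration composes $|Q|$ closure rounds sequentially and would incur $\Omega(|Q|)$ depth, so the standard remedy is to express the whole fixed point as iterated multiplication of a single polynomial-size matrix encoding all one-step closures simultaneously, which lies in $\NC^2$.
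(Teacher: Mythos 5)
Your overall architecture is exactly the paper's: reduce to strongly connected complete components via \autoref{prop:unamb-sc} and the completeness test of \cite{Kiefer2021}, invoke \autoref{thm:rank-weight} to reduce the rank to $\mcw$ and $\mrw$, and obtain each of these as $\alpha^T y$ for a maximal pseudo-column $y$ solving the linear system of \autoref{thm:pseud-linsyst}, with $\alpha$, $\Mer(q)$ and a basis of $U$ as the three pieces of input data. The treatment of $\alpha$, of $\Mer(q)$ via reachability in the square automaton, and of the final system solve all match the paper and are fine.

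The one place where your argument has a genuine gap is the \NCT computation of a basis of $U$, which is precisely the content of the paper's \autoref{lem:U-in-NC} and the only step requiring a real new idea. You correctly observe that the worklist closure $U_{i+1} = U_i + U_i\cdot M(\Sigma) + \spn{\alpha^T M(a) - \alpha^T \mid a \in \Sigma}$ stabilises within $|Q|$ rounds but is inherently sequential; however, your proposed remedy --- ``express the whole fixed point as iterated multiplication of a single polynomial-size matrix encoding all one-step closures'' --- is asserted, not established, and it does not work as literally stated. The fixed point is an iteration on \emph{subspaces}, not a linear recurrence on vectors, so it is not the orbit of any single matrix; and the obvious single-matrix surrogate $\alpha^T(\sum_a M(a))^k$ collapses the span of $\{\alpha^T M(w) : |w|=k\}$ to a single vector per length and loses the space. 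The paper's route is to rewrite $U$ as a union of genuine forward spaces (via $M'(a) := M(a)-I$ and \autoref{lem:U=U'}) and then apply the technique of \cite{KieferMW14,KieferWeighted20}, whose essential trick is a Kronecker-square argument: the matrix $\sum_{k} \sum_{|w| = k} \bigl(M(w)^T\alpha\bigr)\bigl(M(w)^T\alpha\bigr)^T$ is computable by powering the single matrix $\sum_a M(a)\otimes M(a)$, and because it is a sum of rank-one positive semidefinite terms its column space equals the forward space. Without this positive-semidefiniteness observation (or some substitute), one only gets an $O(\log n)$-round subspace-doubling scheme, which yields \NCThree rather than \NCT. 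You need to either supply this ingredient or cite it explicitly; everything else in your proof goes through. (A small bookkeeping point: you should also compute $\beta$ and normalise $\alpha^T\beta = 1$, since \autoref{thm:rank-weight} is stated under that normalisation and $\mrw$ is defined relative to $\beta$.)
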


In order to use \autoref{thm:pseud-linsyst}, we need the following lemma. We use the notation defined in the previous section.
Recall that we defined $U := \spn{\alpha^T M(w) - \alpha^T \mid w \in \Sigma^*}$.

\begin{lem} \label{lem:U-in-NC}
If $M$ is strongly connected,
one can compute a basis of~$U$ in~\NCT.
\end{lem}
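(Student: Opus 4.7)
The plan is to identify $U$ as a reachability subspace of a linear iterated system generated by the matrices $M(a)$ and compute it using standard NC$^2$ linear-algebra primitives. The starting point is the closure property: for any generator $\alpha^T M(w) - \alpha^T$ of $U$ and any $a \in \Sigma$,
\[
(\alpha^T M(w) - \alpha^T)\, M(a) \ = \ (\alpha^T M(wa) - \alpha^T) - (\alpha^T M(a) - \alpha^T),
\]
which is itself a difference of two generators of $U$; extending by linearity yields $U \cdot M(a) \subseteq U$. Combined with the fact that the generators themselves lie in $U$, this characterises $U$ as the smallest row subspace of $\mathbb{R}^Q$ that contains the set $\{\alpha^T M(a) - \alpha^T \mid a \in \Sigma\}$ and is stable under right-multiplication by each $M(a)$.

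I would next produce $\alpha$ explicitly. Strong connectedness together with completeness and unambiguity of $M$ invokes the Perron--Frobenius argument from \autoref{subsec:weights}, so that the left $1$-eigenspace of $\overline{A} = \tfrac{1}{|\Sigma|}\sum_a M(a)$ is one-dimensional and strictly positive. Hence $\alpha^T$ can be read off as any nonzero vector in the left kernel of $\overline{A} - I$, followed by the scalar normalisation $\alpha^T \beta = 1$; both kernel extraction and the rescaling are standard NC$^2$ linear-algebra tasks, after which the $|\Sigma|$ initial generators $\alpha^T M(a) - \alpha^T$ are assembled in parallel by matrix--vector multiplications.

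It remains to compute the smallest right-invariant subspace of $\mathbb{R}^Q$ containing these generators. The conceptually simplest route is the saturation iteration $V_0 := \spn{\alpha^T M(a) - \alpha^T \mid a \in \Sigma}$ and $V_{k+1} := V_k + \sum_a V_k M(a)$, which stabilises at $U$ within at most $|Q|$ steps because the dimensions of the $V_k$ strictly increase until stabilization and are bounded by $|Q|$; each individual step consists of a matrix multiplication followed by a basis extraction via Gaussian elimination. The main obstacle is that a naive sequential composition of $|Q|$ NC$^2$ rounds only yields NC$^3$. The intended NC$^2$ bound is obtained by the standard one-shot reformulation for reachable subspaces of weighted automata (in the spirit of the forward-space techniques cited in \autoref{subsec:pseudocolumns}): the entire saturation is replaced by a single computation on a polynomial-size matrix assembled in parallel, typically via iterated matrix powering of an auxiliary combination of the $M(a)$'s, from which a basis is then extracted by a single NC$^2$ rank computation.
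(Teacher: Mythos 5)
Your proposal is correct and follows essentially the same route as the paper: both identify $U$ as a forward (reachability) space of a weighted automaton with initial vectors $\alpha^T M(a) - \alpha^T$ and then invoke the known \NCT procedure for computing such spaces via one-shot matrix powering plus a rank computation, rather than the sequential saturation. The only cosmetic difference is that you characterise $U$ as the smallest subspace containing these generators that is invariant under right-multiplication by the $M(a)$, whereas the paper passes through the equivalent space $U' = \spn{\alpha^T M'(w) \mid w \in \Sigma^+}$ built from products of the shifted matrices $M'(a) = M(a) - I$; both reductions are valid and feed into the same black box.
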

For each $a \in \Sigma$ define $M'(a) := M(a) - I \in \{-1,0,1\}^{Q \times Q}$ and extend~$M'$ to $M' : \Sigma^* \to \mathbb{Z}^{Q \times Q}$ by defining $M'(a_1 \cdots a_k) = M'(a_1) \cdots M'(a_k)$.
Define \[U' := \spn{\alpha^T M'(w) \mid w \in \Sigma^+}.\]
Note that here $w$ ranges over~$\Sigma^+$, i.e., nonempty words, only. By definition, $U'$ is closed under right multiplication by $M'(a)$ for all $a \in \Sigma$.
We first show the following lemma.

\begin{lem} \label{lem:U=U'}
We have $U = U'$.
\end{lem}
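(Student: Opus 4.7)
The key algebraic identity to exploit is $M(a) = I + M'(a)$ for each letter $a$. For a word $w = a_1 \cdots a_k \in \Sigma^*$, expanding $M(w) = \prod_{i=1}^k (I + M'(a_i))$ as a sum over subsets yields
\[
M(w) \ = \ \sum_{S \subseteq \{1,\ldots,k\}} M'(w_S),
\]
where $w_S$ denotes the subword of $w$ at the positions indexed by $S$ (with $w_\emptyset = \varepsilon$ and $M'(\varepsilon) = I$).

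For the inclusion $U \subseteq U'$, I would separate the $S = \emptyset$ term and then left-multiply by $\alpha^T$, obtaining $\alpha^T M(w) - \alpha^T = \sum_{\emptyset \ne S} \alpha^T M'(w_S)$. Every term on the right is a generator of $U'$, since $w_S \in \Sigma^+$ whenever $S \ne \emptyset$. For the reverse inclusion $U' \subseteq U$, I would apply the dual (inclusion--exclusion) expansion to $M'(w) = \prod_i (M(a_i) - I)$, giving $M'(w) = \sum_{S \subseteq \{1,\ldots,k\}} (-1)^{k-|S|} M(w_S)$ for any $w \in \Sigma^+$. Since $\sum_S (-1)^{k-|S|} = (1-1)^k = 0$ whenever $k \ge 1$, I may replace each $M(w_S)$ by $M(w_S) - I$ without changing the sum; left-multiplying by $\alpha^T$ then exhibits $\alpha^T M'(w)$ as an explicit linear combination of generators of $U$.

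I do not expect a serious obstacle here; the only subtlety is the slight asymmetry between the definitions of $U$ (indexed by $\Sigma^*$ but with $\alpha^T$ subtracted) and $U'$ (indexed only by $\Sigma^+$), and the algebraic expansion above is precisely what resolves it. If the authors prefer a more elementary presentation, an equivalent route is a double induction on $|w|$ using the recursive identity
\[
\alpha^T M(va) - \alpha^T \ = \ (\alpha^T M(v) - \alpha^T)\, M'(a) \ + \ (\alpha^T M(v) - \alpha^T) \ + \ \alpha^T M'(a),
\]
which, once one observes that $U'$ is closed under right multiplication by $M'(a)$ (directly from $\alpha^T M'(w) M'(a) = \alpha^T M'(wa)$) and that $U$ is closed under right multiplication by $M(a)$ (from $(\alpha^T M(w) - \alpha^T) M(a) = (\alpha^T M(wa) - \alpha^T) - (\alpha^T M(a) - \alpha^T)$), yields both containments by straightforward induction on word length.
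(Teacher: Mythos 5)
Your primary argument is correct and takes a genuinely different route from the paper's. You expand $M(w)=\prod_{i}(I+M'(a_i))$ into $\sum_{S}M'(w_S)$ and, dually, $M'(w)=\sum_{S}(-1)^{k-|S|}M(w_S)$, then use $\sum_{S}(-1)^{k-|S|}=(1-1)^k=0$ for $k\ge 1$ to replace each $M(w_S)$ by $M(w_S)-I$; this exhibits every generator of either space as an explicit finite linear combination of generators of the other, with no induction at all. (The expansion is valid despite non-commutativity because each term of the product keeps the letters in their original order, so it really is $M'(w_S)$ for the ordered subword $w_S$.) The paper instead proves both inclusions by induction on word length, for $U\subseteq U'$ via exactly the recursive identity you give as your fallback, $\alpha^T(M(wa)-I)=\alpha^T(M(w)-I)M'(a)+\alpha^T(M(w)-I)+\alpha^T M'(a)$, and for $U'\subseteq U$ via the substitution $(M(w_i)-I)(M(a)-I)=(M(w_ia)-I)-(M(a)-I)-(M(w_i)-I)$ applied to an inductively obtained decomposition. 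Your closed-form expansion is arguably slicker and makes the combinatorics transparent in one line per direction; the paper's inductive version foregrounds the closure of $U'$ under post-multiplication by the $M'(a)$, which is the property actually exploited algorithmically in \autoref{lem:U-in-NC} and \autoref{lem:pseudcolumn-time}. Both versions are complete proofs, and your secondary inductive route coincides with the paper's.
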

\begin{proof}
For the inclusion $U \subseteq U'$, we prove by induction on $i \ge 0$ that for all length-$i$ words $w \in \Sigma^i$ we have $\alpha^T (M(w) - I) \in U'$.
Concerning the induction base, $i=0$, we have $\alpha^T (M(\varepsilon) - I) = 0 \in U'$.
Concerning the induction step, let $i \ge 0$, and let $w \in \Sigma^i$ and $a \in \Sigma$.
We have
\begin{align*}
\alpha^T (M(w a) - I)
\ &= \ \alpha^T (M(w) - I) M(a) + \alpha^T (M(a) - I) \\
\ &= \ \alpha^T (M(w) - I) (M(a) - I) + \alpha^T (M(w) - I) + \alpha^T (M(a) - I) \\
\ &= \ \alpha^T (M(w) - I) M'(a) + \alpha^T (M(w) - I) + \alpha^T M'(a)\,.
\end{align*}
It holds that $\alpha^T M'(a) \in U'$, and, by the induction hypothesis, $\alpha^T (M(w) - I) \in U'$.
It follows that $\alpha^T (M(w a) - I) \in U'$.

For the converse, $U' \subseteq U$, we proceed similarly by induction.
Concerning the induction base, $i=1$, for all $a \in \Sigma$ we have $\alpha^T M'(a) = \alpha^T (M(a) - I) \in U$.
Concerning the induction step, let $i \ge 1$, and let $w \in \Sigma^i$ and $a \in \Sigma$.
By the induction hypothesis there are $n \le |Q|$ and $w_1, \ldots, w_n \in \Sigma^*$ and $\lambda_1, \ldots, \lambda_n \in \mathbb{R}$ such that $\alpha^T M'(w) = \sum_{i=1}^n \lambda_i \alpha^T (M(w_i) - I)$.
Thus, we have
\begin{align*}
\alpha^T M'(w a)
\ &= \ \alpha^T M'(w) (M(a) - I) 
\ = \ \sum_{i=1}^n \lambda_i \alpha^T (M(w_i) - I) (M(a) - I) \\
\ &= \ \sum_{i=1}^n \lambda_i \alpha^T \big((M(w_i a) - I) - (M(a) - I) - (M(w_i) - I) \big) \in \ U\,,
\end{align*}
as required.
\end{proof}

\begin{proof}[Proof of \autoref{lem:U-in-NC}]
For each $a \in \Sigma$ define $U'_a := \spn{\alpha^T M'(a) M'(w) \mid w \in \Sigma^*}$.
Using the technique from \cite[Section 4.2]{KieferMW14} (see \cite[Proposition 5.2]{KieferWeighted20} for a clearer explanation), for each $a \in \Sigma$ one can compute\footnote{In \cite{KieferMW14,KieferWeighted20} only membership in~$\NC$ is claimed, but the bottleneck computations are matrix powering and rank computation, which can in fact be done in $\mathrm{DET} \subseteq \NC^2$; see~\cite{Cook85}.
The computations in~\cite[Proposition 5.2]{KieferWeighted20} are on polynomially larger matrices, but this does not impact the membership in~$\NC^2$, as $\log^2(\mathit{poly}(n)) = O(\log^2(n))$.} a basis of~$U'_a$ in~$\NC^2$.
The union of these bases, say $\Gamma = \{\gamma_1^T, \ldots, \gamma_n^T\}$ for some $n \le |\Sigma| |Q|$, spans~$U'$, which equals~$U$ by \autoref{lem:U=U'}.
To shrink~$\Gamma$ to a basis of~$U$, for each $i \in \{1, \ldots, n\}$ include~$\gamma_i^T$ in the basis if and only if $\dim \spn{\gamma_1^T, \ldots, \gamma_{i-1}^T} < \dim \spn{\gamma_1^T, \ldots, \gamma_i^T}$.
The latter (rank) computation can be done in~$\NC^2$~\cite{IbarraMoranRosier80}. 
\end{proof}

Now we can prove \autoref{thm:rank-in-NC}.
\begin{proof}[Proof of \autoref{thm:rank-in-NC}]
Let $M : \Sigma \to \{0,1\}^{Q \times Q}$ be an unambiguous monoid morphism.
Its strongly connected components can be computed in $\NL \subseteq \NC^2$.
It follows from the proof of \cite[Proposition 3]{Kiefer2021} that one can check each component for completeness in~$\NC^2$, since a zero-one monoid contains the zero matrix if and only if the joint spectral radius of the set of its generators is strictly less than one \cite{Kiefer2021}.
Therefore, using \autoref{prop:unamb-sc}, we can assume in the rest of the proof that $M$ is complete and strongly connected.

We use the fact that one can compute a solution of a possibly singular linear system of equations in~\NCT \cite[Section~5]{BorodinGH82}.
First, compute in~\NCT vectors $\alpha, \beta \in \mathbb{Q}_{>0}^Q$ with $\alpha^T \overline{A} = \alpha^T$ and $\overline{A} \beta = \beta$ and $\alpha^T \beta = 1$.
Using \autoref{lem:U-in-NC} compute in~\NCT a basis of~$U$.
Choose an arbitrary $q \in Q$ and compute $\Mer(q)$ in $\NL \subseteq \NCT$ with a reachability analysis.
Then, solve the linear system from \autoref{thm:pseud-linsyst} to compute in~\NCT a maximal pseudo-column~$y \in \mathbb{Q}^Q$.
Hence, using \autoref{lem:Cas}, we can compute~$\mcw = \alpha^T y$ in~\NCT.
Symmetrically, we can compute~$\mrw$ in~\NCT.
Finally, by \autoref{thm:rank-weight}, we can compute $r = \frac{1}{\mcw \cdot \mrw}$ in~\NCT.
\end{proof}

\begin{exa} \label{ex:snap3}
We continue \autoref{ex:snap1} and \autoref{ex:snap2}.
Since $M$~is a total DFA, it is natural to take $\beta = [Q]$.
Note that $\alpha^T \beta = 1$.
Since every row is of the form $[q]^T$ for some~$q$, we have $\mrw = 1$.
Recall from \autoref{ex:snap2} that $\mcw = \frac12$.
With \autoref{thm:rank-weight} we conclude that $r = \frac{1}{\mcw \cdot \mrw} = 2$, as observed in \autoref{ex:snap1}.
\end{exa}

\section{Time and space complexity}\label{sec-time}
In this section, we study the time and space complexity of computing the rank of a zero-one matrix monoid and finding a matrix of minimum rank in it. We provide two approaches with the same time complexity. The first one relies on the linear-algebraic tools developed in \autoref{sec-toolbox}. It turns out to have smaller space complexity than the second, combinatorial, approach, but is limited to only computing the rank. This is due to the fact that we never explicitly construct a maximal column in this approach, which is required in order to find a matrix of minimum rank by \autoref{thm:Cesari}. Moreover, it is not known if one can find a maximal column in \NC. In contrast, the combinatorial approach explicitly constructs a matrix of minimum rank step by step. In a way, this is exactly the reason why it requires more space to achieve the same time complexity: in the proof of \autoref{lemma:time-max-col}, we have to precompute a linear number of matrices (one for each step), since we do not know in advance which matrices we will need and since computing them ``on the fly'' would have a higher time complexity.
In the total DFAs case, where transition matrices require only linear amount of space and can be multiplied in linear time, the combinatorial approach becomes much more efficient, and outmatches its linear-algebraic counterpart by a factor of the alphabet size.

The three main results of this section are as follows.

\begin{thm}\label{thm:linalg-time}
    The (real) rank of an $n$-state UFA over an alphabet of size $m$ can be computed in $\OO(mn^4)$ time and $\OO(n^2)$ space. 
\end{thm}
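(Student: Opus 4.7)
The plan is to analyse the time and space complexity of the algorithm already outlined in the proof of \autoref{thm:rank-in-NC}, using the strongly-connected case as the primitive and applying \autoref{prop:unamb-sc} to reduce to it. I would first compute the SCCs of the underlying NFA in $\OO(mn^2)$ time, check completeness of each in polynomial time via~\cite{Kiefer2021}, and handle them independently; the rank of the whole UFA is then the sum of the ranks of its complete SCCs.

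Within a complete strongly connected component on $k$ states, the algorithm proceeds in four main phases: (i) solve $\alpha^T(\overline{A} - I) = 0$ and $(\overline{A} - I)\beta = 0$ by Gaussian elimination in $\OO(k^3)$ time and rescale so that $\alpha^T \beta = 1$; (ii) compute a basis of $U$; (iii) compute $\Mer(q)$ for some fixed state $q$ by a single backward reachability in the $k^2$-state product NFA in $\OO(mk^2)$ time; (iv) solve the linear system of \autoref{thm:pseud-linsyst} in $\OO(k^3)$ time to extract a maximal pseudo-column $y$, and invoke \autoref{lem:Cas} together with the symmetric computation of $\mrw$ to recover $r = 1/(\mcw \cdot \mrw)$ via \autoref{thm:rank-weight}.

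The dominating step is (ii). Here I would exploit \autoref{lem:U=U'}: $U = U'$ is generated by $\{\alpha^T M'(a) : a \in \Sigma\}$ and closed under right multiplication by each $M'(a) = M(a) - I$. Thus a straightforward breadth-first forward-closure suffices: seed the basis with $\alpha^T M'(a)$ for every letter, and whenever a new basis vector $v^T$ is added, enqueue the vectors $v^T M'(a)$ for each $a \in \Sigma$ and test each one against the current basis, maintained in reduced row-echelon form. Since the basis has at most $k$ vectors, at most $mk$ candidate vectors are ever generated; each vector-matrix product costs $\OO(k^2)$ time and each independence check against an echelon basis costs $\OO(k^2)$, so the step performs $\OO(mk^3)$ arithmetic operations overall. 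Accounting for the polynomial bit-length of intermediate rationals (bounded, for instance, by running Bareiss-style fraction-free elimination together with Hadamard bounds on determinants of zero-one matrices) brings the running time of this phase to $\OO(mk^4)$.

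Summing over SCCs, with $\sum_i k_i \le n$ and therefore $\sum_i k_i^4 \le n^4$, the total running time is $\OO(mn^4)$. For the $\OO(n^2)$ space bound, I would process the SCCs sequentially; at any moment the only objects kept in working memory are the current SCC's transition matrices, its echelon-form basis of $U$, and a constant number of vectors, all of size $\OO(n^2)$ machine words once rational entries are reduced to polynomial bit-length. The main obstacle will be precisely this last point: arguing rigorously that the rationals appearing in $\alpha$, $\beta$, the basis of $U$, and the pseudo-column $y$ all admit a polynomial-bit representation, so that the $\OO(n^2)$ space bound is genuine and not secretly hiding a $\log n$ factor per entry.
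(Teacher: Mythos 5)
Your algorithm is the same as the paper's (a direct implementation of the \NCT procedure from \autoref{thm:rank-in-NC}), and your final bound is correct, but your accounting of where the $mn^4$ comes from contains a concrete error. You claim that $\Mer(q)$ can be computed by backward reachability in the product automaton in $\OO(mk^2)$ time. That figure is only valid for total DFAs. For a general UFA, a letter $a$ whose matrix has $t$ ones contributes $t^2$ $a$-labelled edges to the square digraph $G^{(2)}$, and the paper explicitly notes that $|E^{(2)}| = \OO(mn^4)$ and cites a family of complete two-letter UFAs with $|E^{(2)}| = \Theta(n^4)$. So the reachability step costs $\OO(mk^4)$, and it --- not the closure computation for $U$ --- is the genuine bottleneck of the whole algorithm (this is \autoref{lem:mer-time}, and it is singled out as the bottleneck in the conclusions). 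Relatedly, to keep this step within $\OO(n^2)$ space one must generate the edges of $G^{(2)}$ on the fly rather than store them, a point your space analysis skips because you underestimated the size of this graph.

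Your $\OO(mk^4)$ bound for the $U$-basis computation, by contrast, is an overestimate obtained by an ad hoc bit-length surcharge: you multiply the $\OO(mk^3)$ arithmetic cost by a factor of $k$ for this phase only, without justifying why the entries have bit-length $\Theta(k)$ and without applying the same surcharge to the Gaussian eliminations in your phases (i) and (iv). The paper works in the unit-cost arithmetic model throughout and gets $\OO(mk^3)$ for this step (\autoref{lem:pseudcolumn-time}). Your two errors happen to cancel --- the misattributed $mk^4$ coincides with the true bottleneck --- so the stated bound still follows from your argument, but you should correct the analysis of the $\Mer(q)$ step and either drop the bit-length penalty or apply a consistent cost model to every phase. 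A smaller point: ``check completeness of each SCC in polynomial time'' is not enough for the theorem; you need the completeness check to fit within the $\OO(mn^4)$ budget, which it does, but this should be said.
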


\begin{thm}\label{thm:finding-matrix}
 A matrix of minimum (real) rank in an $n$-state UFA over an alphabet of size~$m$ can be found in  $\OO(mn^4)$ time and $\OO(n^3)$ space. 
\end{thm}

\begin{thm}\label{thm:main-dfas}
    A matrix of minimum (real) rank in an $n$-state total DFA over an alphabet of size~$m$ can be found in $\OO(n^3 + mn^2)$ time and $\OO(n^2)$ space.
\end{thm}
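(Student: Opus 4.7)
The plan is to adapt the combinatorial algorithm of \autoref{thm:finding-matrix} to total DFAs, exploiting two simplifications afforded by the latter: every row of any transition matrix $M(w)$ is of the form $[q]^T$ and is thus automatically maximal, so $\mrw = 1$ and $r = 1/\mcw$ by \autoref{thm:rank-weight}; and $M(w)$ is fully represented by the transformation $\pi_w \colon Q \to Q$, which can be extended by one letter in $\OO(n)$ time via $\pi_{wa}(q) = \delta(\pi_w(q), a)$, bypassing the $\OO(n^\omega)$ cost of dense matrix multiplications that appears in the general UFA algorithm. This is the source of the factor-$m$ improvement over the linear-algebraic counterpart.

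The first phase is the classical pair-automaton preprocessing of \cite{Eppstein1990}: build the automaton whose states are unordered pairs and singletons of $Q$, with transitions $\{p, q\} \cdot a = \{p \cdot a, q \cdot a\}$, and perform a simultaneous reverse breadth-first search from all singletons. This determines, in $\OO(mn^2)$ time and $\OO(n^2)$ space, the mergeability status of every pair, the first letter on a shortest merging path, and the corresponding distance $d(p, q)$; in particular, it yields all sets $\Mer(q)$ used in \autoref{sec-toolbox}.

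The second phase maintains the current image $S \subseteq Q$ (initially $Q$) together with the transformation $\pi \colon Q \to Q$ (initially the identity). While $S$ contains a mergeable pair, select one via an Eppstein-style greedy rule that minimises $d(p, q)$, reconstruct its shortest merging word from the BFS back-pointers, and apply it letter by letter, updating $\pi$ and $S$ in $\OO(n)$ per letter. When no mergeable pair remains in $S$, every nonzero column of the current matrix is a maximal column, for otherwise a pair in $S$ could still be merged. Combined with the automatic row-maximality in total DFAs, \autoref{lem:all-max-column} then certifies that the self-composition of the current $\pi$ (computable in $\OO(n)$) realises a matrix of minimum rank; this is the output.

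The main obstacle is the running-time analysis of the second phase. Per letter, both updates cost $\OO(n)$, so to meet the $\OO(n^3)$ target the cumulative length of all applied merging words must be $\OO(n^2)$. This follows from the standard amortised argument for the Eppstein greedy strategy: the merging word selected when $|S| = k$ has length bounded so that the sum telescopes to $\OO(n^2)$ across the $\OO(n)$ iterations. Finding the minimum-$d$ pair at each iteration costs $\OO(|S|^2)$, which also sums to $\OO(n^3)$. Combined with the $\OO(mn^2)$ preprocessing, the total time is $\OO(n^3 + mn^2)$, and the $\OO(n^2)$ space is dominated by the pair automaton.
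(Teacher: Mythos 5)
There is a genuine gap in the running-time analysis of your second phase, and it is exactly the point the paper's construction is designed to get around. You apply each selected merging word letter by letter, at $\OO(n)$ per letter, and you assert that the cumulative length of all applied merging words is $\OO(n^2)$ by ``the standard amortised argument for the Eppstein greedy strategy''. No such argument exists. The standard telescoping bound says that when $|S|=k$ the closest mergeable pair of $S$ lies at distance at most $\binom{n}{2}-\binom{k}{2}+1$ in the pair automaton (the pairs on its shortest merging path are distinct from the $\binom{k}{2}$ pairs of $S$), and summing over $k$ gives total word length $\Theta(n^3)$, not $\OO(n^2)$. Proving an $\OO(n^2)$ bound on the length of the word produced by any polynomial-time procedure would be a major advance on the quadratic upper-bound problem for reset words; the best proven bounds are cubic. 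With total length $\Theta(n^3)$ and $\OO(n)$ per letter, your second phase costs $\OO(n^4)$; tracking only $S$ at $\OO(|S|)$ per letter does not rescue it either, since $\sum_k k\bigl(\binom{n}{2}-\binom{k}{2}\bigr)=\Theta(n^4)$.

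The missing idea is to never apply a merging word letter by letter to the current configuration. Both Eppstein's algorithm and the paper precompute the \emph{transformations} of the merging words by exploiting the shared tree structure of the backward BFS in the square digraph: in the paper, \autoref{lemma:fast-pair} extracts a tree $T$ rooted at $(p,p)$ whose branches have total length $\OO(n^2)$ and yields a set-SLP for the words $w_q$ merging each $q\in\Mer(p)$ with a single fixed state $p$; the transformations of all these words are then obtained with $\OO(n^2)$ single-letter compositions, i.e.\ $\OO(n^3)$ time for a total DFA, after which each merge in the iterative constructions of \autoref{lemma:time-max-col} and \autoref{prop:time-minrank-mat} is one $\OO(n)$ composition of already-computed transformations. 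Your first phase is fine, and so is your correctness argument (if $S$ contains no mergeable pair then $|S|=r$, all nonzero columns are maximal, and \autoref{lem:all-max-column} applies since rows of a total DFA are automatically maximal); only the accounting for realising the merges needs to be replaced by this precomputation.
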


Until the end of the section, fix a strongly connected complete UFA $\Aa = (Q, \Sigma, \Delta)$. Denote $n = |Q|$, $m = |\Sigma|$. \autoref{subsec:non-sc} shows that strong connectivity can be assumed without loss of generality.

\subsection{Square automaton and square digraph}\label{subsec:square-aut}

We will need the construction of the square automaton of an NFA. The \emph{square automaton} 
$\Aa^{(2)} = (Q^{(2)}, \Sigma, \Delta^{(2)})$ of $\Aa$ is defined as follows. Let $Q^{(2)} = \{(p, q) \mid p, q \in Q\}$, and for $p, q \in Q$ and $a \in \Sigma$, the transitions are defined component-wise, that is, 
\[\Delta^{(2)} = \{ ((p, q), a, (p', q')) \in Q^{(2)} \times \Sigma \times Q^{(2)} \mid (p, a, p'), (q,a,q') \in \Delta\}.\]
Note that the square automaton of a total DFA is also a total DFA.

We call states of the form $(q, q)$ in $\Aa^{(2)}$ \emph{singletons}. Observe that the restriction of~$\Aa^{(2)}$ to singletons is equal to $\Aa$.
We denote by $G^{(2)} = (V^{(2)}, E^{(2)})$ the underlying digraph of~$\Aa^{(2)}$ obtained by forgetting the labels of the transitions. Note that $|E^{(2)}| = \OO(mn^4)$, and there exists an infinite series of complete UFAs over a two-letter alphabet with $|E^{(2)}| = \Theta(n^4)$~\cite[Appendix A]{Kiefer2019Arxiv}. If $\Aa$ is a total DFA, then $|E^{(2)}| = mn^2$.

\subsection{Minimum rank in \texorpdfstring{$\OO(mn^4)$}{O(n4)} time}\label{subsec:linalg-time}

We now perform the steps of the linear-algebraic algorithm described in \autoref{sec-nc2}, but implement them efficiently in terms of time and space complexity.

\begin{lem}\label{lem:mer-time}
    For a state $p$, the set $\Mer(p)$ can be computed in $\OO(mn^4)$ time and $\OO(n^2)$ space.
\end{lem}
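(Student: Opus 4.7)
The plan is to reduce the computation of $\Mer(p)$ to a backward reachability problem in the square digraph $G^{(2)}$. Unwinding the definitions, $q \in \Mer(p)$ holds if and only if there exist $w \in \Sigma^*$ and $r \in Q$ with $(p,w,r), (q,w,r) \in \Delta$, which is exactly to say that the singleton vertex $(r,r)$ is reachable from $(p,q)$ in $G^{(2)}$.

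Given this, I would run a single BFS (or DFS) in $G^{(2)}$ in the reverse direction, starting simultaneously from the set $\{(r,r) \mid r \in Q\}$ of all singletons, and mark every vertex $(p',q')$ it visits. Then one reads off $\Mer(p) = \{q \in Q \mid (p,q) \text{ is marked}\}$; as a bonus, this single traversal yields $\Mer(p)$ for every $p \in Q$ at no extra cost, which will be convenient in later subsections.

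For the complexity bounds, $|V^{(2)}| = n^2$ and $|E^{(2)}| = \OO(mn^4)$ by \autoref{subsec:square-aut}, so the BFS runs in $\OO(n^2 + mn^4) = \OO(mn^4)$ time. The one mildly delicate point, and the main obstacle to the stated $\OO(n^2)$ space bound, is that we must not materialise $G^{(2)}$ explicitly, since its edge set has size $\OO(mn^4)$. Instead, when visiting a vertex $(p',q')$ we enumerate its incoming edges on the fly: for each letter $a \in \Sigma$ we list the pairs $(p'',q'')$ with $(p'',a,p'),(q'',a,q') \in \Delta$ directly from the input representation of $\Delta$, at a total cost across the entire traversal that is still $\OO(mn^4)$ by a standard amortised argument. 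The remaining auxiliary data structures are a Boolean marker array indexed by $V^{(2)}$ and a BFS queue, both of size $\OO(n^2)$.
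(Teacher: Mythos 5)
Your proposal is correct and follows essentially the same route as the paper: a multi-source backwards search in $G^{(2)}$ from the singletons, with edges generated on the fly to keep the space at $\OO(n^2)$. The only cosmetic difference is that the paper marks $q$ when either $(p,q)$ or $(q,p)$ is visited, whereas you use only $(p,q)$; this is immaterial since $G^{(2)}$ and the set of singletons are invariant under swapping coordinates.
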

\begin{proof}
    Perform a multi-source backwards digraph search starting from all singletons in $G^{(2)}$ and label all states $q \in Q$ such that $(p, q)$ or $(q, p)$ is visited during this search. This search can be performed in time linear in the number of edges of $|E^{(2)}|$, and $|E^{(2)}| = \OO(mn^4)$. 
    Observe that only the vertices of this digraph have to be stored in the memory explicitly, since the edges can be computed on the fly from the input without increasing the time complexity, hence the space complexity of the algorithm is $\OO(n^2)$.
\end{proof}

\begin{lem}\label{lem:pseudcolumn-time}
    A maximal pseudo-column can be found in $\OO(mn^4)$ time and $\OO(n^2)$ space.
\end{lem}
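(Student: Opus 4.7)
The plan is to implement the four linear-algebraic steps from the proof of \autoref{thm:rank-in-NC} with careful time accounting. First, form $\overline{A} = \frac{1}{m}\sum_{a \in \Sigma} M(a)$ in $\OO(mn^2)$ time, then compute rational vectors $\alpha,\beta \in \mathbb{Q}_{>0}^Q$ satisfying $\alpha^T\overline{A} = \alpha^T$, $\overline{A}\beta = \beta$, and $\alpha^T\beta = 1$ by solving two singular linear systems via Gaussian elimination in $\OO(n^3)$ time and $\OO(n^2)$ space. Second, compute a basis $\Gamma$ of $U$ (see below). Third, pick any $q \in Q$ and use \autoref{lem:mer-time} to obtain $\Mer(q)$ in $\OO(mn^4)$ time and $\OO(n^2)$ space. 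Fourth, assemble the linear system from \autoref{thm:pseud-linsyst}, consisting of one equation per vector in $\Gamma$, the equation $y(q)=1$, and the equations $y(q')=0$ for each $q' \notin \Mer(q)$, and solve it by Gaussian elimination in $\OO(n^3)$ time; by \autoref{thm:pseud-linsyst}, any solution is a maximal pseudo-column.

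The main step that requires care is the basis computation for $U$. The key observation is that $U$ is the smallest subspace containing the vectors $\alpha^T M(a) - \alpha^T$ for $a \in \Sigma$ that is closed under right-multiplication by every $M(a)$; closure follows from the identity
\[(\alpha^T M(w) - \alpha^T)\,M(a) \;=\; (\alpha^T M(wa) - \alpha^T) \;-\; (\alpha^T M(a) - \alpha^T).\]
This enables a standard forward-space worklist algorithm: maintain $\Gamma$ in row-echelon form and, for every pair of a current basis vector $\gamma^T$ and a letter $a$, compute $\gamma^T M(a)$ and incrementally reduce it against $\Gamma$, keeping the reduced vector as a new pivot if it is nonzero. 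The delicate point is that the incremental reduction must cost only $\OO(n^2)$ per vector (achieved by the pivoted representation) rather than a full $\OO(n^3)$ rank recomputation. Since $|\Gamma| \le n$ and there are $m$ letters, at most $\OO(mn)$ candidate vectors are ever produced, each incurring $\OO(n^2)$ work for the matrix-vector product and $\OO(n^2)$ for the reduction, giving $\OO(mn^3)$ time and $\OO(n^2)$ space for this step.

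Summing the four steps, the bottleneck is the $\Mer(q)$ computation, so the overall running time is $\OO(mn^4)$, and the space is dominated by storing $\overline{A}$, $\Gamma$, and the linear system, all of which fit in $\OO(n^2)$.
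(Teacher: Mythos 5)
Your proposal is correct and follows essentially the same route as the paper's proof: compute $\overline{A}$, $\alpha$, $\beta$; compute $\Mer(q)$ via the square digraph; build a basis of $U$ with a worklist/echelon-form forward-space algorithm in $\OO(mn^3)$ time; and solve the linear system of \autoref{thm:pseud-linsyst}, with the $\Mer(q)$ computation as the $\OO(mn^4)$ bottleneck. Your direct characterisation of $U$ as the smallest subspace containing the vectors $\alpha^T M(a)-\alpha^T$ and closed under right multiplication by each $M(a)$ merely shortcuts the paper's detour through $M'(a)=M(a)-I$ and \autoref{lem:U=U'}; the resulting algorithm is the same.
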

\begin{proof}
To use \autoref{thm:pseud-linsyst} we first need to set up the linear system of equations described there.
The average matrix $\overline{A}$ can be computed in time $\OO(m n^2)$.
The weight vectors $\alpha, \beta \in \mathbb{Q}^Q$ can then be computed in time $\OO(n^3)$ by solving a system of linear equations.
Then, using \autoref{lem:mer-time},
we compute in $\OO(mn^4)$ time the set $\Mer(p)$ for some state $p$.
We also need to compute a basis of the vector space~$U$ defined in \autoref{subsec:pseudocolumns}.
As in the proof of \autoref{lem:U-in-NC}, we compute a basis of $U = U' = \spn{\alpha^T M'(w) \mid w \in \Sigma^+}$, which is the smallest vector space that contains $\alpha^T M(a)$ for all $a \in \Sigma$ and is closed under post-multiplication with $M(a)$ for all $a \in \Sigma$.
This can be done in $\OO(mn^3)$ time and $\OO(n^2)$ space using a worklist algorithm and keeping a basis in echelon form using Gaussian elimination, as described, e.g., in \cite[Section 2]{KieferWeighted20}.
Finally, we solve the system of linear equations from \autoref{thm:pseud-linsyst}, 
which can be done in $\OO(n^3)$ time.
Each step requires at most $\OO(n^2)$ space.
\end{proof}

By \autoref{lem:Cas}, we thus get that $\mcw$ and $\mrw$ can be computed in  $\OO(mn^4)$ time and $\OO(n^2)$ space, which together with \autoref{thm:rank-weight} proves \autoref{thm:linalg-time}. 
We remark that for total DFAs the proof of \autoref{lem:pseudcolumn-time} gives $\OO(mn^3)$ time for computing the rank. The combinatorial algorithm provided below will improve it to $\OO(n^3 + mn^2)$ (see \autoref{subsec-dfas}), while additionally finding a matrix of minimum rank.

\subsection{Efficiently constructing a maximal column}\label{subsec:max-col} 

We now consider a more general problem of finding a matrix of minimum rank in a zero-one matrix monoid. As mentioned above, by \autoref{thm:Cesari} we have to explicitly construct a maximal column for that, which turns out to be a more difficult task in terms of the time complexity. We will see in the next subsection that we only need one maximal column (together with a word constructing it) to get a matrix of minimum rank.
The goal of this subsection is thus to show how to efficiently compute a short representation of a word constructing a maximal column, since the word itself may be longer than the time complexity we are aiming for. We do so by reusing repeating subwords in such a word, and describing their occurrences with a straight line program.

In \cite[Section 5]{Eppstein1990}, an algorithm for constructing a word of rank one in total DFAs in $\OO(m^3 + mn^2)$ time and $\OO(n^2)$ space was suggested. Our approach for finding a maximal column and a word constructing it follows a similar direction, with a few key differences. Firstly, we observe that it is enough to only compute words merging states with one chosen state~$p$, instead of finding all pairs of mergeable states. This both simplifies the algorithm (in~\cite{Eppstein1990} an additional step is required to decrease the space complexity from $\OO(n^3)$ to $\OO(n^2)$, which we get for free) and allows to present our results in terms of straight line programs, giving a better insight into the regularities present in the constructed word of minimum rank.

We define set straight line programs (set-SLPs), which are just SLPs with multiple initial symbols, and thus encode a set of words instead of one word.
Formally, a \emph{set-SLP} is a tuple $(\VV, \Sigma, R, \Start)$, where $\VV$ and $\Sigma$ are disjoint finite sets of \emph{nonterminal} and \emph{terminal} symbols respectively, $R \colon \VV \to (\VV \cup \Sigma)^*$ is a function defining a derivation rule for each nonterminal symbol, and $\Start \subseteq \VV$ is a set of \emph{initial} symbols. For $v \in \VV$, we write $R(v)$ as $v \to w$ with $w \in (\VV \cup \Sigma)^*$, and we call $v$ and $w$ the left- and right-hand sides of this derivation rule respectively. The \emph{length} of a set-SLP is the total length of the right-hand sides of all the derivation rules.
The semantics of a set-SLP is defined as follows. Given an initial symbol $s \in \Start$, we recursively replace each symbol in $R(s)$ with the right-hand side of its derivation rule until we obtain a word over $\Sigma$, which is called \emph{the word encoded by~$s$}.
We require that each initial symbol produces a unique word over $\Sigma$ as a result of such derivation. Namely, we require that there exists a total linear order $\le$ on the set $\VV$ such that for all $v$ with $v \to w$, $w$ does not contain $v' \in \VV$ with $v' \le v$. The (multi-)set of words encoded by all initial symbols is called \emph{the set of words encoded by} a set-SLP.

\begin{exa}\label{ex:set-slp}
    Consider a set-SLP $(\{w_1, w_3, w_5, u_1, u_2, u_3\}, \{a, b\}, R, \{w_1, w_3, w_5\})$ with 
\[w_1 \to u_1, w_3 \to u_3 u_2, w_5 \to u_2, 
u_1 \to aab, u_2 \to aab, u_3 \to aaba.\]
This set-SLP encodes the (multi-)set $\{aab, aabaaab, aab\}$, and illustrates the reason why we are using set-SLPs: they allow to construct sets of words out of smaller ``pieces'' without having to explicitly repeat these ``pieces'' multiple times (in our example, we are reusing~$u_2$). Note that the set-SLPs that we construct below only encode sets of words whose total length is polynomial in the size of the set-SLPs.
\end{exa}

\begin{lem}
\label{lemma:fast-pair}
    Given a state $p$, a set-SLP of length $\OO(n^2)$ defining a set $\{w_q \mid q \in \Mer(p)\}$, where $w_q$ is a word with $p \in p \cdot w_q$ and $p \in q \cdot w_q$, can be computed in $\OO(mn^4)$ time and $\OO(n^2)$ space.
\end{lem}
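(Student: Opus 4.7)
The plan is to work in the square automaton $\Aa^{(2)}$ from \autoref{subsec:square-aut}. First I would establish the following equivalence: $q \in \Mer(p)$ holds if and only if there is a forward path in $\Aa^{(2)}$ from $(p,q)$ to $(p,p)$. For the ``if'' direction, such a path labelled $w$ gives $p \in p \cdot w$ and $p \in q \cdot w$, so $\{p,q\} \subseteq w \cdot p$; then $[w \cdot p]$ is a column containing both $p$ and $q$, hence $q \in \Mer(p)$. For the converse, if $q \in \Mer(p)$ there are $w$ and $r$ with $p, q \in w \cdot r$, which in $\Aa^{(2)}$ drives $(p,q)$ to the singleton $(r,r)$ via $w$; since $\Aa$ is strongly connected, there is $x$ with $p \in r \cdot x$, and the componentwise action on $\Aa^{(2)}$ drives $(r,r)$ to $(p,p)$ via $x$. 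Hence witnessing words $w_q$ exist for every $q \in \Mer(p)$, and the task reduces to computing short representations of them.

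Next, I would perform a breadth-first search from $(p,p)$ in the reverse of $\Aa^{(2)}$, recording for each discovered vertex $(p',q')$ its BFS parent $(p'',q'')$ together with a letter $a \in \Sigma$ satisfying $p'' \in p' \cdot a$ and $q'' \in q' \cdot a$. The backward edges of $\Aa^{(2)}$ would be enumerated on the fly directly from the input $M$, so the total cost is $\OO(|E^{(2)}|) = \OO(mn^4)$ while the BFS queue together with the parent-and-letter array occupies only $\OO(n^2)$ space. By the equivalence above, the discovered vertices of the form $(p,q')$ are exactly $\{(p,q') : q' \in \Mer(p)\}$, so $\Mer(p)$ itself is obtained as a byproduct of the search.

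Finally, I would define a set-SLP whose nonterminal set is $\{v_{(p',q')} : (p',q')\text{ discovered}\}$, with rules $v_{(p,p)} \to \varepsilon$ and $v_{(p',q')} \to a\, v_{(p'',q'')}$ whenever $(p'',q'')$ is the BFS parent of $(p',q')$ via letter $a$, and initial symbols $\Start = \{v_{(p,q)} : q \in \Mer(p)\}$. Expanding $v_{(p,q)}$ produces the letters along the unique root-to-$(p,q)$ path in the BFS tree traversed away from the root, which by construction is a forward path in $\Aa^{(2)}$ from $(p,q)$ to $(p,p)$, so the encoded word $w_q$ satisfies $p \in p \cdot w_q$ and $p \in q \cdot w_q$ as required. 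Each rule has length at most two and there are at most $n^2$ of them, giving total size $\OO(n^2)$; ordering the nonterminals by BFS distance to $(p,p)$ with arbitrary tie-breaking supplies the total order demanded by the set-SLP formalism. The main obstacle is the reachability equivalence of the first step, which is the conceptual heart of the argument and uses the strong connectivity of $\Aa$ in a nontrivial way; the BFS and the set-SLP construction are then routine, modulo the care needed to scan the $\OO(mn^4)$ edges of $\Aa^{(2)}$ within $\OO(n^2)$ space by avoiding ever materialising them.
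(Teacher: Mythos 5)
Your proposal is correct and follows essentially the same route as the paper: a backwards search from $(p,p)$ in the square digraph $G^{(2)}$, with edges enumerated on the fly to stay within $\OO(n^2)$ space, and an SLP encoding of the root-to-vertex paths of the resulting search tree. The only (immaterial) difference is bookkeeping: the paper compresses the tree into $\OO(n)$ maximal branches with one nonterminal per branch, whereas you use one length-$\le 2$ rule per tree vertex; both give total set-SLP length $\OO(n^2)$.
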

\begin{proof}[Proof sketch]
Call vertices $(p, q)$ with $q \in \Mer(p)$ \emph{merging}. The idea is to construct, by a digraph search of $G^{(2)}$, a directed tree $T$ rooted in $(p, p)$  and containing a path from each merging vertex to the root, and then use the joint subpaths of these paths in the tree to obtain a short set-SLP describing these paths. See \autoref{fig:square} for an example.
\end{proof}
\begin{proof}[Proof of \autoref{lemma:fast-pair}]
Call vertices $(p, q)$ with $q \in \Mer(p)$ \emph{merging}. The idea is to construct a directed tree rooted in $(p, p)$ and containing paths from each merging vertex to the root, and then use the joint subpaths of these paths in the forest to obtain a short set-SLP description of these paths.

First, by performing a backwards digraph search in $G^{(2)}$ starting from $(p, p)$, find a subgraph $T$ of $G^{(2)}$ with the following properties (see \autoref{fig:square} for an illustration):
\begin{itemize}
    \item $T$ is a directed tree directed towards its root $(p, p)$;

    \item if $q \in \Mer(q)$, then $T$ contains $(p, q)$ or $(q, p)$;

    \item every leaf of $T$ is a merging vertex, and these are the only leaves in $T$.
\end{itemize}

Call a directed path $\rho$ in $T$ a \emph{maximal branch} if all its edges belong to $T$, every vertex of~$\rho$ except the last one has outdegree exactly one in $T$, and merging vertices only occur as the first or the last vertex in $\rho$. Intuitively, maximal branches are paths in $T$ between each pair of consecutive branching, leaf, root or merging vertices of~$T$. See \autoref{fig:square} for an example.
Clearly, computing $T$ and then $\rho_1, \ldots, \rho_k$ can be done in time linear in $|E^{(2)}|$.

Let $w_1, \ldots, w_k$ be the words labeling the paths $\rho_1, \ldots, \rho_k$. 
By going backwards from the root of $T$, we can construct a set-SLP for the required words by expressing them as concatenations of $w_1, \ldots, w_k$. This can easily be done in time linear in $|E^{(2)}|$.

It remains to estimate the length of the obtained set-SLP. Since $T$ is a tree, the total length of all words $w_1, \ldots, w_k$ is at most $|V^{(2)}| = n^2$. The number of maximal branches $k$ is~$\OO(n)$. Hence, the length of the obtained set-SLP is~$\OO(n^2)$.
\end{proof}

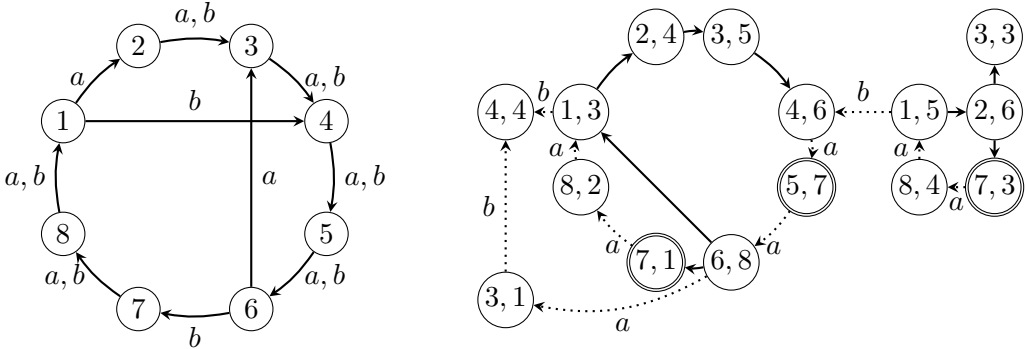
\begin{figure}[h]\centering
\begin{subfigure}[c]{0.25\textwidth} \centering
\begin{tikzpicture} [node distance = 2cm]
\tikzset{every state/.style={inner sep=1pt,minimum size=1.5em}}

\node [state] at (0, 0.5) (1) {$1$};
\node [state] at (1, 1.5) (2) {$2$};
\node [state] at (2.5, 1.5) (3) {$3$};
\node [state] at (3.5, 0.5) (4) {$4$};
\node [state] at (3.5, -1) (5) {$5$};
\node [state] at (2.5, -2) (6) {$6$};
\node [state] at (1, -2) (7) {$7$};
\node [state] at (0, -1) (8) {$8$};

\path [-stealth, thick]
(1) edge [bend left=10] node[left] {$a$} (2)
(2) edge [bend left=10] node[above] {$a, b$} (3)
(3) edge [bend left=10] node[right] {$a, b$} (4)
(4) edge [bend left=10] node[right] {$a, b$} (5)
(5) edge [bend left=10] node[right] {$a, b$} (6)
(6) edge [bend left=10] node[below] {$b$} (7)
(7) edge [bend left=10] node[left] {$a, b$} (8)
(8) edge [bend left=10] node[left] {$a, b$} (1)

(1) edge [] node[above] {$b$} (4)
(6) edge [] node[right] {$a$} (3)

;
\end{tikzpicture}
\end{subfigure}
\hspace{1cm}
\begin{subfigure}[c]{0.65\textwidth} \centering
\begin{tikzpicture} [node distance = 2cm]
\tikzset{every state/.style={inner sep=1pt,minimum size=1.5em}}

\node [state] at (-1, 0) (4) {$4, 4$};

\node [state] at (0, 0) (13) {$1, 3$};

\node [state] at (-1, -2.5) (31) {$3, 1$};
\node [state] at (1, 1) (24) {$2, 4$};
\node [state] at (2, 1) (35) {$3, 5$};
\node [state] at (3, 0) (46) {$4, 6$};
\node [state,accepting] at (3, -1) (57) {$5, 7$};
\node [state] at (2, -2) (68) {$6, 8$};
\node [state,accepting] at (1, -2) (17) {$7, 1$};
\node [state] at (0, -1) (28) {$8, 2$};

\node [state] at (4.5, 0) (15) {$1, 5$};
\node [state] at (5.5, 0) (26) {$2, 6$};
\node [state,accepting] at (5.5, -1) (37) {$7, 3$};
\node [state] at (4.5, -1) (48) {$8, 4$};

\node [state] at (5.5, 1) (3) {$3, 3$};

\path [-stealth, thick]
(13) edge [bend left=10] node[above] {} (24)
(24) edge [bend left=10] node[above] {} (35)
(35) edge [bend left=10] node[above] {} (46)

(46) edge [dotted, bend left=10] node[right] {$a$} (57)
(57) edge [bend left=10, dotted] node[below] {$a$} (68)
(68) edge [bend left=10] node[above] {} (17)
(17) edge [bend left=10, dotted] node[below] {$a$} (28)
(28) edge [bend left=10, dotted] node[left] {$a$} (13)

(13) edge [dotted] node[above] {$b$} (4)
(68) edge [] node[above] {} (13)

(68) edge [dotted, bend left=20] node[below] {$a$} (31)
(31) edge [dotted] node[left] {$b$} (4)

(15) edge [] node[above] {} (26)
(26) edge [] node[above] {} (37)
(37) edge [dotted] node[below] {$a$} (48)
(48) edge [dotted] node[left] {$a$} (15)

(15) edge [dotted] node[above] {$b$} (46)

(26) edge [] node[above] {} (3)

;
\end{tikzpicture}
\end{subfigure}
\caption{An example of $\Aa$ (left), and a part of the underlying digraph $G^{(2)}$ of its square automaton (right). Merging vertices are doubly circled. The edges of $T$ for $p = 7$ are represented by dotted edges, and these edges are labelled with one of the letters labelling the corresponding transition in $\Aa^{(2)}$. Furthermore, we have $\rho_1 = (7, 1) \to (8, 2) \to (1, 3) \to (4, 4)$, $\rho_2 = (5, 7) \to (6, 8) \to (3, 1) \to (4, 4)$, $\rho_3 = (7, 3) \to (8, 4) \to (1, 5) \to (4, 6) \to (5, 7)$. A set-SLP encoding the labels of these path is presented in \autoref{ex:set-slp}, with~$u_i$ labelling the path $\rho_i$, $i \in \{1, 2, 3\}$.
}\label{fig:square}
\end{figure}

To construct a maximal column more efficiently, we will use \cite[Corollary 11]{Chistikov2026Arxiv}.

\begin{prop}[\cite{Chistikov2026Arxiv}]\label{prop:mat-mult}
    Given $M_1, M_2 \in \{0, 1\}^{n \times n}$ such that $M_1 M_2 \in \{0, 1\}^{n \times n}$, one can compute $M_1 M_2$ in $\OO(n^2)$ time.
\end{prop}

Since in the remainder of this section we are dealing only with matrices from a fixed zero-one matrix monoid, \autoref{prop:mat-mult} means that we can always perform matrix multiplication in quadratic time.

\begin{lem}\label{lemma:time-max-col}
    For a given state $p$ of $\Aa$, an SLP of length $\OO(n^2)$ encoding a word $w$ such that $[w \cdot p]$ is a maximal column can be computed in $\OO(mn^4)$ time and $\OO(n^3)$ space.
\end{lem}
\begin{proof}[Proof sketch]
Compute the matrices of the words encoded by the set-SLP from \autoref{lemma:fast-pair}. 
We rely on the property, already used in some form in \cite{Carpi1988}, that if for all $q \in \Mer(p)$, $q \ne p$, the vector $[w \cdot q]$ is zero, then $[w \cdot p]$ is a maximal column. To construct a word with this property, we iteratively concatenate the words $w_q$ depending on nonzero columns in the matrix in each iteration. The number of iterations is bounded by $n$.
\end{proof}

\begin{proof}[Proof of \autoref{lemma:time-max-col}]
By \autoref{lemma:fast-pair}, 
we can compute in $\OO(mn^4)$ time a set-SLP of length $\OO(n^2)$ defining a set $\{w_q \mid q \in \Mer(p)\}$, where $w_q$ is a word with $p \in p \cdot w_q$ and $p \in q \cdot w_q$. For each~$q \in \Mer(p)$, we then compute $M(w_q)$, which in total requires $\OO(n^2)$ matrix multiplications and $\OO(n^3)$ space.

Perform now the following algorithm that iteratively constructs SLPs for words $w_i$ and~$M(w_i)$ for~$i \ge 0$. Let $w_0$ be the empty word. Assume that an SLP for $w_{i - 1}$ and $M(w_{i - 1})$ are already constructed. If there is a state $q \in \Mer(p)$, $q \ne p$, such that $[w_{i - 1} \cdot q] = M(w_{i - 1})[q]$ is not the zero vector, take $w_i = w_{i - 1} w_q$ and compute $M(w_i)$, otherwise stop and output~$w_{i - 1}$. Clearly, each step of the algorithm only requires a constant number of matrix multiplications.

The algorithm clearly terminates in at most $|\Mer(p)|$ steps, since each iteration decrements by at least one the number of states $q \in \Mer(p)$ such that $[w_i \cdot q]$ is not the zero vector. It is also easy to see that the constructed SLP for $w$ has length $\OO(n^2)$.

Assume now that $[w \cdot p]$ is not a maximal column. Let $C \subseteq Q$ be such that $[C] \ge [w \cdot p]$ is a maximal column. By definition of $\Mer(p)$, $C \subseteq \Mer(p)$, hence for every $q \in C$, $q \ne p$, the vector $[w \cdot q]$ is zero. Since $[C] \ge [w \cdot p]$, $p \in C$. Thus we get that $M(w)[C] = \sum_{q \in C} [w \cdot q] = [w \cdot p]$.
By \autoref{lem:max-column-stable}, $M(w)[C]$ is a maximal column, hence $[w \cdot q]$ must also be a maximal column. 
\end{proof}

\subsection{Finding a matrix of minimum rank}\label{subsec-finding-mat}

We now use the results of the previous section to construct a matrix of minimum rank. The key idea is as follows: since the set of maximal columns is stable under left multiplication by matrices from the monoid, we can iteratively make each column of the matrix maximal or zero by, intuitively, applying the same word (together with a short ``reachability'' word) to a state in each column. This simple observation significantly decreases the time complexity of our algorithm compared to the naive implementation of the algorithm from \cite{Carpi1988} constructing a word of minimum rank. Indeed, in the approach of \cite{Carpi1988}, the word is constructed letter by letter, and requires to know the result of applying the last letter at every step. Since the constructed word has length $\OO(rn^3) = \OO(n^4)$, where $n$ is the number of states and $r$ is the rank, this results in $\OO(n^{4+\omega})$ time complexity. By efficiently constructing only one maximal column (as described in the previous section) and reusing it for the whole set of states (as described in this section), we decrease the time complexity to $\OO(n^{2+\omega})$.

\begin{prop}\label{prop:time-minrank-mat}
    An SLP of length $\OO(n^2)$ encoding a word $w$ of minimum rank can be computed in $\OO(mn^4)$ time and $\OO(n^3)$ space.
\end{prop}
\begin{proof}[Proof sketch]
 Compute the matrix of the word $w$ encoded by the SLP from \autoref{lemma:time-max-col}. Iteratively, for each $q \in Q$, concatenate $w$ with a word of length at most $n$ mapping $p$ to a state corresponding to a nonzero element of the row in the current iteration. Denote by $w_n$ the resulting word, which has the property that all nonzero columns of $M(w_n)$ are maximal. Symmetrically compute $w'_n$ for rows. Then by \autoref{lem:all-max-column} the word $w_{n}w'_{n}w_{n}w'_{n}$ matrix has minimum rank.
\end{proof}
\begin{proof}[Proof of \autoref{prop:time-minrank-mat}]
By \autoref{lemma:time-max-col},
for a given state $p$ of $\Aa$, we can compute in $\OO(mn^4)$ time an SLP of length $\OO(n^2)$ defining a word $w$ such that $[w \cdot p]$ is a maximal column. Compute $M(w)$, which can be done in  $\OO(n^4)$ time.

Let $Q = \{q_1, \ldots, q_n\}$. Define $v_0 = \epsilon$. Clearly, $M(v_0)$ is the identity matrix. 
For each $1 \le i \le n$, perform the following algorithm. If $[v_{i - 1} \cdot q_i]$ is the zero vector, take $v_i = v_{i - 1}$ and go the the next step. Otherwise, find $q$ such that $q \in v_{i - 1} \cdot q_i$, and 
find a word $u_{p \to q}$ of length at most $n - 1$ mapping $p$ to $q$, and compute $M(u_{p \to q})$, which can be done by~$\OO(n)$ matrix multiplications. Take $w_i = w u_{p \to q} w_{i - 1}$, and compute $M(w_i)$, which requires a constant number of matrix multiplications. Go to the next step of the algorithm.

Observe that after the $i$th step of the algorithm, $[v_j \cdot q_j]$ is a maximal column. Indeed, $[v_i \cdot q_i] = [w \cdot p]$, and for all $j < i$, $[v_j \cdot q_j]$ is a result of left multiplication of a maximal column by a matrix from the monoid, which is a maximal column by \autoref{lem:max-column-stable}.
Moreover, the obtained SLP of $w_{n}$ has length $\OO(n^2)$.

It remains to construct symmetrically $w'_{n}$ for rows instead of columns. Then each column and each row of $M(w_{n}w'_{n})$ is either maximal or zero, and by \autoref{lem:all-max-column} the word $w_{n}w'_{n}w_{n}w'_{n}$ matrix has minimum rank.
\end{proof}

Given an SLP of length $\OO(n^2)$ encoding a word $w$, we can compute the matrix of $w$ by computing the matrices of words occurring in the derivation of $w$ from bottom to top in time $\OO(n^4)$.
Thus we prove \autoref{thm:finding-matrix}. We also get the following result, which can be seen as a proof of a very weak version of the \v{C}ern\'{y} conjecture generalised from rank one words in total DFAs to minimum rank words in complete UFAs.

\begin{thm}
    For every $n$-state complete UFA, there exists an SLP of length $\OO(n^2)$ encoding a word of minimum rank.
\end{thm}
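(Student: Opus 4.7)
The plan is to obtain this theorem as a direct consequence of \autoref{prop:time-minrank-mat}. That proposition already supplies an explicit construction of an SLP of length $\OO(n^2)$ encoding a word of minimum rank for a strongly connected complete UFA, which is the standing assumption fixed at the start of \autoref{sec-time}; the present theorem simply records the existence statement one gets by discarding the constructive and complexity content.

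For a complete UFA that is not strongly connected, I would reduce to the strongly connected case via \autoref{prop:unamb-sc}. Decomposing $Q$ according to the strongly connected components of $M$ gives an upper block triangular representation of every $M(w)$; applying \autoref{prop:unamb-sc} inductively yields $\rku(M) = \sum_i \rku(M_i)$, and its proof exhibits a minimum-rank word of $M$ as a concatenation $w_1 w_2 \cdots w_k$ in which each $w_i$ is a minimum-rank word of the $i$-th diagonal block (and incomplete components contribute words mapping their block to zero). Writing $n_i$ for the size of the $i$-th component, \autoref{prop:time-minrank-mat} supplies an SLP of length $\OO(n_i^2)$ per strongly connected complete component, and one can concatenate these SLPs by introducing a single fresh initial nonterminal whose right-hand side lists the component initial symbols in order. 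Since $\sum_i n_i = n$, the total length is at most $\sum_i \OO(n_i^2) \le \OO(n^2)$, as required.

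The main technical work is entirely contained in \autoref{prop:time-minrank-mat}; once that is in hand, the remaining argument is essentially a rereading of its output as an existence claim, plus routine SCC bookkeeping, so no substantive obstacle arises in the present theorem.
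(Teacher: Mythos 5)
Your proposal matches the paper's: the theorem is presented there as an immediate corollary of \autoref{prop:time-minrank-mat}, with the reduction to strongly connected components left implicit via \autoref{prop:unamb-sc}. Your explicit SCC bookkeeping (summing $\OO(n_i^2)$ over components and joining the SLPs) is a slight elaboration of what the paper leaves unsaid; the only point neither you nor the paper pins down is how to obtain an $\OO(n_i^2)$-length SLP for a word annihilating an \emph{incomplete} component (which a complete UFA may still possess), but that loose end is inherited from the paper rather than introduced by you.
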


We remark that the length of the word encoded by the constructed SLP asymptotically matches the best known upper bound  for words of minimum rank: $\OO(n^4)$ for complete UFAs~\cite{Carpi1988} and $\OO(n^3)$ for total DFAs~\cite{Klyachko1987}. In particular, one can efficiently compute words of minimum rank within these bounds.

\subsection{Total DFAs}\label{subsec-dfas}

For total DFAs, we follow the same algorithms as in the proof of \autoref{thm:finding-matrix}, but exploit the fact that elementary matrix operations can be performed more efficiently. Namely, if $\Aa$ is a total DFA, then each word defines a transformation on $Q$. By storing matrices of words as transformations, we get that matrix multiplication can be performed in $\OO(n)$ time, and each matrix requires $\OO(n)$ space. Moreover, we have $|E^{(2)}| = mn^2$. By taking these improvements into account, we get the proof of \autoref{thm:main-dfas}.

\section{Algebraic synchronisation criterion}\label{sec-algebraic}
The rank of~$M$ can be viewed as a combinatorial property, in that matrix multiplication is not commutative, and even the rank of a matrix product can depend on the order of the multiplied matrices.
Extending the results of \autoref{sec-toolbox} but aiming at a more structural result, we address a more general question in this section: is there a specific vector space, perhaps a joint invariant subspace of the generating matrices, whose dimension tells us something about the rank of $M$? 

In total DFAs, every row $[q]^T$ is maximal, so $\spn{\MRow} = \mathbb{R}^Q$. In \cite[Criterion~1]{DBLP:journals/isci/BerlinkovS16}, the following result was proved. A different proof was independently and concurrently provided in \cite[\S8]{Voynov2015Compact}, see also \cite{Protasov2021}\footnote{ \cite[Theorem 2]{Protasov2021} states that if the rank of a total DFA is greater than one, then there exists a non-trivial joint invariant subspace of its generating matrices. However, such a subspace exists for all total DFAs regardless of their rank~\cite[Corollary 4]{Protasov2017}. 
Hence, to characterise total DFAs of rank greater than one, we have to consider the existence of some specific joint invariant linear subspace, for example the one in the statement of \autoref{thm:alg-criterion-dfas}.}.

\begin{thm}[Algebraic synchronization criterion for total DFAs]\label{thm:alg-criterion-dfas}
    If $M$ is a total DFA, we have 
    $\spn{\alpha^T M(w) \mid w \in \Sigma^*} = \mathbb{R}^Q$
    if and only if $r=1$.
\end{thm}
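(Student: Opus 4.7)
My plan is to analyze the space $V := \spn{\alpha^T M(w) \mid w \in \Sigma^*}$ through its orthogonal complement
$V^\bot \ = \ \{y \in \R^Q \mid \alpha^T M(w)\,y = 0 \text{ for all } w \in \Sigma^*\}$,
so the theorem reduces to showing that $V^\bot = \{0\}$ if and only if $r = 1$. Recall that in the total DFA case we take $\beta = [Q]$, so $\alpha^T [Q] = \alpha^T \beta = 1$.

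For the direction $r = 1 \Rightarrow V = \R^Q$, I would take a word $u$ of minimum rank. By \autoref{thm:Cesari} (and the remark that in a total DFA $R_1$ is a singleton and $C_1 = Q$) we have $M(u) = [Q]\,[q]^T$ for some $q \in Q$, so $\alpha^T M(u) = (\alpha^T [Q])\,[q]^T = [q]^T$. By strong connectivity, for every $q' \in Q$ there is $v \in \Sigma^*$ with $q \cdot v = q'$, and therefore $\alpha^T M(uv) = [q \cdot v]^T = [q']^T$. Hence $V$ contains every standard basis vector, so $V = \R^Q$.

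For the converse direction I would argue the contrapositive: assume $r \ge 2$ and exhibit a nonzero $y \in V^\bot$. Fix a minimum-rank word $u$ and, by \autoref{thm:Cesari}, write $M(u) = \sum_{i=1}^r [C_i]\,[q_i]^T$ with the $C_i$ pairwise disjoint (and nonempty) and each $[C_i] = [u \cdot q_i]$ a maximal column. The key observation is that for every $w \in \Sigma^*$ and every $i$, the vector $M(w)[C_i] = [wu \cdot q_i]$ is again a maximal column by \autoref{lem:max-column-stable}, and hence has weight exactly $\mcw$ by \autoref{lem:max-notions}. Consequently $\alpha^T M(w)[C_i] = \mcw$ for every $w$ and every $i$, so $y := [C_1] - [C_2]$ satisfies $\alpha^T M(w)\,y = \mcw - \mcw = 0$ for every $w$, i.e., $y \in V^\bot$. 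Since $C_1, C_2$ are disjoint and nonempty, $y \ne 0$, so $V^\bot \neq \{0\}$ and $V \ne \R^Q$.

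I do not anticipate any real obstacle. The only conceptual step is recognizing that the weight of a maximal column is preserved under left multiplication by an arbitrary matrix from the monoid, and that this immediately yields a nontrivial linear relation on $\{\alpha^T M(w)\}_{w}$ whenever the rank is at least two; this is a direct combination of \autoref{lem:max-column-stable} and \autoref{lem:max-notions}. The rest is bookkeeping.
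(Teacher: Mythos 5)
Your proof is correct. The paper does not prove \autoref{thm:alg-criterion-dfas} from scratch (it cites it from prior work and positions it as the total-DFA specialisation of the row version of \autoref{thm:algebraic}, using that $\spn{\MRow}=\mathbb{R}^Q$ for total DFAs), but your argument is essentially the same mechanism as the paper's proof of \autoref{thm:algebraic}(b)--(c): the difference of two disjoint maximal columns (there, maximal rows) lies in the orthogonal complement of the forward space because maximal columns retain weight $\mcw$ under left multiplication by any $M(w)$, and in the $r=1$ direction strong connectivity recovers every standard basis row vector from $\alpha^T M(u)=[q]^T$.
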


It is thus reasonable to ask if this statement can be generalised to the case where $r > 1$ and $M$ is an arbitrary unambiguous monoid morphism. The theorem below provides such a generalisation, using the results obtained above. In particular, it implies that for total DFAs $\dim \spn{\alpha^T M(w) \mid w \in \Sigma^*} \le n - r + 1$. To be consistent with the previous results, we formulate the statements for columns, but an analogous symmetric version for rows, involving $\spn{\alpha^T M(w) \mid w \in \Sigma^*}$ and $\spn{\MRow}$ instead of, respectively, $V$ and $\spn{\MCol}$ also holds true. 

\begin{thm}\label{thm:algebraic}
Define $V := \spn{M(w) \beta \mid w \in \Sigma^*}$.
We have:
\begin{enumerate}
\item[(a)] $V \subseteq \spn{\MCol}$.
\item[(b)] $\dim V + r-1 \le \dim \spn{\MCol}$.
\item[(c)] $V = \spn{\MCol}$ if and only if $r=1$.
\end{enumerate}
\end{thm}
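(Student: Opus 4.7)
The plan for all three parts centers on the Cesari decomposition (\autoref{thm:Cesari}) together with the compactness argument of \autoref{prop:compact}. For (a), any word $u$ of minimum rank satisfies $M(u)\beta = \mrw\sum_i[C_i] \in \spn{\MCol}$ directly from Cesari. To extend this to arbitrary $w$, I would apply \autoref{prop:compact}: for any linear functional $x^T$ vanishing on $\spn{\MCol}$, one has $x^T M(u)\beta = 0$ for every minimum-rank $u$, so the proposition with $c=0$ yields $x^T M(w)\beta = 0$ for all $w$; hence $M(w)\beta \in \spn{\MCol}$.

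For (b), fix a minimum-rank word $u$ with Cesari form $M(u) = \sum_{i=1}^r[C_i][R_i]^T$. The plan is to show that $V \cap \spn{[C_1],\ldots,[C_r]}$ equals $\R\cdot\sum_i[C_i]$, whence the dimension formula gives $\dim(V + \spn{[C_i]}) = \dim V + r - 1 \le \dim\spn{\MCol}$. The inclusion $\supseteq$ is immediate. For the opposite inclusion, write any $v = \sum_i \mu_i[C_i] \in V$ as $v = \sum_j \lambda_j M(u_j)\beta$ with each $u_j$ of minimum rank (the fact that $V$ is spanned by $M(u')\beta$ for minimum-rank $u'$ only follows from the same compactness argument as in (a)). Then apply $[R_k]^T$ on the left. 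An analysis of $M(u^2) = \sum_{k,i}|R_k \cap C_i|[C_k][R_i]^T$ shows that the matrix $(|R_k \cap C_i|)_{k,i}$ is a permutation matrix $\tau$: the rows and columns of $M(u^2)$ must be maximum rows and columns of $\beta$- and $\alpha$-weights $\mrw$ and $\mcw$ respectively, forcing a perfect matching. Thus the LHS equals $\mu_{\tau(k)}$.

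The central technical lemma is that for every pair of minimum-rank words $u, u_j$, $|R_k^{(u)} \cap \bigcup_\ell C_\ell^{(u_j)}| = 1$ for all $k$. I would prove this by inspecting the row of $M(uu_j)$ at any state $p \in C_k^{(u)}$, which equals $\sum_\ell |R_k^{(u)} \cap C_\ell^{(u_j)}|\,[R_\ell^{(u_j)}]^T$. Since $uu_j$ has minimum rank, this row must be a maximum row, and as the $[R_\ell^{(u_j)}]^T$ have disjoint supports of $\beta$-weight $\mrw$, the row's $\beta$-weight equals $|\{\ell : R_k^{(u)} \cap C_\ell^{(u_j)} \ne \emptyset\}|\cdot\mrw$; equating to $\mrw$ yields the count $1$, and summing the indicators gives the claim. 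With this, the RHS of the $[R_k]^T$-equation becomes $\mrw\sum_j\lambda_j$, independent of $k$; since $\tau$ is a permutation, all $\mu_i$ coincide, completing (b).

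For (c), the only-if direction is immediate from (b): $r \ge 2$ forces $\dim V < \dim\spn{\MCol}$. For the if direction, specializing (b) to $r=1$ gives $[C^{(u)}] \in V$ for every minimum-rank $u$; the remaining step is to show that every maximum column $[C'] = [w \cdot q_0]$ arises as $[C^{(wv)}]$ for some minimum-rank $wv$, which I would do by choosing $v$ such that $M(v) = [S][q_0]^T$ for a suitable $S$, making every nonzero column of $M(wv)$ equal $[C']$. The main obstacle is the technical lemma on $|R_k^{(u)} \cap \bigcup_\ell C_\ell^{(u_j)}|$, which delicately ties together the Cesari structures of two different minimum-rank words via $M(uu_j)$; the construction of the appropriate $v$ in the if direction of (c) is a secondary subtlety where strong connectivity enters crucially.
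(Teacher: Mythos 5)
Your proposal is correct, and parts (a) and (c) follow essentially the paper's own route: (a) is word-for-word the same (annihilators of $\spn{\MCol}$ kill $M(u)\beta$ for minimum-rank $u$ by \autoref{thm:Cesari}, then \autoref{prop:compact} with $c=0$), and (c) reduces, as in the paper, to realising each maximal column as the unique nonzero column of some $M(w)$ when $r=1$. Part (b) is where you genuinely diverge. The paper fixes the permutation $\pi$ with $[R_i]^T[C_j]=1$ iff $i=\pi(j)$ (exactly your matrix $\tau$, obtained from the same analysis of $M(uu)$) and then exhibits explicit separating functionals $[R_{\pi(j)}]^T-[R_{\pi(r)}]^T\in V^\bot$ that vanish on $[C_1],\dots,[C_{j-1}]$ but not on $[C_j]$, so the $[C_j]$ add $r-1$ dimensions on top of $V$. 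You instead compute the intersection exactly, showing $V\cap\spn{[C_1],\dots,[C_r]}=\mathbb{R}\cdot\sum_i[C_i]$, and conclude by the dimension formula. Both arguments rest on the same two ingredients; your version gives slightly more information (the precise intersection), while the paper's is shorter. Note also that your ``central technical lemma'' $|R_k\cap\bigcup_\ell C_\ell^{(u_j)}|=1$ is just the statement $[R_k]^T M(u_j)\beta=\mrw$, which follows in one line from the row analogue of \autoref{lem:max-column-stable} together with \autoref{lem:max-notions} ($[R_k]^T M(u_j)$ is again a maximal row, hence of weight $\mrw$); your detour through the rows of $M(uu_j)$ is correct but unnecessary, and the paper uses exactly this one-line version inside its proof of (b).

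Two small points to tidy. First, in (b) you correctly flag that you need $V=\spn{M(u')\beta\mid u'\text{ of minimum rank}}$; this does follow from \autoref{prop:compact} by the annihilator argument, but it should be stated as a lemma rather than left implicit. Second, in (c) the condition ``$M(v)=[S][q_0]^T$'' is not quite what you need: the row part of the rank-one factorisation is irrelevant, and what matters is that $q_0$ belongs to the \emph{column} support $S$ of $M(v)=[S][R]^T$; then $M(w)[S]=M(w)[q_0]=[C']$ by \autoref{lem:max-column-body}, since $S\subseteq\Mer(q_0)$ and $[w\cdot q_0]$ is maximal. The existence of such a $v$ (prepend to any minimum-rank word a word routing $q_0$ into its column support, using strong connectivity) should be spelled out, though the paper itself asserts the corresponding fact without proof.
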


\begin{proof}
Towards item~(a), it suffices to show that $\spn{\MCol}^\bot \subseteq V^\bot$.
Let $x^\top \in \spn{\MCol}^\bot$.
For every word $u \in \Sigma^*$ of minimum rank, by \autoref{thm:Cesari}, each column of~$M(u)$ is in $\spn{\MCol}$.
Thus, $x^\top M(u) \beta = 0$ holds for all $u \in \Sigma^*$ of minimum rank.
By \autoref{prop:compact}, it follows that $x^\top M(w) \beta = 0$ holds for all $w \in \Sigma^*$, i.e., $x^\top \in V^\bot$.

Towards item~(b), let $u \in \Sigma^*$ be of minimum rank.
By \autoref{thm:Cesari}, there are $r$ maximal columns $[C_1], \ldots, [C_r]$ and $r$ maximal rows $[R_1]^T, \ldots, [R_r]^T$ such that $M(u) = \sum_{i=1}^r [C_i] [R_i]^T$.
We conclude from item~(a) that $\spn{V, [C_1], \ldots, [C_r]} \subseteq \spn{\MCol}$.
Hence, it suffices to show that $\dim V + r-1 = \dim \spn{V, [C_1], \ldots, [C_{r-1}]}$.

We have
\[
M(u u) \ = \ \left( \sum_{i=1}^r [C_i] [R_i]^T \right) \left( \sum_{j=1}^r [C_j] [R_j]^T \right)
\ = \ \sum_{1 \le i,j \le r} [C_i] [R_i]^T [C_j] [R_j]^T\,.
\]
By unambiguousness, $[R_i]^T [C_j] \le 1$ for all $1 \le i,j \le r$.
For each~$i$ there is at most one~$j$ with $[R_i]^T [C_j] = 1$ as otherwise $[R_j]^T + [R_{j'}]^T$ for some $j' \ne j$ would appear in~$M(u u)$, contradicting the maximality of~$[R_j]^T$.
On the other hand, for each~$i$ there is at least one~$j$ with $[R_i]^T [C_j] = 1$ as otherwise $M(u u)$ has only $[C_{i'}]$ with $i' \ne i$ as nonzero columns, contradicting the fact that $M(u u)$ has rank~$r$.
Thus, for each~$i$ there is exactly one~$j$ with $[R_i]^T [C_j] = 1$.
With a symmetric argument, for each~$j$ there is exactly one~$i$ with $[R_i]^T [C_j] = 1$.
It follows that there is permutation $\pi: \{1, \ldots, r\} \to \{1, \ldots, r\}$ such that $[R_i]^T [C_j] = 1$ if and only if $i = \pi(j)$.

To show that $\dim V + r-1 = \dim \spn{V, [C_1], \ldots, [C_{r-1}]}$, it suffices to show for each $j \in \{1, \ldots, r-1\}$ that $[C_j] \not\in \spn{V, [C_1], \ldots, [C_{j-1}]}$.
To this end, let $1 \le j < r$ and define $x^T := [R_{\pi(j)}]^T - [R_{\pi(r)}]^T$.
Since $[R_{\pi(j)}]^T$ and $[R_{\pi(r)}]^T$ are maximal rows, for all $w \in \Sigma^*$ also $[R_{\pi(j)}]^T M(w)$ and $[R_{\pi(r)}]^T M(w)$ are maximal rows, implying that $x^T M(w) \beta = \mrw - \mrw = 0$; i.e., $x^T$ is in the orthogonal complement of~$V$.
Further, for all $i \in \{1, \ldots, j-1\}$ we have $x^T [C_i] = [R_{\pi(j)}]^T [C_i] - [R_{\pi(r)}]^T [C_i] = 0-0 = 0$; i.e., $x^T$~is orthogonal also to~$[C_i]$.
But $x^T$~is not orthogonal to~$[C_j]$, as $x^T [C_j] = [R_{\pi(j)}]^T [C_j] - [R_{\pi(r)}]^T [C_j] = 1 - 0 = 1$.
This completes the proof of item~(b).

Towards item~(c), if $r>1$ we have $V \subsetneq \spn{\MCol}$ by item~(b).
Let $r=1$.
By item~(a) we have $V \subseteq \spn{\MCol}$.
Towards the opposite inclusion, let $y \in \MCol$.
Since $r=1$, there is $w \in \Sigma^*$ such that $y$ is the only nonzero column of~$M(w)$.
Thus, $y$~is a multiple of $M(w) \beta$.
Hence, $y \in V$.
\end{proof}

\begin{rem}
The inequality in \autoref{thm:algebraic}~(b) and its row version can be both strict if $r>1$, even in total DFAs. In \autoref{ex:snap1}, $V$ is spanned by $\begin{pmatrix} 1 & 1 & 1 & 1\end{pmatrix}^T$, but 
\[\dim \spn{\MCol} = 3 > 1 + 2 - 1 = \dim V + r-1.\]
For the row version, $\spn{\alpha^T M(w) \mid w \in \Sigma^*}$ is spanned by 
$\begin{pmatrix}1 & 0 & 1 & 0\end{pmatrix}$ and
$\begin{pmatrix}0 & 1 & 0 & 1\end{pmatrix}$, but 
\[\dim \spn{\MRow} = 4 > 2 + 2 - 1 = \dim \spn{\alpha^T M(w) \mid w \in \Sigma^*} + r - 1.\]
\end{rem}

\section{Conclusions and open problems}\label{sec-conclusions}
We list a few open questions that follow directly from our work.

\begin{itemize}
    \item In \cite{Eppstein1990}, it was asked if a word of rank one for a total DFA can be found in \NC. Similarly, can a matrix of minimum rank for a total DFA be computed in \NC?

    \item Given an unambiguous morphism $M \colon \Sigma \to \{0, 1\}^{Q \times Q}$ and a vector $\alpha \in \mathbb{Q}_{>0}^Q$, can a basis of  $\spn{\alpha^T M(w) \mid w \in \Sigma^*}$ be computed faster than in $\OO(|Q|^3)$ time? This would improve algorithms for several fundamental problems for weighted automata~\cite{KieferWeighted20}. Similarly, for total DFAs, computing a basis of $U$ from \autoref{subsec:pseudocolumns} in subcubic time (see the proof of \autoref{lem:pseudcolumn-time}) would allow to compute the rank of a total DFA faster than in cubic time.

    \item Can one decide if a total DFA has rank one in strongly subquadratic time (in the number of states)? This seems to be a major open problem in the area of synchronising automata.
    
    \item The bottleneck in the time complexity in \autoref{thm:linalg-time} is the very first step, computing $\Mer(q)$ via digraph search in the square digraph of $\Aa$. The number of edges of this digraph can be quadratic in the number of its vertices~\cite[Appendix A]{Kiefer2019Arxiv}, hence of order~$|Q|^4$. Can $\Mer(q)$ be computed faster than in time $\OO(|Q|^4)$? Very little seems to be known about general properties of square automata of DFAs or UFAs.

    \item Finally, a natural continuation of this work is to consider the minimum nonzero rank of zero-one matrix monoids. It is equal to the rank of each nonzero matrix in the 0-minimal ideal of the monoid. It is not known how to compute it in \NC even for total DFAs. The main motivation once again comes from the degree of variable-length codes, see \cite[Chapter 9]{Berstel2010} for more details. 
\end{itemize}

\bibliographystyle{alphaurl}
\bibliography{sample}

\end{document}